\def\beq{\begin{equation}}
\def\eeq{\end{equation}}
\newcommand{\bea}{\begin{eqnarray}}
\newcommand{\eea}{\end{eqnarray}}
\def\bsp#1\esp{\begin{split}#1\end{split}}
\newcommand{\rdh}{d}
\tikzset{dashdotted/.style={dash pattern=on .8pt off 1.5pt on 4pt off 1.5pt}}
\def \sha{{\,\amalg\hskip -3.6pt\amalg\,}}
\newtheorem{Theorem}    {Theorem}[section]
\newtheorem{Remark}      [Theorem]{Remark}
\newtheorem{Definition} [Theorem]{Definition}
\newtheorem{Proposition} [Theorem]{Proposition}
\newcommand{\dualpart}[5]{
\draw node[dot,label = below:$2$] (P1) at (#1-{2*#5},#2-{2*#5}) {};
\draw node[dot,label = below:$2$] (P2) at (#1,#2-{2.5*#5}) {};
\draw node[circle,draw] (L1) at (#1-{1.2*#5},#2-{1.2*#5}) {#3};
\draw node[circle,draw] (L2) at (#1,#2-{1.5*#5}) {#4};
\draw [-] (#1,#2) to (L1) {};
\draw [-] (#1,#2) to (L2) {};
\draw [-] (L1) to (P1) {};
\draw [-] (L2) to (P2) {};
}
\newcommand{\zerogroup}[4]{
\draw node[dot] (Z) at (#1 - #2:1.15cm) {};
\draw node (L1) at (#1 - #2:1.25cm) {$0$};
\midarrow{Z}{#3};
\draw node (L2) at (#1+{#2}/4:1.25cm) {$\cdot$};
\draw node (L2) at (#1-{#2}/4:1.25cm) {$\cdot$};
\draw node (L2) at (#1+{#2}/4:1.15cm) {$\cdot$};
\draw node (L2) at (#1-{#2}/4:1.15cm) {$\cdot$};
\draw node[dot] (Z) at (#1 + #2:1.15cm) {};
\draw node (L3) at (#1 + #2:1.25cm) {$0$};
\midarrow{Z}{#3};
\draw node(a1) at (#1 - #2-2:1.35cm) {};
\draw node (a2) at (#1 + #2+2:1.35cm) {};
\draw[(-)] (a1) to (a2) {};
\draw node (L4) at (#1:1.45cm) {#4};
}
\newcommand{\ot}{\otimes}
\newcommand{\midarrow}[2]{\draw[-,decoration = {markings,mark=at position 0.5 with \arrow[thick]{>}},postaction={decorate}] (#1) to (#2)}
\newcommand{\octexample}[5]{

\begin{tikzpicture}[inner sep=1.2pt,scale = #5, dot/.style={fill,circle,minimum size=1pt}]

%Labels
\draw node[label=above:$2$] (H1) at (0.7,-0.9){};
\draw node[label=left:$0$] (H2) at (0,-3){};
\draw node[label=below:$2$] (H3) at (0.7,-5.1){};
\draw node[label=below:$0$] (H4) at (3,-6){};
\draw node[label=below:$0$] (H5) at (5.3,-5.1){};
\draw node[label=right:$2$] (H6) at (6,-3){};
\draw node[label=above:$0$] (H7) at (5.3,-0.9){};
\draw node[label=above:$1$] (H8) at (3,0){};

\color{black!#1!white}
%Octagon Points
\path(1.8,0) node[circle,draw,fill,minimum size = 5pt] (Q1){};
\path(4.2,0) coordinate (Q8);
\path(6,-1.8) coordinate (Q7);
\path(6,-4.2) coordinate(Q6);
\path(4.2,-6) coordinate(Q5);
\path(1.8,-6) coordinate(Q4);
\path(0,-4.2) coordinate(Q3);
\path(0,-1.8) coordinate(Q2);
%Octagon Lines
\begin{scope}
\draw [double distance = 3pt] (Q8)--(Q1)node[pos=0.33](thirdq41){};
\end{scope}
\draw(Q1)-- (Q2)--(Q3)--(Q4)--(Q5)--(Q6)--(Q7)--(Q8);

\colorlet{darkgreen}{green!50!black}
\color{darkgreen!#3!white}
%Dual graph
\draw node[draw,circle,minimum size = 8pt] (C18) at (1.95,-0.45){};
\draw node[dot] (M18) at (1.95,-0.45){};
\draw node[dot] (M38) at (1.95,-2.55){};
\draw node[dot] (M23) at (0.45,-4.05){};
\draw node[dot] (M34) at (1.95,-5.55){};
\draw node[dot] (M56) at (5.55,-4.05){};
\draw node[dot] (M67) at (5.55,-1.95){};
\draw node[dot] (M68) at (4.5,-1.5){};

%Dual Graph Lines
\draw [-,dashdotted] (M18) to (M38) {};
\draw [-,dashdotted] (M38) to (M23) {};
\draw [-,dashdotted] (M38) to (M34) {};
\draw [-,dashdotted] (M38) to (M68) {};
\draw [-,dashdotted] (M68) to (M56) {};
\draw [-,dashdotted] (M68) to (M67) {};

\color{blue!#2!white}
%HAT vertices
\draw node[dot] (H1) at (0.9,-0.9){};
\draw node[dot] (H2) at (0,-3){};
\draw node[dot] (H3) at (0.9,-5.1){};
\draw node[dot] (H4) at (3,-6){};
\draw node[dot] (H5) at (5.1,-5.1){};
\draw node[dot] (H6) at (6,-3){};
\draw node[dot] (H7) at (5.1,-0.9){};
\draw node[dot] (H8) at (3,0){};

%HAT edges
\midarrow{H1}{H8};
\midarrow{H3}{H8};
\midarrow{H2}{H3};
\midarrow{H4}{H3};
\midarrow{H5}{H6};
\midarrow{H7}{H6};
\midarrow{H6}{H8};

\color{red!#4!white}
%Octagon arrows
\draw [->,densely dashed] (Q2) to (H8) {};
\draw [->,densely dashed] (Q2) to (H3) {};
\draw [->,densely dashed] (Q5) to (H3) {};
\draw [->,densely dashed] (Q5) to (H8) {};
\draw [->,densely dashed] (Q5) to (H6) {};
\draw [->,densely dashed] (Q8) to (H6) {};

\end{tikzpicture}
}
\newcommand{\octexampletwo}[1]{

\begin{tikzpicture}[inner sep=1.2pt,scale = #1, dot/.style={fill,circle,minimum size=1pt}]

%Labels
\draw node[label=above:$2$] (H1) at (0.7,-0.9){};
\draw node[label=left:$0$] (H2) at (0,-3){};
\draw node[label=below:$2$] (H3) at (0.7,-5.1){};
\draw node[label=below:$0$] (H4) at (3,-6){};
\draw node[label=below:$0$] (H5) at (5.3,-5.1){};
\draw node[label=right:$2$] (H6) at (6,-3){};
\draw node[label=above:$0$] (H7) at (5.3,-0.9){};
\draw node[label=above:$1$] (H8) at (3,0){};

\color{black}
%Octagon Points
\path(1.8,0) node[circle,draw,fill,minimum size = 5pt] (Q1){};
\path(4.2,0) coordinate (Q8);
\path(6,-1.8) coordinate (Q7);
\path(6,-4.2) coordinate(Q6);
\path(4.2,-6) coordinate(Q5);
\path(1.8,-6) coordinate(Q4);
\path(0,-4.2) coordinate(Q3);
\path(0,-1.8) coordinate(Q2);
%Octagon Lines
\begin{scope}
\draw [double distance = 3pt] (Q8)--(Q1)node[pos=0.33](thirdq41){};
\end{scope}
\draw(Q1)-- (Q2)--(Q3)--(Q4)--(Q5)--(Q6)--(Q7)--(Q8);

%HAT vertices
\draw node (H1) at (0.9,-0.9){};
\draw node (H2) at (0,-3){};
\draw node (H3) at (0.9,-5.1){};
\draw node (H4) at (3,-6){};
\draw node (H5) at (5.1,-5.1){};
\draw node (H6) at (6,-3){};
\draw node (H7) at (5.1,-0.9){};
\draw node (H8) at (3,0){};

\color{red}
%Octagon arrows
\draw [->,densely dashed] (Q2) to (H8) {};
\draw [->,densely dashed] (Q2) to (H3) {};
\draw [->,densely dashed] (Q5) to (H3) {};
\draw [->,densely dashed] (Q5) to (H8) {};
\draw [->,densely dashed] (Q5) to (H6) {};
\draw [->,densely dashed] (Q8) to (H6) {};

\end{tikzpicture}
}
\newcommand{\octexamplethree}[3]{

\begin{tikzpicture}[inner sep=1.2pt,scale = #1, dot/.style={fill,circle,minimum size=1pt}]

%Labels
\draw node[label=above:$2$] (H1) at (0.7,-0.9){};
\draw node[label=left:$0$] (H2) at (0,-3){};
\draw node[label=below:$2$] (H3) at (0.7,-5.1){};
\draw node[label=below:$0$] (H4) at (3,-6){};
\draw node[label=below:$0$] (H5) at (5.3,-5.1){};
\draw node[label=right:$2$] (H6) at (6,-3){};
\draw node[label=above:$0$] (H7) at (5.3,-0.9){};
\draw node[label=above:$1$] (H8) at (3,0){};

\color{black}
%Octagon Points
\path(1.8,0) node[circle,draw,fill,minimum size = 5pt] (Q1){};
\path(4.2,0) coordinate (Q8);
\path(6,-1.8) coordinate (Q7);
\path(6,-4.2) coordinate(Q6);
\path(4.2,-6) coordinate(Q5);
\path(1.8,-6) coordinate(Q4);
\path(0,-4.2) coordinate(Q3);
\path(0,-1.8) coordinate(Q2);
%Octagon Lines
\begin{scope}
\draw [double distance = 3pt] (Q8)--(Q1)node[pos=0.33](thirdq41){};
\end{scope}
\draw(Q1)-- (Q2)--(Q3)--(Q4)--(Q5)--(Q6)--(Q7)--(Q8);

%HAT vertices
\draw node (H1) at (0.9,-0.9){};
\draw node (H2) at (0,-3){};
\draw node (H3) at (0.9,-5.1){};
\draw node (H4) at (3,-6){};
\draw node (H5) at (5.1,-5.1){};
\draw node (H6) at (6,-3){};
\draw node (H7) at (5.1,-0.9){};
\draw node (H8) at (3,0){};

\color{red}
%Octagon arrows
\draw [->,densely dashed] (Q2) to (H8) {};
\draw [->,densely dashed] (Q2) to (H3) {};
\draw [->,densely dashed] (Q5) to (H3) {};
\draw [->,densely dashed] (Q5) to (H8) {};
\draw [->,densely dashed] (Q5) to (H6) {};
\draw [->,densely dashed] (Q8) to (H6) {};

\color{blue!#2!white}
%HAT vertices
\draw node[dot] (H1) at (0.9,-0.9){};
\draw node[dot] (H2) at (0,-3){};
\draw node[dot] (H3) at (0.9,-5.1){};
\draw node[dot] (H4) at (3,-6){};
\draw node[dot] (H5) at (5.1,-5.1){};
\draw node[dot] (H6) at (6,-3){};
\draw node[dot] (H7) at (5.1,-0.9){};
\draw node[dot] (H8) at (3,0){};
\color{black}

\color{blue!#3!white}
%HAT edges
\draw [-] (H1) to (H8) {};
\draw [-] (H3) to (H8) {};
\draw [-] (H2) to (H3) {};
\draw [-] (H4) to (H3) {};
\draw [-] (H5) to (H6) {};
\draw [-] (H7) to (H6) {};
\draw [-] (H6) to (H8) {};

\end{tikzpicture}
}
\newcommand{\octexamplefour}[5]{

\begin{tikzpicture}[inner sep=1.2pt,scale = #5, dot/.style={fill,circle,minimum size=1pt}]

%Labels
\draw node[label=above:$2$] (H1) at (0.7,-0.9){};
\draw node[label=left:$0$] (H2) at (0,-3){};
\draw node[label=below:$2$] (H3) at (0.7,-5.1){};
\draw node[label=below:$0$] (H4) at (3,-6){};
\draw node[label=below:$0$] (H5) at (5.3,-5.1){};
\draw node[label=right:$2$] (H6) at (6,-3){};
\draw node[label=above:$0$] (H7) at (5.3,-0.9){};
\draw node[label=above:$1$] (H8) at (3,0){};

\color{black!#1!white}
%Octagon Points
\path(1.8,0) node[circle,draw,fill,minimum size = 5pt] (Q1){};
\path(4.2,0) coordinate (Q8);
\path(6,-1.8) coordinate (Q7);
\path(6,-4.2) coordinate(Q6);
\path(4.2,-6) coordinate(Q5);
\path(1.8,-6) coordinate(Q4);
\path(0,-4.2) coordinate(Q3);
\path(0,-1.8) coordinate(Q2);
%Octagon Lines
\begin{scope}
\draw [double distance = 3pt] (Q8)--(Q1)node[pos=0.33](thirdq41){};
\end{scope}
\draw(Q1)-- (Q2)--(Q3)--(Q4)--(Q5)--(Q6)--(Q7)--(Q8);

\colorlet{darkgreen}{green!50!black}
\color{darkgreen!#3!white}
%Dual graph
\draw node[draw,circle,minimum size = 8pt] (C18) at (1.95,-0.45){};
\draw node[dot] (M18) at (1.95,-0.45){};
\draw node[dot] (M38) at (1.95,-2.55){};
\draw node[dot] (M23) at (0.45,-4.05){};
\draw node[dot] (M34) at (1.95,-5.55){};
\draw node[dot] (M56) at (5.55,-4.05){};
\draw node[dot] (M67) at (5.55,-1.95){};
\draw node[dot] (M68) at (4.5,-1.5){};

%Dual Graph Lines
\draw [-,dashdotted] (M18) to (M38) {};
\draw [-,dashdotted] (M38) to (M23) {};
\draw [-,dashdotted] (M38) to (M34) {};
\draw [-,dashdotted] (M38) to (M68) {};
\draw [-,dashdotted] (M68) to (M56) {};
\draw [-,dashdotted] (M68) to (M67) {};

\color{red!#4!white}
%Octagon arrows
\draw [->,densely dashed] (Q2) to (H8) {};
\draw [->,densely dashed] (Q2) to (0.95,-5.05) {};
\draw [->,densely dashed] (Q5) to (0.95,-5.05) {};
\draw [->,densely dashed] (Q5) to (H8) {};
\draw [->,densely dashed] (Q5) to (H6) {};
\draw [->,densely dashed] (Q8) to (H6) {};

\end{tikzpicture}
}
\newcommand{\eps}{\epsilon}
\newcommand{\ord}{\begin{cal}O\end{cal}}
\newcommand{\cI}{\begin{cal}I\end{cal}}
\newcommand{\cA}{{\cal A}}
\newcommand{\cB}{{\cal B}}
\newcommand{\cW}{{\cal W}}
\newcommand{\cL}{{\cal L}}
\def\Q{{\cal Q} }
\def\C{{\cal C} }
\def \Z{{\mathbb Z}}
\def\C{{\mathbb C} }
\def\Q{{\mathbb Q} }
\def \fourgon#1#2#3#4{{
\xy
\POS(10,4) \ar@{=} +(-10,0)_#4
\ar@{-} +(0,-10)^#3
\POS(10,-6) \ar@{-} +(-10,0)^#2
\POS(0,4) \ar@{-} +(0,-10)_#1
\POS(0,4) *+{\bullet}
\endxy
}}
\def \fourgondotted#1#2#3#4{{
\xy
\POS(10,4) \ar@{=} +(-10,0)_#4
\ar@{-} +(0,-10)^#3
\POS(10,-6) \ar@{-} +(-10,0)^#2
\POS(0,4) \ar@{.} +(0,-10)_#1
%%\POS(0,4) *+{\bullet}
\endxy
}}
\def \fourgona{{
\POS(10,5) \ar@{=} +(-10,0)_x
\ar@{-} +(0,-10)^a
\POS(10,-5) \ar@{-} +(-10,0)^b
\POS(0,5) \ar@{-} +(0,-10)_c
\POS(0,5) *+{\bullet}
}}
\def \threegon#1#2#3{{
\xy
\POS(10,4) \ar@{=} +(-10,0)_{#3} 
\ar@{-} +(-5,-8)^{#2}  
\POS(0,4) \ar@{-} +(5,-8)_{#1}
\POS(0,4) *+{\bullet}
\endxy
}}
\def \threegonarrowc#1#2#3{{
\xy  
\POS(10,4) \ar@{=} +(-10,0)_{#3}
\ar@{-} +(-5,-8)^{#2}
\POS(0,4) \ar@{-} +(5,-8)_{#1}
\POS(5,4) \ar@{<<-} +(0,-8)
\POS(0,4) *+{\bullet}
\endxy
}}
\def \threegonarrowb#1#2#3{{
\xy  
\POS(10,4) \ar@{=} +(-10,0)_{#3}
\ar@{-} +(-5,-8)^{#2}
\POS(0,4) \ar@{-} +(5,-8)_{#1}
\POS(7.5,0) \ar@{<<-} +(-7.5,4)
\POS(0,4) *+{\bullet}
\endxy
}}
\def \threegonarrowa#1#2#3{{
\xy  
\POS(10,4) \ar@{=} +(-10,0)_{#3}
\ar@{-} +(-5,-8)^{#2}
\POS(0,4) \ar@{-} +(5,-8)_{#1}
\POS(2.5,0) \ar@{<<-} +(7.5,4)
\POS(0,4) *+{\bullet}
\endxy
}}
\def \twogon#1#2{\hskip 5pt\xy
\POS(10,0) \ar@{=} +(-10,0)_{#2}
\POS(0,0)*{};
\POS(0,0) *+{\bullet}
\POS(10,0)*{};
**\crv{(5,-5)};
\POS(5,-5) *+{\scriptstyle #1}
\endxy \hskip 5pt}
\def \twogonright#1#2{\hskip 5pt\xy
\POS(10,0) \ar@{=} +(-10,0)_#2
\POS(0,0)*{};
\POS(10,0) *+{\bullet}
\POS(10,0)*{};
**\crv{(5,-5)};
\POS(5,-5) *+{\scriptstyle #1}
\endxy \hskip 5pt}
\def \fivegon#1#2#3#4#5{{
\xy
\POS(10,4) \ar@{=} +(-10,0)_{#5}
\ar@{-} +(3,-7)^{#4}
\POS(0,4) \ar@{-} +(-3,-7)_{#1}
\POS(0,4) *+{\bullet}
\POS(-3,-3) \ar@{-} +(8,-7)_{#2}
\POS(5,-10)\ar@{-} +(8,7)_{#3}
\endxy
}}
\def \fivegonarrowtCDa#1#2#3#4#5{{
\xy
\POS(10,4) \ar@{=} +(-10,0)_{#5}
\ar@{-} +(3,-7)^{#4}
\POS(0,4) \ar@{-} +(-3,-7)_{#1}
\POS(0,4) *+{\bullet}
\POS(-3,-3) \ar@{-} +(8,-7)_{#2}
\POS(5,-10)\ar@{-} +(8,7)_{#3}
\POS(-3,-3)\ar@{->>} +(5.5,7)
\POS(5,-10)\ar@{->>} +(0,14)
\POS(13,-3)\ar@{->>} +(-5.5,7)
\endxy
}}
\def \fivegonarrowtCDb#1#2#3#4#5{{
\xy
\POS(10,4) \ar@{=} +(-10,0)_{#5}
\ar@{-} +(3,-7)^{#4}
\POS(0,4) \ar@{-} +(-3,-7)_{#1}
\POS(0,4) *+{\bullet}
\POS(-3,-3) \ar@{-} +(8,-7)_{#2}
\POS(5,-10)\ar@{-} +(8,7)_{#3}
\POS(13,-3)\ar@{->>} +(-5.5,7)
\POS(5,-10)\ar@{->>} +(0,14)
\POS(0,4)\ar@{->>} +(1,-10.5)
\endxy
}}
\def \fivegonarrowtCDc#1#2#3#4#5{{
\xy
\POS(10,4) \ar@{=} +(-10,0)_{#5}
\ar@{-} +(3,-7)^{#4}
\POS(0,4) \ar@{-} +(-3,-7)_{#1}
\POS(0,4) *+{\bullet}
\POS(-3,-3) \ar@{-} +(8,-7)_{#2}
\POS(5,-10)\ar@{-} +(8,7)_{#3}
\POS(-3,-3)\ar@{->>} +(5.5,7)
\POS(5,-10)\ar@{->>} +(0,14)
\POS(10,4)\ar@{->>} +(-1,-10.5)
\endxy
}}
\def \fivegonarrowtCDd#1#2#3#4#5{{
\xy
\POS(10,4) \ar@{=} +(-10,0)_{#5}
\ar@{-} +(3,-7)^{#4}
\POS(0,4) \ar@{-} +(-3,-7)_{#1}
\POS(0,4) *+{\bullet}
\POS(-3,-3) \ar@{-} +(8,-7)_{#2}
\POS(5,-10)\ar@{-} +(8,7)_{#3}
\POS(0,4)\ar@{->>} +(1,-10.5)
\POS(5,-10)\ar@{->>} +(0,14)
\POS(10,4)\ar@{->>} +(-1,-10.5)
\endxy
}}
\def \fivegonarrowtCEa#1#2#3#4#5{{
\xy
\POS(10,4) \ar@{=} +(-10,0)_{#5}
\ar@{-} +(3,-7)^{#4}
\POS(0,4) \ar@{-} +(-3,-7)_{#1}
\POS(0,4) *+{\bullet}
\POS(-3,-3) \ar@{-} +(8,-7)_{#2}
\POS(5,-10)\ar@{-} +(8,7)_{#3}
\POS(0,4)\ar@{->>} +(10.75,-1.75)
\POS(-3,-3)\ar@{->>} +(14.5,3.5)
\POS(5,-10)\ar@{->>} +(7.25,8.75)
\endxy
}}
\def \fivegonarrowtCEb#1#2#3#4#5{{
\xy
\POS(10,4) \ar@{=} +(-10,0)_{#5}
\ar@{-} +(3,-7)^{#4}
\POS(0,4) \ar@{-} +(-3,-7)_{#1}
\POS(0,4) *+{\bullet}
\POS(-3,-3) \ar@{-} +(8,-7)_{#2}
\POS(5,-10)\ar@{-} +(8,7)_{#3}
\POS(10,4)\ar@{->>} +(-11.5,-3.5)
\POS(-3,-3)\ar@{->>} +(14.5,3.5)
\POS(5,-10)\ar@{->>} +(7.25,8.75)
\endxy
}}
\def \fivegonarrowtCEc#1#2#3#4#5{{
\xy
\POS(10,4) \ar@{=} +(-10,0)_{#5}
\ar@{-} +(3,-7)^{#4}
\POS(0,4) \ar@{-} +(-3,-7)_{#1}
\POS(0,4) *+{\bullet}
\POS(-3,-3) \ar@{-} +(8,-7)_{#2}
\POS(5,-10)\ar@{-} +(8,7)_{#3}
\POS(0,4)\ar@{->>} +(10.75,-1.75)
\POS(-3,-3)\ar@{->>} +(14.5,3.5)
\POS(13,-3)\ar@{->>} +(-12,-3.5)
\endxy
}}
\def \fivegonarrowtCEd#1#2#3#4#5{{
\xy
\POS(10,4) \ar@{=} +(-10,0)_{#5}
\ar@{-} +(3,-7)^{#4}
\POS(0,4) \ar@{-} +(-3,-7)_{#1}
\POS(0,4) *+{\bullet}
\POS(-3,-3) \ar@{-} +(8,-7)_{#2}
\POS(5,-10)\ar@{-} +(8,7)_{#3}
\POS(10,4)\ar@{->>} +(-11.5,-3.5)
\POS(-3,-3)\ar@{->>} +(14.5,3.5)
\POS(13,-3)\ar@{->>} +(-12,-3.5)
\endxy
}}
\def \fivegonarrowtCCa#1#2#3#4#5{{
\xy
\POS(10,4) \ar@{=} +(-10,0)_{#5}
\ar@{-} +(3,-7)^{#4}
\POS(0,4) \ar@{-} +(-3,-7)_{#1}
\POS(0,4) *+{\bullet}
\POS(-3,-3) \ar@{-} +(8,-7)_{#2}
\POS(5,-10)\ar@{-} +(8,7)_{#3}
\POS(10,4)\ar@{->>} (-0.75,2.75)
\POS(13,-3)\ar@{->>} +(-14.5,3.5)
\POS(5,-10)\ar@{->>} (-2.25,-1.25)
\endxy
}}
\def \fivegonarrowtCCb#1#2#3#4#5{{
\xy
\POS(10,4) \ar@{=} +(-10,0)_{#5}
\ar@{-} +(3,-7)^{#4}
\POS(0,4) \ar@{-} +(-3,-7)_{#1}
\POS(0,4) *+{\bullet}
\POS(-3,-3) \ar@{-} +(8,-7)_{#2}
\POS(5,-10)\ar@{-} +(8,7)_{#3}
\POS(0,4)\ar@{->>} (11.5,0.5)
\POS(13,-3)\ar@{->>} +(-14.5,3.5)
\POS(5,-10)\ar@{->>} (-2.25,-1.25)
\endxy
}}
\def \fivegonarrowtCCc#1#2#3#4#5{{
\xy
\POS(10,4) \ar@{=} +(-10,0)_{#5}
\ar@{-} +(3,-7)^{#4}
\POS(0,4) \ar@{-} +(-3,-7)_{#1}
\POS(0,4) *+{\bullet}
\POS(-3,-3) \ar@{-} +(8,-7)_{#2}
\POS(5,-10)\ar@{-} +(8,7)_{#3}
\POS(10,4)\ar@{->>} (-0.75,2.75)
\POS(13,-3)\ar@{->>} +(-14.5,3.5)
\POS(-3,-3)\ar@{->>} (9,-6.5)
\endxy
}}
\def \fivegonarrowtCCd#1#2#3#4#5{{
\xy
\POS(10,4) \ar@{=} +(-10,0)_{#5}
\ar@{-} +(3,-7)^{#4}
\POS(0,4) \ar@{-} +(-3,-7)_{#1}
\POS(0,4) *+{\bullet}
\POS(-3,-3) \ar@{-} +(8,-7)_{#2}
\POS(5,-10)\ar@{-} +(8,7)_{#3}
\POS(0,4)\ar@{->>} (11.5,0.5)
\POS(13,-3)\ar@{->>} +(-14.5,3.5)
\POS(-3,-3)\ar@{->>} (9,-6.5)
\endxy
}}
\def \fivegonarrowtCBa#1#2#3#4#5{{
\xy
\POS(10,4) \ar@{=} +(-10,0)_{#5}
\ar@{-} +(3,-7)^{#4}
\POS(0,4) \ar@{-} +(-3,-7)_{#1}
\POS(0,4) *+{\bullet}
\POS(-3,-3) \ar@{-} +(8,-7)_{#2}
\POS(5,-10)\ar@{-} +(8,7)_{#3}
\POS(0,4)\ar@{->>} (-1,-4.75)
\POS(10,4)\ar@{->>} (1,-6.5)
\POS(13,-3)\ar@{->>} (3,-8.25)
\endxy
}}
\def \fivegonarrowtCBb#1#2#3#4#5{{
\xy
\POS(10,4) \ar@{=} +(-10,0)_{#5}
\ar@{-} +(3,-7)^{#4}
\POS(0,4) \ar@{-} +(-3,-7)_{#1}
\POS(0,4) *+{\bullet}
\POS(-3,-3) \ar@{-} +(8,-7)_{#2}
\POS(5,-10)\ar@{-} +(8,7)_{#3}
\POS(-3,-3)\ar@{->>} (5,4)
\POS(10,4)\ar@{->>} (1,-6.5)
\POS(13,-3)\ar@{->>} (3,-8.25)
\endxy
}}
\def \fivegonarrowtCBc#1#2#3#4#5{{
\xy
\POS(10,4) \ar@{=} +(-10,0)_{#5}
\ar@{-} +(3,-7)^{#4}
\POS(0,4) \ar@{-} +(-3,-7)_{#1}
\POS(0,4) *+{\bullet}
\POS(-3,-3) \ar@{-} +(8,-7)_{#2}
\POS(5,-10)\ar@{-} +(8,7)_{#3}
\POS(0,4)\ar@{->>} (-1,-4.75)
\POS(10,4)\ar@{->>} (1,-6.5)
\POS(5,-10)\ar@{->>} (11.5,0.5)
\endxy
}}
\def \fivegonarrowtCBd#1#2#3#4#5{{
\xy
\POS(10,4) \ar@{=} +(-10,0)_{#5}
\ar@{-} +(3,-7)^{#4}
\POS(0,4) \ar@{-} +(-3,-7)_{#1}
\POS(0,4) *+{\bullet}
\POS(-3,-3) \ar@{-} +(8,-7)_{#2}
\POS(5,-10)\ar@{-} +(8,7)_{#3}
\POS(-3,-3)\ar@{->>} (5,4)
\POS(10,4)\ar@{->>} (1,-6.5)
\POS(5,-10)\ar@{->>} (11.5,0.5)
\endxy
}}
\def \fivegonarrowtCAa#1#2#3#4#5{{
\xy
\POS(10,4) \ar@{=} +(-10,0)_{#5}
\ar@{-} +(3,-7)^{#4}
\POS(0,4) \ar@{-} +(-3,-7)_{#1}
\POS(0,4) *+{\bullet}
\POS(-3,-3) \ar@{-} +(8,-7)_{#2}
\POS(5,-10)\ar@{-} +(8,7)_{#3}
\POS(10,4)\ar@{->>} (11,-4.75)
\POS(0,4)\ar@{->>} (9,-6.5)
\POS(-3,-3)\ar@{->>} (7,-8.25)
\endxy
}}
\def \fivegonarrowtCAb#1#2#3#4#5{{
\xy
\POS(10,4) \ar@{=} +(-10,0)_{#5}
\ar@{-} +(3,-7)^{#4}
\POS(0,4) \ar@{-} +(-3,-7)_{#1}
\POS(0,4) *+{\bullet}
\POS(-3,-3) \ar@{-} +(8,-7)_{#2}
\POS(5,-10)\ar@{-} +(8,7)_{#3}
\POS(13,-3)\ar@{->>} (5,4)
\POS(0,4)\ar@{->>} (9,-6.5)
\POS(-3,-3)\ar@{->>} (7,-8.25)
\endxy
}}
\def \fivegonarrowtCAd#1#2#3#4#5{{
\xy
\POS(10,4) \ar@{=} +(-10,0)_{#5}
\ar@{-} +(3,-7)^{#4}
\POS(0,4) \ar@{-} +(-3,-7)_{#1}
\POS(0,4) *+{\bullet}
\POS(-3,-3) \ar@{-} +(8,-7)_{#2}
\POS(5,-10)\ar@{-} +(8,7)_{#3}
\POS(13,-3)\ar@{->>} (5,4)
\POS(0,4)\ar@{->>} (9,-6.5)
\POS(5,-10)\ar@{->>} (-1.5,0.5)
\endxy
}}
\def \fivegonarrowtCAc#1#2#3#4#5{{
\xy
\POS(10,4) \ar@{=} +(-10,0)_{#5}
\ar@{-} +(3,-7)^{#4}
\POS(0,4) \ar@{-} +(-3,-7)_{#1}
\POS(0,4) *+{\bullet}
\POS(-3,-3) \ar@{-} +(8,-7)_{#2}
\POS(5,-10)\ar@{-} +(8,7)_{#3}
\POS(10,4)\ar@{->>} (11,-4.75)
\POS(0,4)\ar@{->>} (9,-6.5)
\POS(5,-10)\ar@{->>} (-1.5,0.5)
\endxy
}}
\def \fivegonarrowtBEa#1#2#3#4#5{{
\xy
\POS(10,4) \ar@{=} +(-10,0)_{#5}
\ar@{-} +(3,-7)^{#4}
\POS(0,4) \ar@{-} +(-3,-7)_{#1}
\POS(0,4) *+{\bullet}
\POS(-3,-3) \ar@{-} +(8,-7)_{#2}
\POS(5,-10)\ar@{-} +(8,7)_{#3}

\POS(-3,-3)\ar@{->>} (5,4)
\POS(-3,-3)\ar@{->>} (10.75,2.75)
\POS(5,-10)\ar@{->>} (12.25,-1.25)

\endxy
}}
\def \fivegonarrowtBEb#1#2#3#4#5{{
\xy
\POS(10,4) \ar@{=} +(-10,0)_{#5}
\ar@{-} +(3,-7)^{#4}
\POS(0,4) \ar@{-} +(-3,-7)_{#1}
\POS(0,4) *+{\bullet}
\POS(-3,-3) \ar@{-} +(8,-7)_{#2}
\POS(5,-10)\ar@{-} +(8,7)_{#3}

\POS(-3,-3)\ar@{->>} (5,4)
\POS(-3,-3)\ar@{->>} (10.75,2.75)
\POS(13,-3)\ar@{->>} (1,-6.5)

\endxy
}}
\def \fivegonarrowtBEc#1#2#3#4#5{{
\xy
\POS(10,4) \ar@{=} +(-10,0)_{#5}
\ar@{-} +(3,-7)^{#4}
\POS(0,4) \ar@{-} +(-3,-7)_{#1}
\POS(0,4) *+{\bullet}
\POS(-3,-3) \ar@{-} +(8,-7)_{#2}
\POS(5,-10)\ar@{-} +(8,7)_{#3}

\POS(-3,-3)\ar@{->>} (9,-6.5)
\POS(-3,-3)\ar@{->>} (12.25,-1.5)
\POS(0,4)\ar@{->>} (10.75,2.25)

\endxy
}}
\def \fivegonarrowtBEd#1#2#3#4#5{{
\xy
\POS(10,4) \ar@{=} +(-10,0)_{#5}
\ar@{-} +(3,-7)^{#4}
\POS(0,4) \ar@{-} +(-3,-7)_{#1}
\POS(0,4) *+{\bullet}
\POS(-3,-3) \ar@{-} +(8,-7)_{#2}
\POS(5,-10)\ar@{-} +(8,7)_{#3}

\POS(-3,-3)\ar@{->>} (9,-6.5)
\POS(-3,-3)\ar@{->>} (11.5,0.5)
\POS(10,4)\ar@{->>} (-1.5,0.5)

\endxy
}}
\def \fivegonarrowtBEe#1#2#3#4#5{{
\xy
\POS(10,4) \ar@{=} +(-10,0)_{#5}
\ar@{-} +(3,-7)^{#4}
\POS(0,4) \ar@{-} +(-3,-7)_{#1}
\POS(0,4) *+{\bullet}
\POS(-3,-3) \ar@{-} +(8,-7)_{#2}
\POS(5,-10)\ar@{-} +(8,7)_{#3}

\POS(-3,-3)\ar@{->>} (7,-8.25)
\POS(-3,-3)\ar@{->>} (5,4)
\POS(10,4)\ar@{->>} (11,-4.75)

\endxy
}}
\def \fivegonarrowtBEf#1#2#3#4#5{{
\xy
\POS(10,4) \ar@{=} +(-10,0)_{#5}
\ar@{-} +(3,-7)^{#4}
\POS(0,4) \ar@{-} +(-3,-7)_{#1}
\POS(0,4) *+{\bullet}
\POS(-3,-3) \ar@{-} +(8,-7)_{#2}
\POS(5,-10)\ar@{-} +(8,7)_{#3}

\POS(-3,-3)\ar@{->>} (9,-6.5)
\POS(-3,-3)\ar@{->>} (2.5,4)
\POS(13,-3)\ar@{->>} (7.5,4)

\endxy
}}
\def \fivegonarrowtBDa#1#2#3#4#5{{
\xy
\POS(10,4) \ar@{=} +(-10,0)_{#5}
\ar@{-} +(3,-7)^{#4}
\POS(0,4) \ar@{-} +(-3,-7)_{#1}
\POS(0,4) *+{\bullet}
\POS(-3,-3) \ar@{-} +(8,-7)_{#2}
\POS(5,-10)\ar@{-} +(8,7)_{#3}

\POS(10,4)\ar@{->>} (-1.5,0.5)
\POS(10,4)\ar@{->>} (-1,-4.75)
\POS(13,-3)\ar@{->>} (3,-8.75)

\endxy
}}
\def \fivegonarrowtBDb#1#2#3#4#5{{
\xy
\POS(10,4) \ar@{=} +(-10,0)_{#5}
\ar@{-} +(3,-7)^{#4}
\POS(0,4) \ar@{-} +(-3,-7)_{#1}
\POS(0,4) *+{\bullet}
\POS(-3,-3) \ar@{-} +(8,-7)_{#2}
\POS(5,-10)\ar@{-} +(8,7)_{#3}

\POS(10,4)\ar@{->>} (-1.5,0.5)
\POS(10,4)\ar@{->>} (-1,-4.75)
\POS(5,-10)\ar@{->>} (11.5,0.5)

\endxy
}}
\def \fivegonarrowtBDc#1#2#3#4#5{{
\xy
\POS(10,4) \ar@{=} +(-10,0)_{#5}
\ar@{-} +(3,-7)^{#4}
\POS(0,4) \ar@{-} +(-3,-7)_{#1}
\POS(0,4) *+{\bullet}
\POS(-3,-3) \ar@{-} +(8,-7)_{#2}
\POS(5,-10)\ar@{-} +(8,7)_{#3}

\POS(10,4)\ar@{->>} (-1.5,0.5)
\POS(10,4)\ar@{->>} (11,-4.75)
\POS(-3,-3)\ar@{->>} (7,-8.75)

\endxy
}}
\def \fivegonarrowtBDd#1#2#3#4#5{{
\xy
\POS(10,4) \ar@{=} +(-10,0)_{#5}
\ar@{-} +(3,-7)^{#4}
\POS(0,4) \ar@{-} +(-3,-7)_{#1}
\POS(0,4) *+{\bullet}
\POS(-3,-3) \ar@{-} +(8,-7)_{#2}
\POS(5,-10)\ar@{-} +(8,7)_{#3}

\POS(10,4)\ar@{->>} (-0.75,2.25)
\POS(10,4)\ar@{->>} (11,-4.75)
\POS(5,-10)\ar@{->>} (-2.25,-1.25)

\endxy
}}
\def \fivegonarrowtBDe#1#2#3#4#5{{
\xy
\POS(10,4) \ar@{=} +(-10,0)_{#5}
\ar@{-} +(3,-7)^{#4}
\POS(0,4) \ar@{-} +(-3,-7)_{#1}
\POS(0,4) *+{\bullet}
\POS(-3,-3) \ar@{-} +(8,-7)_{#2}
\POS(5,-10)\ar@{-} +(8,7)_{#3}

\POS(10,4)\ar@{->>} (9,-6.5)
\POS(10,4)\ar@{->>} (1,-6.5)
\POS(-3,-3)\ar@{->>} (5,4)

\endxy
}}
\def \fivegonarrowtBDf#1#2#3#4#5{{
\xy
\POS(10,4) \ar@{=} +(-10,0)_{#5}
\ar@{-} +(3,-7)^{#4}
\POS(0,4) \ar@{-} +(-3,-7)_{#1}
\POS(0,4) *+{\bullet}
\POS(-3,-3) \ar@{-} +(8,-7)_{#2}
\POS(5,-10)\ar@{-} +(8,7)_{#3}

\POS(10,4)\ar@{->>} (9,-6.5)
\POS(10,4)\ar@{->>} (3,-8.25)
\POS(0,4)\ar@{->>} (-1,-4.75)

\endxy
}}
\def \fivegonarrowtBCa#1#2#3#4#5{{
\xy
\POS(10,4) \ar@{=} +(-10,0)_{#5}
\ar@{-} +(3,-7)^{#4}
\POS(0,4) \ar@{-} +(-3,-7)_{#1}
\POS(0,4) *+{\bullet}
\POS(-3,-3) \ar@{-} +(8,-7)_{#2}
\POS(5,-10)\ar@{-} +(8,7)_{#3}

\POS(5,-10)\ar@{->>} (11.5,0.5)
\POS(5,-10)\ar@{->>} (7.5,4)
\POS(-3,-3)\ar@{->>} (2.5,4)

\endxy
}}
\def \fivegonarrowtBCb#1#2#3#4#5{{
\xy
\POS(10,4) \ar@{=} +(-10,0)_{#5}
\ar@{-} +(3,-7)^{#4}
\POS(0,4) \ar@{-} +(-3,-7)_{#1}
\POS(0,4) *+{\bullet}
\POS(-3,-3) \ar@{-} +(8,-7)_{#2}
\POS(5,-10)\ar@{-} +(8,7)_{#3}

\POS(5,-10)\ar@{->>} (11.5,0.5)
\POS(5,-10)\ar@{->>} (7.5,4)
\POS(0,4)\ar@{->>} (1,-6.5)

\endxy
}}
\def \fivegonarrowtBCc#1#2#3#4#5{{
\xy
\POS(10,4) \ar@{=} +(-10,0)_{#5}
\ar@{-} +(3,-7)^{#4}
\POS(0,4) \ar@{-} +(-3,-7)_{#1}
\POS(0,4) *+{\bullet}
\POS(-3,-3) \ar@{-} +(8,-7)_{#2}
\POS(5,-10)\ar@{-} +(8,7)_{#3}

\POS(5,-10)\ar@{->>} (5,4)
\POS(5,-10)\ar@{->>} (-1.5,0.5)
\POS(10,4)\ar@{->>} (9,-6.5)

\endxy
}}
\def \fivegonarrowtBCd#1#2#3#4#5{{
\xy
\POS(10,4) \ar@{=} +(-10,0)_{#5}
\ar@{-} +(3,-7)^{#4}
\POS(0,4) \ar@{-} +(-3,-7)_{#1}
\POS(0,4) *+{\bullet}
\POS(-3,-3) \ar@{-} +(8,-7)_{#2}
\POS(5,-10)\ar@{-} +(8,7)_{#3}

\POS(5,-10)\ar@{->>} (2.5,4)
\POS(5,-10)\ar@{->>} (-1.5,0.5)
\POS(13,-3)\ar@{->>} (7.5,4)

\endxy
}}
\def \fivegonarrowtBCe#1#2#3#4#5{{
\xy
\POS(10,4) \ar@{=} +(-10,0)_{#5}
\ar@{-} +(3,-7)^{#4}
\POS(0,4) \ar@{-} +(-3,-7)_{#1}
\POS(0,4) *+{\bullet}
\POS(-3,-3) \ar@{-} +(8,-7)_{#2}
\POS(5,-10)\ar@{-} +(8,7)_{#3}

\POS(5,-10)\ar@{->>} (12.25,-1.25)
\POS(5,-10)\ar@{->>} (-1.5,0.5)
\POS(0,4)\ar@{->>} (10.75,2.25)

\endxy
}}
\def \fivegonarrowtBCf#1#2#3#4#5{{
\xy
\POS(10,4) \ar@{=} +(-10,0)_{#5}
\ar@{-} +(3,-7)^{#4}
\POS(0,4) \ar@{-} +(-3,-7)_{#1}
\POS(0,4) *+{\bullet}
\POS(-3,-3) \ar@{-} +(8,-7)_{#2}
\POS(5,-10)\ar@{-} +(8,7)_{#3}

\POS(5,-10)\ar@{->>} (11.5,0.5)
\POS(5,-10)\ar@{->>} (-2.25,-1.25)
\POS(10,4)\ar@{->>} (-0.75,2.25)

\endxy
}}
\def \fivegonarrowtBBa#1#2#3#4#5{{
\xy
\POS(10,4) \ar@{=} +(-10,0)_{#5}
\ar@{-} +(3,-7)^{#4}
\POS(0,4) \ar@{-} +(-3,-7)_{#1}
\POS(0,4) *+{\bullet}
\POS(-3,-3) \ar@{-} +(8,-7)_{#2}
\POS(5,-10)\ar@{-} +(8,7)_{#3}

\POS(13,-3)\ar@{->>} (5,4)
\POS(13,-3)\ar@{->>} (-1.5,0.5)
\POS(-3,-3)\ar@{->>} (9,-6.5)

\endxy
}}
\def \fivegonarrowtBBb#1#2#3#4#5{{
\xy
\POS(10,4) \ar@{=} +(-10,0)_{#5}
\ar@{-} +(3,-7)^{#4}
\POS(0,4) \ar@{-} +(-3,-7)_{#1}
\POS(0,4) *+{\bullet}
\POS(-3,-3) \ar@{-} +(8,-7)_{#2}
\POS(5,-10)\ar@{-} +(8,7)_{#3}

\POS(13,-3)\ar@{->>} (5,4)
\POS(13,-3)\ar@{->>} (-0.75,2.25)
\POS(5,-10)\ar@{->>} (-2.25,-1.25)

\endxy
}}
\def \fivegonarrowtBBc#1#2#3#4#5{{
\xy
\POS(10,4) \ar@{=} +(-10,0)_{#5}
\ar@{-} +(3,-7)^{#4}
\POS(0,4) \ar@{-} +(-3,-7)_{#1}
\POS(0,4) *+{\bullet}
\POS(-3,-3) \ar@{-} +(8,-7)_{#2}
\POS(5,-10)\ar@{-} +(8,7)_{#3}

\POS(13,-3)\ar@{->>} (1,-6.5)
\POS(13,-3)\ar@{->>} (-1.5,0.5)
\POS(0,4)\ar@{->>} (11.5,0.5)

\endxy
}}
\def \fivegonarrowtBBd#1#2#3#4#5{{
\xy
\POS(10,4) \ar@{=} +(-10,0)_{#5}
\ar@{-} +(3,-7)^{#4}
\POS(0,4) \ar@{-} +(-3,-7)_{#1}
\POS(0,4) *+{\bullet}
\POS(-3,-3) \ar@{-} +(8,-7)_{#2}
\POS(5,-10)\ar@{-} +(8,7)_{#3}

\POS(13,-3)\ar@{->>} (1,-6.5)
\POS(13,-3)\ar@{->>} (-2.25,-1.25)
\POS(10,4)\ar@{->>} (-0.75,2.25)

\endxy
}}
\def \fivegonarrowtBBe#1#2#3#4#5{{
\xy
\POS(10,4) \ar@{=} +(-10,0)_{#5}
\ar@{-} +(3,-7)^{#4}
\POS(0,4) \ar@{-} +(-3,-7)_{#1}
\POS(0,4) *+{\bullet}
\POS(-3,-3) \ar@{-} +(8,-7)_{#2}
\POS(5,-10)\ar@{-} +(8,7)_{#3}

\POS(13,-3)\ar@{->>} (7.5,4)
\POS(13,-3)\ar@{->>} (1,-6.5)
\POS(-3,-3)\ar@{->>} (2.5,4)

\endxy
}}
\def \fivegonarrowtBBf#1#2#3#4#5{{
\xy
\POS(10,4) \ar@{=} +(-10,0)_{#5}
\ar@{-} +(3,-7)^{#4}
\POS(0,4) \ar@{-} +(-3,-7)_{#1}
\POS(0,4) *+{\bullet}
\POS(-3,-3) \ar@{-} +(8,-7)_{#2}
\POS(5,-10)\ar@{-} +(8,7)_{#3}

\POS(13,-3)\ar@{->>} (5,4)
\POS(13,-3)\ar@{->>} (3,-8.25)
\POS(0,4)\ar@{->>} (-1,-4.75)

\endxy
}}
\def \fivegonarrowtBAa#1#2#3#4#5{{
\xy
\POS(10,4) \ar@{=} +(-10,0)_{#5}
\ar@{-} +(3,-7)^{#4}
\POS(0,4) \ar@{-} +(-3,-7)_{#1}
\POS(0,4) *+{\bullet}
\POS(-3,-3) \ar@{-} +(8,-7)_{#2}
\POS(5,-10)\ar@{-} +(8,7)_{#3}

\POS(0,4)\ar@{->>} (1,-6.5)
\POS(0,4)\ar@{->>} (7,-8.25)
\POS(10,4)\ar@{->>} (11,-4.75)

\endxy
}}
\def \fivegonarrowtBAb#1#2#3#4#5{{
\xy
\POS(10,4) \ar@{=} +(-10,0)_{#5}
\ar@{-} +(3,-7)^{#4}
\POS(0,4) \ar@{-} +(-3,-7)_{#1}
\POS(0,4) *+{\bullet}
\POS(-3,-3) \ar@{-} +(8,-7)_{#2}
\POS(5,-10)\ar@{-} +(8,7)_{#3}

\POS(0,4)\ar@{->>} (1,-6.5)
\POS(0,4)\ar@{->>} (9,-6.5)
\POS(13,-3)\ar@{->>} (5,4)

\endxy
}}
\def \fivegonarrowtBAc#1#2#3#4#5{{
\xy
\POS(10,4) \ar@{=} +(-10,0)_{#5}
\ar@{-} +(3,-7)^{#4}
\POS(0,4) \ar@{-} +(-3,-7)_{#1}
\POS(0,4) *+{\bullet}
\POS(-3,-3) \ar@{-} +(8,-7)_{#2}
\POS(5,-10)\ar@{-} +(8,7)_{#3}

\POS(0,4)\ar@{->>} (11.5,0.5)
\POS(0,4)\ar@{->>} (11,-4.75)
\POS(-3,-3)\ar@{->>} (7,-8.25)

\endxy
}}
\def \fivegonarrowtBAd#1#2#3#4#5{{
\xy
\POS(10,4) \ar@{=} +(-10,0)_{#5}
\ar@{-} +(3,-7)^{#4}
\POS(0,4) \ar@{-} +(-3,-7)_{#1}
\POS(0,4) *+{\bullet}
\POS(-3,-3) \ar@{-} +(8,-7)_{#2}
\POS(5,-10)\ar@{-} +(8,7)_{#3}

\POS(0,4)\ar@{->>} (11.5,0.5)
\POS(0,4)\ar@{->>} (9,-6.5)
\POS(5,-10)\ar@{->>} (-1.5,0.5)

\endxy
}}
\def \fivegonarrowtBAe#1#2#3#4#5{{
\xy
\POS(10,4) \ar@{=} +(-10,0)_{#5}
\ar@{-} +(3,-7)^{#4}
\POS(0,4) \ar@{-} +(-3,-7)_{#1}
\POS(0,4) *+{\bullet}
\POS(-3,-3) \ar@{-} +(8,-7)_{#2}
\POS(5,-10)\ar@{-} +(8,7)_{#3}

\POS(0,4)\ar@{->>} (10.75,2.25)
\POS(0,4)\ar@{->>} (1,-6.5)
\POS(5,-10)\ar@{->>} (12.25,-1.25)

\endxy
}}
\def \fivegonarrowtBAf#1#2#3#4#5{{
\xy
\POS(10,4) \ar@{=} +(-10,0)_{#5}
\ar@{-} +(3,-7)^{#4}
\POS(0,4) \ar@{-} +(-3,-7)_{#1}
\POS(0,4) *+{\bullet}
\POS(-3,-3) \ar@{-} +(8,-7)_{#2}
\POS(5,-10)\ar@{-} +(8,7)_{#3}

\POS(0,4)\ar@{->>} (11.5,0.5)
\POS(0,4)\ar@{->>} (-1,-4.75)
\POS(13,-3)\ar@{->>} (3,-8.25)

\endxy
}}
\def \fivegonarrowtAa#1#2#3#4#5{{
\xy
\POS(10,4) \ar@{=} +(-10,0)_{#5}
\ar@{-} +(3,-7)^{#4}
\POS(0,4) \ar@{-} +(-3,-7)_{#1}
\POS(0,4) *+{\bullet}
\POS(-3,-3) \ar@{-} +(8,-7)_{#2}
\POS(5,-10)\ar@{-} +(8,7)_{#3}

\POS(0,4)\ar@{->>} (11.5,0.5)
\POS(0,4)\ar@{->>} (9,-6.5)
\POS(0,4)\ar@{->>} (1,-6.5)

\endxy
}}
\def \fivegonarrowtAb#1#2#3#4#5{{
\xy
\POS(10,4) \ar@{=} +(-10,0)_{#5}
\ar@{-} +(3,-7)^{#4}
\POS(0,4) \ar@{-} +(-3,-7)_{#1}
\POS(0,4) *+{\bullet}
\POS(-3,-3) \ar@{-} +(8,-7)_{#2}
\POS(5,-10)\ar@{-} +(8,7)_{#3}

\POS(10,4)\ar@{->>} (-1.5,0.5)
\POS(10,4)\ar@{->>} (9,-6.5)
\POS(10,4)\ar@{->>} (1,-6.5)

\endxy
}}
\def \fivegonarrowtAc#1#2#3#4#5{{
\xy
\POS(10,4) \ar@{=} +(-10,0)_{#5}
\ar@{-} +(3,-7)^{#4}
\POS(0,4) \ar@{-} +(-3,-7)_{#1}
\POS(0,4) *+{\bullet}
\POS(-3,-3) \ar@{-} +(8,-7)_{#2}
\POS(5,-10)\ar@{-} +(8,7)_{#3}

\POS(13,-3)\ar@{->>} (-1.5,0.5)
\POS(13,-3)\ar@{->>} (1,-6.5)
\POS(13,-3)\ar@{->>} (5,4)

\endxy
}}
\def \fivegonarrowtAd#1#2#3#4#5{{
\xy
\POS(10,4) \ar@{=} +(-10,0)_{#5}
\ar@{-} +(3,-7)^{#4}
\POS(0,4) \ar@{-} +(-3,-7)_{#1}
\POS(0,4) *+{\bullet}
\POS(-3,-3) \ar@{-} +(8,-7)_{#2}
\POS(5,-10)\ar@{-} +(8,7)_{#3}

\POS(5,-10)\ar@{->>} (-1.5,0.5)
\POS(5,-10)\ar@{->>} (11.5,0.5)
\POS(5,-10)\ar@{->>} (5,4)

\endxy
}}
\def \fivegonarrowtAe#1#2#3#4#5{{
\xy
\POS(10,4) \ar@{=} +(-10,0)_{#5}
\ar@{-} +(3,-7)^{#4}
\POS(0,4) \ar@{-} +(-3,-7)_{#1}
\POS(0,4) *+{\bullet}
\POS(-3,-3) \ar@{-} +(8,-7)_{#2}
\POS(5,-10)\ar@{-} +(8,7)_{#3}

\POS(-3,-3)\ar@{->>} (9,-6.5)
\POS(-3,-3)\ar@{->>} (11.5,0.5)
\POS(-3,-3)\ar@{->>} (5,4)

\endxy
}}
\newcommand{\rd}{d}
\newcommand{\cS}{\begin{cal}S\end{cal}}
\newcommand{\cR}{\begin{cal}R\end{cal}}
\newcommand{\cT}{\begin{cal}T\end{cal}}
\renewcommand{\ln}{\log}
\def\bit#1\eit{\begin{itemize}#1\end{itemize}}
\def\ben#1\een{\begin{enumerate}#1\end{enumerate}}
\newenvironment{sloppyequation}[0]{\sloppy\begin{flushleft}\hspace*{0.75cm}\(}{\)\end{flushleft}\fussy}
\newenvironment{sloppytext}[0]{\sloppy\begin{flushleft}}{\end{flushleft}\fussy}
\newcommand{\beqsloppy}{\begin{sloppyequation}}
\newcommand{\eeqsloppy}{\end{sloppyequation}}
\newcommand{\btxtsloppy}{\begin{sloppytext}}
\newcommand{\etxtsloppy}{\end{sloppytext}}
\newtheorem{defi}{Definition}
\newtheorem{proposition}{Proposition}
\title{% Symbol calculus for polylogarithms and Feynman integrals}
From polygons and symbols to polylogarithmic functions}
\author{Claude~Duhr\\
Institute for Particle Physics Phenomenology,
University of Durham\\ Durham, DH1 3LE, U.K.\\
and\\
Institut f\"ur theoretische Physik, ETH Z\"urich,\\ Wolfgang-Paulistr. 27, CH-8093, Switzerland\\
E-mail:~\email{duhrc@itp.phys.ethz.ch}}
\author{Herbert~Gangl\\
Department of Mathematical Sciences,
University of Durham\\ Durham, DH1 3LE, U.K.\\
E-mail:~\email{herbert.gangl@durham.ac.uk}}
\author{John~R.~Rhodes\\
Department of Mathematical Sciences,
University of Durham\\ Durham, DH1 3LE, U.K.\\
E-mail:~\email{j.r.rhodes@durham.ac.uk}}
\abstract{We present a review of the symbol map, a mathematical tool that can be useful in simplifying expressions among multiple polylogarithms, and recall its main properties. 
A recipe is given for how to obtain the symbol of  a multiple polylogarithm in terms of the combinatorial properties of an associated  rooted decorated polygon. We also outline a systematic approach to constructing a function corresponding to 
a given symbol, and illustrate it in the particular case of harmonic polylogarithms up to weight~four. Furthermore, part of the ambiguity of this process is highlighted by exhibiting a family of non-trivial elements in the kernel of the symbol map for arbitrary weight.}
\keywords{Multiple polylogarithms, Feynman integrals, loop computations, symbol map, iterated integrals, decorated polygons}
\preprint{IPPP/11/56\\
DCPT/11/112}
\begin{document}

%%%%%%%%%%%%%%%%%%%%%%%%%%%%%%%%%%%%%%%%%%%%%%%%%%%%%%%%%%%%%%%%%%%%%%%%
% Shamelessly stolen from Thorsten's thohacks.sty
%%%%%%%%%%%%%%%%%%%%%%%%%%%%%%%%%%%%%%%%%%%%%%%%%%%%%%%%%%%%%%%%%%%%%%%%
\catcode`\@=11
\font\manfnt=manfnt
\def\Watchout{\@ifnextchar [{\W@tchout}{\W@tchout[1]}}
\def\W@tchout[#1]{{\manfnt\@tempcnta#1\relax%
  \@whilenum\@tempcnta>\z@\do{%
    \char"7F\hskip 0.3em\advance\@tempcnta\m@ne}}}
\let\foo\W@tchout
\def\dubious{\@ifnextchar[{\@dubious}{\@dubious[1]}}
\let\enddubious\endlist
\def\@dubious[#1]{%
  \setbox\@tempboxa\hbox{\@W@tchout#1}
  \@tempdima\wd\@tempboxa
  \list{}{\leftmargin\@tempdima}\item[\hbox to 0pt{\hss\@W@tchout#1}]}
\def\@W@tchout#1{\W@tchout[#1]}
\catcode`\@=12
%%%%%%%%%%%%%%%%%%%%%%%%%%%%%%%%%%%%%%%%%%%%%%%%%%%%%%%%%%%%%%%%%%%%%%%%

%%%%%%%%%%%%%%%%%%%%%%%%%%%%%%%%%%%%%%%%%%%%%%%%%%%%

\def \sha{{\,\amalg\hskip -3.6pt\amalg\,}}
\def \uplus{\sha}
\def \mmu#1{{\mu\Big(\hskip -8pt #1\hskip -5pt\Big)}}

%\tableofcontents

\section{Introduction}
\label{sec:intro}
Polylogarithms and their multivariable generalizations~\cite{Goncharov:1998, Goncharov:2001} play an equally important role in modern mathematics and in physics. In mathematics they occur for instance in connection with algebraic $K$-theory and mixed Tate motives, e.g.~\cite{Bloch-Irvine, Beilinson-Cyclotomic, Zagier-Texel, Deligne-Beilinson, Goncharov-Advances, Goncharov-Galois}, with Hilbert's third problem (on scissors congruences), e.g.~\cite{Dupont-SahII, Cathelineau-Hilbert, Goncharov-Volumes}, as volume functions for hyperbolic spaces, e.g.~\cite{Boehm-Book, Neumann-Zagier, Zagier-Dilog, Zagier-Inv, Kellerhals-Volumes, Goncharov-Volumes, Neumann-Yang}, {and} are also related to characteristic classes, e.g.~\cite{Gelfand-Losik}, special values of $L$-functions in algebraic number theory, e.g.~\cite{Zagier-Dilog, Zagier-Texel, Goncharov-Advances}, algebraic cycles, e.g.~\cite{Bloch-Kriz, GSM, GGL:2009} or, in the form of iterated integrals, in algebraic topology, e.g.~\cite{ChenBullAMS, Hain, Wojtkowiak}. In physics, the computation of higher order corrections to physical observables requires the analytical evaluation of Feynman integrals that can generally be expressed in terms of (special classes of) multiple polylogarithms, e.g.~\cite{Remiddi:1999ew,Gehrmann:2000zt,Ablinger:2011te,Vermaseren:2005qc,Moch:2004xu,Vogt:2004mw,Moch:2004pa,Bonciani:2003hc,Bernreuther:2004ih,Bernreuther:2004th,Bernreuther:2005rw,Mastrolia:2003yz,Bonciani:2004qt,Bonciani:2004gi,Czakon:2004wm,Bern:2006ew,Heinrich:2004iq,Smirnov:2001cm,Bork:2010wf,Henn:2010ir,Aglietti:2006tp,Aglietti:2004ki,Aglietti:2004nj,Gehrmann:2001ck,Anastasiou:2006hc,Moch:2002hm,Moch:2002an,Aglietti:2008fe,DelDuca:2009ac, Gehrmann:2001pz,Maitre:2005uu,Maitre:2007kp,Vollinga:2004sn}.
While in all of these applications it would be desirable to have a minimal spanning set---``basis functions" in physics parlance---for
the polylogarithmic expressions involved in a given problem, it is well known that these latter functions satisfy various intricate functional equations among themselves, making the question of how to find a minimal spanning set very hard to answer in general. As a consequence, seemingly complicated results, say for a Feynman integral, may admit a much shorter analytic representation, the simplicity of the answer being hidden due to the existence of an abundance of functional equations among these functions. There is thus  a strong interest for a better understanding of the functional equations among multiple polylogarithms, both from a formal mathematical standpoint and in view of practical applications in physics.

A way to approach functional equations among (multiple) polylogarithms is provided by the so-called \emph{symbol map}, a linear map that associates to each multiple polylogarithm of weight $n$ an element in the $n$-fold tensor power of some vector space of one-forms. The virtue of the symbol map is that it captures to a good extent the main combinatorial and analytical properties of certain transcendental functions, and in particular it is expected that all functional equations among multiple polylogarithms are in the kernel of the symbol map. Loosely speaking, this means that a necessary condition for two expressions written in terms of multiple polylogarithms to be equal modulo functional equations is that they have the same symbol, a condition that is usually much easier to check than proving equality at the level of the functions. The inverse problem (sometimes called \emph{integration} of a symbol) of finding a function whose symbol  matches a given tensor satisfying a certain integrability condition is much harder and we know of no general algorithm  to construct such a function.

While special cases of the symbol map have been profitably used by mathematicians for over two decades (for example in connection with functional equations see e.g. refs.~\cite{Zagier-Texel, Goncharov-Advances, Zagier-Appendix, Gangl-Selecta}), it has only very recently been introduced into physics in the context of the $\begin{cal}N\end{cal}=4$ Super Yang-Mills (SYM) theory in ref.~\cite{Goncharov:2010jf}, where it was applied to greatly simplify the analytic expression for the two-loop six-point remainder function obtained in ref.~\cite{DelDuca:2009au,DelDuca:2010zg}. In the wake of that work, the symbol map has seen various applications, mostly in the context of $\begin{cal}N\end{cal}=4$ SYM. In particular, by now the symbols of all two-loop remainder functions are completely known~\cite{CaronHuot:2011ky}, while at three loops the symbols of the remainder functions for the hexagon in general kinematics~\cite{Dixon:2011pw} and for the octagon in special kinematics~\cite{Heslop:2011hv} are known up to some free parameters that could not be fixed from general considerations. However, only in the latter octagon case an integrated form of the symbol is {also} known. Other approaches, aiming at the determination of the symbol of loop amplitudes by exploiting the operator product expansion in the collinear limit~\cite{Alday:2010ku,Gaiotto:2011dt} or the relationship between Feynman integrals and the volumes of polyhedra in non-euclidean spaces~\cite{Davydychev:1997wa,Spradlin:2011wp}, have also been considered. Furthermore, the symbol map was recently used to obtain compact analytic expressions for certain one-loop hexagon integrals in $D=6$ dimensions~\cite{Dixon:2011ng,DelDuca:2011ne,DelDuca:2011jm,DelDuca:2011wh}. More phenomenological applications, as for example in ref.~\cite{Buehler:2011ev}, have also been considered.

The aim of this paper is twofold: While the symbol map has already been extensively used in the $\begin{cal}N\end{cal}=4$ SYM community in physics, it seems still rather little known in other areas of physics in which the computation of Feynman integrals plays an important role. On the one hand, we therefore present a concise review on this topic, putting special emphasis on how to apply the symbol map to obtain simpler or shorter analytic results for functions arising from {certain} Feynman integrals. On the other hand, we believe that our work goes beyond the existing literature on the subject in various aspects. While so far the symbol of a transcendental function was defined recursively by considering iterated differentials, we introduce a simple diagrammatic rule that allows to directly read off the symbol from all possible ``triangulations'' of a certain decorated polygon associated to a multiple polylogarithm~\cite{GGL:2009}. Furthermore, we also address the problem of how to integrate a symbol to a function by presenting an effective approach to construct a candidate spanning set of functions in terms of which the symbol might be integrated.

The structure of the paper is as follows: In section~\ref{sec:polylogs} we give a short review of multiple polylogarithms and of their properties. In section~\ref{sec:tensors} we review the main properties of the symbol  map and we show how to obtain the symbol of a multiple polylogarithm as the weighted sum of all possible maximal dissections of a certain decorated polygon associated to the polylogarithm. In section~\ref{sec:example} we give a short example of how to integrate a symbol following the approach introduced in refs.~\cite{Goncharov:2010jf,Alday:2010jz}, before generalizing this procedure to higher weights in section~\ref{sec:higher_weights}. In order to highlight remaining difficulties and ambiguities when trying to integrate to a function, we also give a family of non-trivial elements in the kernel of the symbol map.  We illustrate these concepts in section~\ref{sec:HPLbasis} where we apply them to derive a spanning set up to weight four for a special class of multiple polylogarithms, the so-called harmonic polylogarithms~\cite{Remiddi:1999ew}. The appendices contain a summary of the mathematical notions used throughout the paper, as well as some technical details and proofs left out in the main text. We also include an appendix with a collection of symbols for multiple polylogarithms up to weight four.

{\bf Remark:} The authors wish the reader to be aware that this paper contains the work of both physicists and mathematicians. As a consequence, it should be noted that the paper has been written to try to accommodate 
the language of both communities. We have tried to find a compromise in the level of details of the paper, and so while some arguments may go deeper than felt necessary by some readers, the text may be too sketchy at times for others.

\newpage
\section{Short review of multiple polylogarithms}
\label{sec:polylogs}

{\bf Definition.} Multiple polylogarithms can be defined recursively, for $n\geq 0$, via the iterated integral~\cite{Goncharov:1998, Goncharov:2001}
 \beq\label{eq:Mult_PolyLog_def}
 G(a_1,\ldots,a_n;x)=\,\int_0^x\,{\rd t\over t-a_1}\,G(a_2,\ldots,a_n;t)\,,\\
\eeq
with $G(x) = G(;x)=1$, an exception being when $x=0$ in which case we put $G(0)=0$ (clearly any expression $\int_0^0\dots$ should be zero), and with $a_i\in \mathbb{C}$ are chosen constants and $x$ is a complex variable. In the following, we will also consider $G(a_1,\ldots,a_n;x)$ to be functions of $a_1, \ldots, a_n$.  In the special case where all the $a_i$'s are zero, we define, using the obvious vector notation $\vec a_n=(\underbrace{a,\dots,a}_{n})$, $a\in \C$,
\beq
G(\vec 0_n;x) = {1\over n!}\,\ln^n x\,,
\eeq
consistent with the case $n=0$ above. Note that, while in the Mathematics literature these functions appear already in the early 20th century in the works of Poincar\'e and of Lappo-Danilevsky \cite{Lap35} as ``hyperlogarithms", as well as in the 1960's in  Chen's work on iterated integrals (e.g., \cite{ChenBullAMS})\footnote{In a sense, they already made an appearance in Kummer's pioneering work \cite{Kummer} in 1840.}, in the physics literature these functions
are often called {\em Goncharov polylogarithms}, due to the wealth of structure that the latter has established for them over the last 20 years. Throughout this paper, we follow the physics  convention for the definition of the iterated integrals, which differs slightly from the mathematical one; e.g., in ref.~\cite{Goncharov:2001}, the function corresponding to $G(a_1,\dots,a_n;x)$ would be denoted $I(0;a_n,\dots,a_1;x)$, i.e., with the reverse order of the $a_i$ but keeping the same variable $x$. %% notation $G(a_1,\dots,a_n;x)$.

We will refer to the vector $\vec a=(a_1, \ldots, a_n)$ as the {\em vector of singularities}  
attached to the multiple polylogarithm and the number of elements $n$, counted with multiplicities, in that vector is called the {\em weight} of the multiple polylogarithm.

{\bf Properties.} We collect here a number of useful and well-known properties (cf. e.g. ref.~\cite{Goncharov:2001,Goncharov-Galois}). Iterated integrals form a {\em  shuffle algebra}~\cite{Ree:1958} (see appendix~\ref{app:algebras} for a short review of shuffle algebras), which allows one to express the product of two multiple polylogarithms of weight $n_1$ and $n_2$ as a linear combination with integer coefficients of multiple polylogarithms of weight $n_1+n_2$, via
  \beq\bsp\label{eq:G_shuffle}
  G(a_1,\ldots,a_{n_1};x) \, G(a_{n_1+1},\ldots,a_{n_1+n_2};x) &\,=\sum_{\sigma\in\Sigma(n_1, n_2)}\,G(a_{\sigma(1)},\ldots,a_{\sigma(n_1+n_2)};x),\\
      \esp\eeq
where $\Sigma(n_1,n_2)$ denotes the set of all shuffles of $n_1+n_2$ elements, i.e., the subset of the symmetric group $S_{n_1+n_2}$ defined by (cf. ref.~\cite{ChenBullAMS}, eq.~(1.5.6))
\beq\label{eq:Sigma_def}
\Sigma(n_1,n_2) = \{\sigma\in S_{n_1+n_2} |\, \sigma^{-1}(1)<\ldots<\sigma^{-1}({n_1}) {\rm~~and~~} \sigma^{-1}(n_1+1)<\ldots<\sigma^{-1}(n_1+{n_2})\}\,.
\eeq
The algebraic properties of multiple polylogarithms imply that not all the $G(\vec a;x)$ for fixed $x$ are independent, but that there are (polynomial) {\em relations} among them.  In particular, we can reduce them, modulo products of lower weight functions, to functions whose rightmost index of all the vectors of singularities is non-zero (apart from objects of the form $G(\vec 0_n;x)$), e.g., 
\beq\bsp
G(a,0,0;x) &\,= G(0,0;x)\,G(a;x) - G(0,0,a;x) - G(0,a,0;x)\\
&\, = G(0,0;x)\,G(a;x) - G(0,0,a;x) - \left(G(0,a;x)\,G(0;x) - 2G(0,0,a;x)\right)\\
&\,= G(0,0;x)\,G(a;x) + G(0,0,a;x) -G(0,a;x)\,G(0;x)\,,
\esp\eeq
where the middle summand is of the desired form (and the remaining summands are products).

If the (rightmost) index $a_n$ of $\vec a$  is non-zero, then the function $G(\vec a;x)$ is {\em invariant under a rescaling} of all its arguments, i.e., for any $k\in \mathbb{C}^*$ we have
\beq\label{eq:Gscaling}
G(k\,\vec a;k\, x) = G(\vec a;x)\qquad (a_n\neq 0) \,.
\eeq
Furthermore, multiple polylogarithms satisfy the {\em H\"older convolution}~\cite{BBBL}, i.e., whenever $a_1\neq 1$ and $a_n\neq0$, we have, $\forall p\in\mathbb{C}^*$,
\beq\label{eq:Hoelder}
G(a_1,\ldots, a_n;1) = \sum_{k=0}^n(-1)^k\,G\left(1-a_k,\ldots, 1-a_1;1-{1\over p}\right)\,G\left(a_{k+1},\ldots, a_n;{1\over p}\right)\,.
\eeq
Below in section~\ref{sec:higher_weights} we will be particularly interested in the limiting case $p\to \infty$ of this identity,
\beq\label{eq:Hoelder_inf}
G(a_1,\ldots, a_n;1) = (-1)^n\,G\left(1-a_n,\ldots, 1-a_1;1\right)\,.
\eeq

Whenever they converge, multiple polylogarithms can equally well be represented \cite{Goncharov:1998} as {\em multiple nested sums} (e.g., for $|x_i|<1$)
\beq\label{eq:Lim_def}
\textrm{Li}_{m_1,\ldots,m_k}(x_1,\ldots,x_k) = \sum_{n_1<n_2<\dots <n_k} \frac{x_1^{n_1} x_2^{n_2} \cdots x_k^{n_k} }{n_1^{m_1} n_2^{m_2} \cdots n_k^{m_k} } =
 \sum_{n_k=1}^\infty{x_k^{n_k}\over n_k^{m_k}}\,\sum_{n_{k-1}=1}^{n_k-1}\ldots\sum_{n_1=1}^{n_{2}-1}{x_1^{n_1}\over n_1^{m_1}}\,.
\eeq
Note that we are using Goncharov's original summation convention \cite{Goncharov:1998}; other authors define $\textrm{Li}_{m_1,\ldots,m_k}(x_1,\ldots,x_k)$  using the reverse summation convention instead, i.e.~$n_1>\dots>n_k$.

The $G$ and Li functions define in fact the same class of functions and are related by
\beq\label{eq:Gm_def}
\textrm{Li}_{m_1,\ldots,m_k}(x_1,\ldots,x_k) = (-1)^k\,G_{m_k,\ldots,m_1}\left({1\over x_k}, \ldots, {1\over x_1\ldots x_k}\right)\,,
\eeq
(note the reverse order of the indices for $G$) where we used the notation
\beq
G_{m_1,\ldots,m_k}\left(t_1, \ldots, t_k\right)= G(\underbrace{0,\ldots,0}_{m_1-1},{t_1}, \ldots, \underbrace{0,\ldots,0}_{m_k-1}, {t_k};1)\,.
\eeq

It is possible to find {\em closed expressions} for (very few) special classes of multiple polylogarithms, for arbitrary weight, in terms of classical polylogarithm functions, e.g., for $a\neq 0$ we have
\beq\bsp
G(\vec 0_n;x) = {1\over n!}\,\ln^nx, \qquad &G(\vec a_n;x) = {1\over n!}\,\ln^n\left(1-{x\over a}\right),\\
G(\vec 0_{n-1},a;x) = -\textrm{Li}_n\left({x\over a}\right), \qquad & G(\vec 0_{n},\vec a_{p};x) = (-1)^p\,S_{n,p}\left({x\over a}\right)\,,
\esp\eeq
where $S_{n,p}$ denotes the Nielsen polylogarithm~\cite{Nielsen}.
Moreover, up to weight three, multiple polylogarithms are well-known to be %can be completely 
expressible in terms of ordinary logarithms, dilogarithms and trilogarithms (cf. ref.~\cite{Lewin-Book}, \S8.4.3, implicitly, as well as refs.~\cite{Goncharov-Volumes, Kellerhals-Volumes}). In particular, if $a$ and $b$ are non-zero and different, we find
\beq\label{eq:GabzLi}
G(a,b;x) = \text{Li}_2\left(\frac{b-x}{b-a}\right)-\text{Li}_2\left(\frac{b}{b-a}\right)+\log \left(1-\frac{x}{b}\right) \log \left(\frac{x-a}{b-a}\right)\,.
\eeq

{\bf Aim.} The aim of this paper is to present an algorithmic approach how to deal with---in fact rather to circumvent---the complicated functional equations that relate multiple polylogarithms, and how to find, given a choice of certain singularities $a_i$, a (possibly minimal) spanning set for functions in which to express multiple polylogarithms with singularities only in these $a_i$, provided such a set exists. The approach we present is %[completely ??] / 
rather
generic and can be applied to any expression involving multiple polylogarithms. This is made possible by using results closely related to work of Goncharov, Spradlin, Vergu and Volovich~\cite{Goncharov:2010jf}, which in turn was inspired by the theory of (mixed Tate) motives\footnote{Let us point out that this is far from being the first exhibit of a direct connection between mixed Tate motives and mathematical physics, as such a relationship  has been explored, e.g., by Kreimer in work with Bloch and  Esnault~\cite{BlochEsnaultKreimer, BlochKreimer}, such a connection was clearly apparent from letter correspondence between Broadhurst and Deligne~\cite{BroadhurstDeligne} resulting e.g.~in ref.~\cite{Broadhurst}, work of Belkale--Brosnan~\cite{BelkaleBrosnan} or more recently by Brown~\cite{BrownMassless} and others. One should also mention work of Connes and Marcolli~\cite{ConnesMarcolli} in this direction.}, and in particular by using a certain tensor calculus associated to iterated integrals, which is called ``symbol calculus'' in the following (the name ``symbol" originating from \cite{Goncharov:2010jf}~and from ref.~\cite{Goncharov-simple-Grassmannian}), and which we will review in the next section.

An important remark is that the construction of a ``symbol" seems to be a rather special case of a very general construction by Chen~\cite{ChenBullAMS},  where it appears as the image of an iterated integral as a 0-cocycle in the so-called ``bar construction" attached to, say, $X$ equal to the projective line minus a number of points (more generally, the construction has been investigated for a hyperplane configuration~\cite{BrownThesis}, \S3), and it lands in the $n$-fold tensor product of the vector space of 1-forms on the underlying space $X$.
Moreover, Chen characterised the image as the formal words in these 1-forms satisfying a natural integrability condition. Therefore, it would seem appropriate to reflect this in the notation for this object, e.g.~as ``Chen symbol".
A polygon attached to an iterated integral enjoys the useful property that it gives a very concise way of explicitly producing integrable words, i.e. (Chen) symbols, of that kind.

As an application, we restrict ourselves in section~\ref{sec:HPLbasis} to a specific subclass of multiple polylogarithms that are of particular importance in applications in high-energy physics. These so-called \emph{harmonic} polylogarithms (HPL's)  $H(\vec a; x)$ were first singled out and thoroughly studied in ref.~\cite{Remiddi:1999ew}. HPL's correspond to a special case of the iterated integral defined in eq.~\eqref{eq:Mult_PolyLog_def} where $a_i\in\{-1,0,1\}$. More precisely, they are defined via
\beq
H(\vec a;x) = (-1)^k\,G(\vec a;x)\,, \quad a_i\in \{-1,0,1\}\,,
\eeq
where $k$ is the number of elements in $\vec a$ equal to $(+1)$. Many one and two-loop
Feynman integrals can be expressed in terms of HPL's up to weight four and generalizations thereof~\cite{Gehrmann:2000zt, Ablinger:2011te}. As harmonic polylogarithms are just a special case of the multiple polylogarithms introduced at the beginning of this section, all HPL's through weight three can be expressed through classical polylogarithms. By contrast, similar to the general case of multiple polylogarithms, it is expected that HPL's of weight $\geq 4$ are no longer expressible in terms of classical ones alone. In section~\ref{sec:HPLbasis}  we illustrate our technique by constructing a spanning set of harmonic polylogarithms in weight~4.

\section{Symbols and polygons}
\label{sec:tensors}

The differential structure of multiple polylogarithms can be captured very well combinatorially using a certain kind of decorated polygons with some additional structure, as developed in ref.~\cite{GGL:2009}, where they were called {\em $R$-deco polygons}. We note that there are related notions that had occurred previously in Goncharov's work, e.g. in refs.~\cite{Goncharov-Galois, Goncharov-Duke}. There is an algebraic object  
attached to such a polygon, and hence to the corresponding multiple polylogarithm. This object, which has been dubbed a {\em symbol}  in ref.~\cite{Goncharov:2010jf},  is an element in a tensor power of a certain vector space and  contains a lot of information about the original function.

\subsection{An example in a nutshell}\label{nutshell} In this subsection we give a quick idea of how, following ref.~\cite{GGL:2009}, one can associate 
to a multiple polylogarithm---or rather to an associated rooted decorated polygon---its symbol (we show in section~\ref{rules} that this definition is equivalent to the definition given in ref.~\cite{Goncharov:2010jf}). In the following subsection we then give a more detailed account of the construction.

%\sm
A multiple polylogarithm of weight $n$ gives rise to a certain $(n+1)$-gon.
As a ``foreshadowing" example, we first give the 4-gon $P=P(c,b,a,x)$ attached to some weight 3 multiple polylogarithm 
$G(a,b,c;x) = - Li_{1,1,1}(b/c,a/b,x/a) = I(0;c,b,a;x)$:
\begin{center}
$P(c,b,a,x) = $  {\fourgon c b a x}  $ \quad \longleftrightarrow \quad G(a,b,c;x)$
\end{center}
%\vskip 20pt
\noindent
which comes equipped with decorations (in this order) $c$, $b$, $a$ and $x$, the latter decoration $x$ being for the distinguished {\em root side} (drawn by a double line in the picture), and also carries information on the orientation of the polygon in the form of a fat vertex (which should be thought of as the ``first" vertex, while the root side ---adjacent to the first vertex--- is the ``last'' side).

In a {\em first step}, one lists all possible ways to draw the maximal number of non-intersecting arrows (an arrow is a directed line from a vertex of $P$ to a non-adjacent side), which for an $(n+1)$-gon amounts to
$n-1$ such arrows, and one formally adds the resulting objects (the framing polygon being identical, but each equipped with a different maximal set of arrows).

In our example $n=3$, such a maximal set contains $n-1=2$ arrows, and there are precisely 12 different such sets, given by 
\vskip30pt
\xy 
\hskip 100pt 
\fourgona \POS(4,5) \ar@{<<-} +(-4,-10)  \POS(6,5) \ar@{<<-} +(4,-10) 
\hskip 60pt 
\fourgona \POS(10,1) \ar@{<<-} +(-10,4) \POS(10,-1) \ar@{<<-} +(-10,-4)
\hskip 60pt 
\fourgona \POS(6,-5) \ar@{<<-} +(4,10) \POS(4,-5) \ar@{<<-} +(-4,10)
\hskip 60pt 
\fourgona \POS(0,-1) \ar@{<<-} +(10,-4) \POS(0,1) \ar@{<<-} +(10,4)
\endxy
\vskip 20pt
\xy 
\hskip 100pt \fourgona \POS(5,5) \ar@{<<-} +(-5,-10) \POS(10,0) \ar@{<<-} +(-10,-5)
\hskip 60pt \fourgona \POS(10,0) \ar@{<<-} +(-10,5)\POS(5,-5) \ar@{<<-} +(-5,10)
\hskip 60pt \fourgona \POS(5,-5) \ar@{<<-} +(5,10) \POS(0,0) \ar@{<<-} +(10,5)
\hskip 60pt \fourgona \POS(0,0) \ar@{<<-} +(10,-5) \POS(5,5) \ar@{<<-} +(5,-10) 
\endxy

\vskip 20pt
\xy 
\hskip 100pt \fourgona \POS(5,5) \ar@{<<-} +(-5,-10) \POS(5,-5) \ar@{<<-} +(5,10) 
\hskip 60pt \fourgona \POS(10,0) \ar@{<<-} +(-10,5) \POS(0,0) \ar@{<<-} +(10,-5)
\hskip 60pt \fourgona \POS(5,5) \ar@{<<-} +(5,-10) \POS(5,-5) \ar@{<<-} +(-5,10)
\hskip 60pt \fourgona \POS(10,0) \ar@{<<-} +(-10,-5)\POS(0,0) \ar@{<<-} +(10,5)
\endxy
\vskip 20pt

%%\fourgondotted c b aa x

%$...$ and rotations thereof.

In a {\em second step}, to each such maximal set $A$ of arrows in $P$, we associate a rooted tree (as the tree dual to the polygon dissection defined by the arrows) whose decorations are (decorated and rooted) 2-gons. As an example, to the 4-gons in the last column above we attach 
\vskip30pt
\xy 
\hskip 60pt \fourgona \POS(0,-1) \ar@{<<-} +(10,-4) \POS(0,1) \ar@{<<-} +(10,4)
%\POS(2,2) *+{\scriptstyle{\bullet}} *{\cir{.}}
\POS(1,3) *+{{\bullet}} *{\cir{}}
\POS(7,0) *+{\scriptstyle{\bullet}}
\POS(2,-3.5) *+{\scriptstyle{\bullet}}
\POS(1,3)  \ar@{-} +(6,-3)
\POS(7,0)  \ar@{-} +(-5,-3.5)

 \POS(0,1)  \ar@{<<-} +(10,4)
 
\hskip 70pt  
$\longleftrightarrow  $
\hskip 40pt

% \POS(0,-1) 

\POS(5,10) \ar@{-} +(0,0)^{{\twogon c x }}%\\ \phantom{[]}}}
\POS(5,0) \ar@{-} +(0,0)^{{\twogon a c }}%\\ \phantom{[]}}}
\POS(5,-10) \ar@{-} +(0,0)^{{\twogon b c }}%\\ \phantom{[]}}}

 \POS(5,-10) 

\POS(15,17) *+{\scriptstyle{\bullet}} *{\cir{}}
\POS(15,7) *+{\scriptstyle{\bullet}} 
\POS(15,-3) *+{\scriptstyle{\bullet}} 

\POS(15,17) \ar@{-} +(0,-10)
\POS(15,7) \ar@{-} +(0,-10)
\POS(15,-3) 

\POS(0,0)
\endxy

\vskip30pt
\xy 
\hskip 60pt 
\fourgona \POS(0,0) \ar@{<<-} +(10,-5) \POS(5,5) \ar@{<<-} +(5,-10) 
%\POS(2,2) *+{\scriptstyle{\bullet}} *{\cir{.}}
%\POS(2,3)  \ar@{-} +(8,-3)
\POS(2.5,2.5) *+{{\bullet}} *{\cir{}}
\POS(8,2) *+{\scriptstyle{\bullet}}
\POS(2,-3.5) *+{\scriptstyle{\bullet}}
\POS(2.5,2.5)  \ar@{-} +(5.5,-.5)
\POS(2,-3.5)  \ar@{-} +(.5,6)

\POS(0,0)
 \POS(5,5) \ar@{<<-} +(5,-10) 
 
\hskip 70pt
$\longleftrightarrow  $
\hskip 40pt

\POS(-5,-10) \ar@{-} +(0,0)^{{\twogon b c }}%\\ \phantom{[]}}}
\POS(5,0) \ar@{-} +(0,0)^{{\twogon c x}}%\\ \phantom{[]}}}
\POS(35,-10) \ar@{-} +(0,0)^{{\twogon a x}}%\\ \phantom{[]}}}

\POS(25,5) *+{\scriptstyle{\bullet}} *{\cir{}}
\POS(15,-7) *+{\scriptstyle{\bullet}} 
\POS(35,-7) *+{\scriptstyle{\bullet}} 

\POS(25,5)  \ar@{-} +(-10,-12)
\POS(25,5)  \ar@{-} +(10,-12)

\endxy

\vskip30pt
\xy 
\hskip 60pt \fourgona \POS(10,0) \ar@{<<-} +(-10,-5)\POS(0,0) \ar@{<<-} +(10,5)
\POS(2.5,2.5) *+{{\bullet}} *{\cir{}}
\POS(6,0) *+{\scriptstyle{\bullet}}
\POS(9,-2.5) *+{\scriptstyle{\bullet}}
\POS(2.5,2.5)  \ar@{-} +(3.5,-2.5)
\POS(6,0) \ar@{-} +(3.,-2.5)

\POS(0,0) \ar@{<<-} +(10,5)

\hskip 70pt  
$\longleftrightarrow  $
\hskip 40pt

\POS(5,10) \ar@{-} +(0,0)^{{\twogon c x }}%\\ \phantom{[]}}}
\POS(5,0) \ar@{-} +(0,0)^{{\twogon a c }}%\\ \phantom{[]}}}
\POS(5,-10) \ar@{-} +(0,0)^{{\twogon b a }}%\\ \phantom{[]}}}

\POS(15,17) *+{\scriptstyle{\bullet}} *{\cir{}}
\POS(15,7) *+{\scriptstyle{\bullet}} 
\POS(15,-3) *+{\scriptstyle{\bullet}} 

\POS(15,17) \ar@{-} +(0,-10)
\POS(15,7) \ar@{-} +(0,-10)
\POS(15,-3) 

\POS(0,0)
\endxy

\vskip 30pt

\def \symb{{\cal S}}

Any linear order on the vertices of such a rooted %(non-planar) 
tree which is {\em compatible} in the sense discussed below in section~\ref{rules} with the partial order on it (only the middle tree above is non-linear hence allows more than one such linear order) now gives a term in the {\em symbol} $\symb(P)$ attached to $P$. 
In practice, this means that every branching in a tree contributes to the symbol by the shuffle of (the vertices that appear on each of) its branches (see below for a more detailed description).

\emph{Third step}: Each of the 2-gons $B$ in one of the linear orders on the vertices now is mapped via a suitable map $\mu$ to a rational function in the original decorations of the polygons (in the example a natural target space would be the function field $\mathbb{Q}(a,b,c,x)$ of rational functions in the variables $a$, $b$, $c$, $x$). More precisely, if $B= \twogon x y$  %% [\alpha,\beta]$ 
where $y$ denotes the root decoration, then we map $B$ to $\mu(B) = 1-y/x$, provided $x\neq 0$, and to $\mu(B)=y$ otherwise.

\smallskip
\emph{Last step}: Fixing the signs. 
We need to invoke
a sign for the individual elementary tensors, and this sign is determined by using the number of {\em backward arrows} in a dissection.
In order to see this quickly, it is convenient to ``break up" the polygon at its {\em first vertex} (in the pictures it is typically indicated by a bullet).
% (in the induced order of sides on the oriented polygon, where the root side comes last, there is a natural choice of a first vertex; in the pictures it is indicated by a bullet).
Then we ``roll out" the sequence of sides and arrange it as a line from left to right, starting with the first vertex and ending with the root side; dissecting arrows inside the polygon will be stretched out (in a way that they still do not intersect). We give it for the third example above:

\vskip30pt
\xy 
\hskip 60pt \fourgona \POS(10,0) \ar@{<<-} +(-10,-5)^{\scriptscriptstyle\alpha\phantom{xx}} \POS(0,0) \ar@{<<-} +(10,5)_{{}^\beta\phantom{xx}}
%\hskip 60pt \fourgona \POS(10,0) \ar@{<<-} +(-10,-5)^{{}_\alpha} \POS(0,0) \ar@{<<-} +(10,5)_{{}^\beta\phantom{xx}}
%\hskip 60pt \fourgona \POS(10,0) \ar@{<<-} +(-10,-5)^\alpha \POS(0,0) \ar@{<<-} +(10,5)_{{}^\beta}
\hskip 70pt  
$\longleftrightarrow  $
\hskip 40pt

\POS(5,0) *+{{\bullet}}
\POS(5,0) \ar@{-} +(10,0)_c
\POS(15,0) *+{\scriptstyle{\bullet}} 
\POS(15,0) \ar@{-} +(10,0)_b
\POS(25,0) *+{\scriptstyle{\bullet}} 
\POS(25,0) \ar@{-} +(10,0)_a
\POS(35,0) *+{\scriptstyle{\bullet}} 
\POS(35,0) \ar@{=} +(10,0)_x
%\POS(45,0) *+{\scriptstyle{\bullet}} 

%\POS(15,0) \ar@{->>} +(6,5)^\alpha
%\POS(35,0) \ar@{->>} +(-6,5)^\beta

\POS(15,0) \ar@/^1pc/ +(16,0)_\alpha
\POS(35,0) \ar@/_2pc/ +(-26,0)_\beta

\POS(0,0)
\endxy

\vskip30pt
Now a {\em backward arrow} is one which, in the rolled-out version of the polygon, has its end point to the left of its starting point (i.e. points from right to left, like $\beta$ above), while a {\em forward arrow} has it to its right (i.e. points from left to right, like $\alpha$ above).

%More formally, there is a natural linear order on the vertices  $v_1,\dots,v_n$ of the $n$-gon, starting with the first vertex and ending with the other vertex incident with the root edge.  Similarly, there is a related linear order 

Here is  a more formal definition: There is a natural linear order on the sides $e_1,\dots,e_n$ of an $n$-gon as above, starting with the non-root side $e_1$ incident with the first vertex and ending with the root side $e_n$ (in the example above it is the linear order given by the sides $e_1,\dots,e_4$ decorated by $c$, $b$, $a$ and $x$, and the vertices $v_1=e_n\cap e_1$ (the first vertex), $v_2=e_1\cap e_2$, $v_3=e_2\cap e_3$, $v_4=e_3\cap e_4$. This induces a linear order on the vertices $v_j$ which arise as the intersection of $e_j$ and $e_{j-1}$ (indices taken modulo $n$), where  the first vertex is the smallest element in that order.
Then a (non-trivial) arrow can be encoded by a pair $(v_j,e_k)$ with $k\notin\{j-1,j\}$, and it is {\em backward} if and only if $k<j-1$.
With these notions, the {\em sign } attached to a polygon $P$ with a maximal  arrow set $A$ is given by 
$(-1)^{\#\{\text{backward arrows of }A\}}$.
In the three examples discussed above in more detail we get two backward arrows for the first maximal dissection of the square and one backward arrow for the remaining two dissections.

Putting all of the above ingredients together and writing $\tau_A$ for the tree dual to the 
maximal set of arrows $A$, and $(\tau_A, \prec)$ for its partial order, the final formula for the {\em symbol} $\cS(P)$ of an $n$-gon $P$ is
\beq
\symb(P) = \sum_{\text{ max sets }A\atop \text{of arrows in }P}(-1)^{\#\{\text{backward arrows of }A\}}  \sum_{\text{linear orders }\lambda\atop \text{compatible with }(\tau_A,\prec)}
\ \mmu{\twogon{{b_1}}{{a_1}}} \otimes \dots \otimes \mmu{\twogon{{b_{n-1}}}{{a_{n-1}}} }\,.
\eeq
As an example, the first and third of the three maximal sets of arrows above give 

$$ + \mmu{\twogon c x} \otimes \mmu{\twogon a c} \otimes \mmu{\twogon b c } = \left(1-\frac x c\right)\otimes \left(1-\frac c a\right)\otimes \left(1-\frac c b\right)\,$$
and 
$$ - \mmu{\twogon c x} \otimes \mmu{\twogon a c} \otimes \mmu{\twogon b a} = - \left(1-\frac x c\right)\otimes \left(1-\frac c a\right)\otimes \left(1-\frac a b\right)\,,$$
respectively, while the middle term (corresponding to a {\em non-linear} dual tree{, i.e. a dual tree with branchings}) {contributes via the shuffle product of the two branches}
\begin{eqnarray*}
-\mmu{\twogon c x} \otimes \bigg( \mmu{\twogon b c} \uplus \mmu{\twogon a x}\bigg) &&  \\
=-\mmu{\twogon c x} \otimes \mmu{\twogon b c} \otimes \mmu{\twogon a x} &-&  \mmu{\twogon c x} \otimes \mmu{\twogon a x} \otimes \mmu{\twogon b c} \\
= -\left(1-\frac x c\right)\otimes  \left(1-\frac c b\right)\otimes \left(1-\frac x a\right) & - &\left(1-\frac x c\right)\otimes \left(1-\frac x a\right)\otimes  \left(1-\frac c b\right)\,,
\end{eqnarray*}
{where we introduced the symbol for the shuffle product}
\beq
{a\uplus b = a\otimes b+b\otimes a\,.}
\eeq

Motivation and justification of this assignment has been given to an extent in ref.~\cite{GGL:2009}, where it forms part of an expression arising from the well-known bar construction in algebraic topology applied to a differential graded algebra on the polygons above (which in turn is motivated by certain algebraic cycles originally studied by Bloch~\cite{Bloch} and Bloch--Kriz~\cite{Bloch-Kriz}). For an earlier appearance of a very similar structure (called the $\otimes^m$--invariant there), see ref.~\cite{Goncharov-Galois}, \S4.4.
%\begin{dubious}
%CD: I know now where the name ``symbol'' comes from. Look at \emph{A simple construction of the Grassmannian polylogarithms} by Goncharov. Whether this is relevant to be cited here, I leave it to you.
%\end{dubious}

To summarize: an important part of the differential structure of a weight $n$ multiple polylogarithm 
%% in the arguments $x_1$, \dots, $x_m$ for some $m$ 
is captured
by a certain decorated $(n+1)$-gon. More precisely, if the arguments of the multiple polylogarithm are expressed in terms of variables/constants $x_1$, \dots, $x_m$ for some $m$, the polygon is an $(n+1)$-gon with decoration by simple expressions in $x_1,\dots,x_m$; now to this (rooted decorated oriented) polygon there is attached in a natural way an expression (its ``symbol") in $V^{\otimes n}$ where $V$ is a finite rank submodule (it might be convenient for the reader to think of $V$ as a finite dimensional vector space) of the space $\Q(x_1,\dots,x_m)^\times$ (of infinite rank), i.e., the invertible rational functions in the variables $x_1,\dots,x_m$.

\subsection{Rules of symbol calculus}\label{rules}
Roughly, a symbol is a formal sum of elementary $n$-fold tensors $a_1\otimes\dots \otimes a_n$, and one works 
in each tensor factor as with (a refined form of) $\rdh\log$ terms. {In other words, each factor $a_i$ in a tensor product is tacitly understood as }
\beq
{d\log a_i \equiv {da_i\over a_i}\,.}
\eeq  
{Furthermore}, we use shuffle products and the following {\em rules} (essentially boiling down to multilinearity, but in an unusual form, as we pass from multiplicative to additive notation):
\begin{itemize}
\item {\bf Distributivity.}\\
\beq\label{distr1}
C\otimes(a\cdot b)\otimes D = C \otimes a \otimes D +C \otimes b \otimes D
\eeq
and consequently
\beq\label{distr2}
C\otimes a^n \otimes D = n \big(C \otimes a \otimes D\big)\,,\qquad  n\in \Z,
\eeq
where $C$ and $D$ denote fixed elementary tensors. Note that $n$ here is a coefficient rather than part of the first tensor factor; in particular, putting $n=0$ we see that $C\otimes 1\otimes D = 0$.
\item  {\bf Neglecting torsion.}\\
We will ``work up to torsion", which means that we will put 
\beq
C\otimes \rho_n \otimes D = 0\,,\qquad  n\in \Z,
\eeq
for $\rho_n$ an $n$-th root of unity.

\item  {\bf Shuffle product.} \\
An important property of the symbol is that it preserves products: more precisely, it maps the product of two multiple polylogarithms to the {\em shuffle product} of their respective symbols, i.e.
\beq\label{producttoshuffle}
\cS\big(G(a_1,\dots,a_r;x)G(b_1,\dots,b_s;y)\big) = \cS\big(G(a_1,\dots,a_r;x)\big)\uplus\cS\big(G(b_1,\dots,b_s;y)\big)\,
\eeq
{where $\uplus$ is the symbol used for the shuffle product of two tensors, defined on elementary tensors by}
\beq
{(a_1\otimes\ldots\otimes a_{n_1})\uplus(a_{n_1+1}\otimes\ldots\otimes a_{n_1+n_2}) = \sum_{\sigma\in\Sigma(n_1, n_2)}\,a_{\sigma^{-1}(1)}\otimes\ldots\otimes a_{\sigma^{-1}(n_1+n_2)}\,,}
      \eeq
{where $\Sigma(n_1, n_2)$ was defined in eq.~\eqref{eq:Sigma_def}. For more details on shuffle algebras, we refer to appendix~\ref{app:algebras}. We note that,
o}n the left hand side of eq.~\eqref{producttoshuffle}, the shuffle permutations are applied to the arguments of the two functions (cf.~e.g.~eq.~\eqref{eq:G_shuffle}), while on right hand side one shuffles the tensor factors instead, in a completely analogous fashion. 

Note that eq.~\eqref{producttoshuffle} is a rather non-trivial fact, as one can already see in the first non-obvious case:
\begin{eqnarray*}
&&\cS\big(G(a;x)G(b;x)\big) = \cS\big(G(a,b;x)+G(b,a;x)\big)= \\
&=& \Big(\big(1-\frac x a)\otimes (1-\frac x b)\ - \ \big(1-\frac x a)\otimes (1-\frac a b)\ +\big(1-\frac x b)\otimes (1-\frac b a)\Big) \\ &+&  \Big(\big(1-\frac x b)\otimes (1-\frac x a)\ - \ \big(1-\frac x b)\otimes (1-\frac b a)\Big) +  \big(1-\frac x a)\otimes (1-\frac a b)\Big)\,,
\end{eqnarray*}
which agrees with $\cS\big(G(a;x)\big)\uplus \cS\big(G(b;x)\big) = \big(1-\frac x a)\otimes (1-\frac x b) + \big(1-\frac x b)\otimes (1-\frac x a)$, due to cancellations of terms.

We will encounter expressions which involve both tensor and shuffle products---in order to avoid writing many parentheses, our convention is that a shuffle takes precedence over a tensor, i.e.
\beq\label{eq:shuffle_binding}
a\otimes b \uplus c \equiv  a\otimes (b \uplus c)\,.
\eeq
Furthermore, we abbreviate elementary tensors with the same factors as follows:
\beq
 a^{\otimes n} = \underbrace{a\otimes \dots \otimes a}_{n \text{ times}}\,.
 \eeq

\item {\bf Refined} $\rdh\log$ {\bf terms.}\\
We emphasise here already, though, that we will not treat $\rdh\log c$ for a rational constant $c$ as zero (as opposed to ref.~\cite{Goncharov:2010jf}) since we would lose a lot of important information this way. Instead we extend the above calculus to rational numbers in complete analogy with the above; so we have, e.g.,
\beq
 C\otimes 2^m\cdot 3^n\cdot x^{-5} \otimes D = m( C\otimes 2 \otimes D) \ +\  n(C\otimes 3 \otimes D) \  -5\ (C\otimes x\otimes D )\,.
 \eeq

\item {\bf Root decoration 0 annihilates:}\\
Since $G(\dots;0)=0$, we also need to put $\cS(G(\dots;0))=0$, and this indicates that we can (and will) ignore polygons whose root side is decorated by 0.

\end{itemize}

%\begin{dubious}
%CD: I like this graphical definition of backward arrow, but I believe giving a more formal definition as well might be good.
%\end{dubious}

\medskip
{\bf Linear orders of a tree.} For a rooted tree $T$, which we view without a fixed embedding into the plane, 
hence e.g.~we {consider as equal} the two trees 

\vskip10pt
\xy \hskip 50pt\POS(25,5) *+{\scriptstyle{\bullet}} *{\cir{}}
\POS(15,-7) *+{\scriptstyle{\bullet}} 
\POS(35,-7) *+{\scriptstyle{\bullet}} 

\POS(25,5)  \ar@{-} +(-10,-12)
\POS(25,5)  \ar@{-} +(10,-12)

\POS(12,-7) *+{v_1}
\POS(38,-7) *+{v_2}
\POS(20,5)  *+{v_0}
\hskip 40pt  \POS(35,0)  *+{\rm and} \hskip100pt %\xy
\POS(25,5) *+{\scriptstyle{\bullet}} *{\cir{}}
\POS(15,-7) *+{\scriptstyle{\bullet}} 
\POS(35,-7) *+{\scriptstyle{\bullet}} 

\POS(25,5)  \ar@{-} +(-10,-12)
\POS(25,5)  \ar@{-} +(10,-12)

\POS(12,-7) *+{v_2}
\POS(38,-7) *+{v_1}
\POS(20,5)  *+{v_0}

\endxy
% are identified), 

\vskip 10pt
\noindent There is a natural partial order $\prec$ on its vertices $v_j$ ($j\in J$), given as follows: the root vertex $v_0\prec v_j$ for any $j\in J$, and $v_j\prec v_k $ for $v_j\neq v_0$ if and only if there is a direct path from root to a leaf passing first through $v_j$ and then through $v_k$.

A {\em linear order on the vertices} of $T$ which is {\em compatible with the order $\prec$} is a sequence $(v_0,v_{j_1},\dots,v_{j_r})$ of all the vertices $v_j$ ($j\in J$) such that $v_{j_i} \prec v_{j_k}$ implies $j_i\leq j_k$. 
(This means that if two vertices are in a relation with respect to the partial order, then they should be related in any compatible linear order in the same way, while if they are not related in the partial order, there is no condition for how they should be related in that linear order.)
In the example, there are precisely two linear orders which are compatible with the partial order, as the root vertex always comes first: $(v_0,v_1,v_2)$ and $(v_0,v_2,v_1)$.

\medskip
{\bf Definition of a symbol.} Now we are ready to give a complete definition of the symbol attached to a (rooted decorated oriented) $(n+1)$-gon $P$ with decorations $(t_1,\dots,t_n,x)$, and then extend it by linearity and shuffle product to any sum of (products of) polygons, hence also for multiple polylogarithms:

\beq\label{eq:symb_def}
\symb(P) = \sum_{\text{ max sets }A\atop \text{of arrows in }P} (-1)^{\#\{\text{backward arrows of }A\}} \sum_{\text{linear orders }\lambda\atop \text{compatible with }(\tau_A,\prec)}
\ \mu\big(\pi_1^{A,\lambda}\big) \otimes \dots \otimes \mu\big({\pi_n^{A,\lambda} }\big)\,,
\eeq
where the 2-gons $\pi_\nu^{A,\lambda}$ are determined by the maximal dissection $A$ together with the linear order $\lambda$ which is compatible with the partial order $\prec$ on $\tau_A$, the dual tree of the dissection $A$,
 in the manner given above in the second step of section~\ref{nutshell} (i.e. for each 2-gon arising from the dissection of $A$ there is a vertex of $\tau_A$ decorated by that 2-gon, and for any two 2-gons  that are adjacent there is an edge in $\tau_A$ connecting the corresponding vertices.

\medskip
{\bf Integrability condition.} 
%%*******
A very useful property of the rooted decorated polygons, found by the second author in collaboration with F.~Brown and A.~Levin, is that each polygon (or rather its symbol) satisfies a certain integrability condition.
Indeed, an arbitrary sum of elementary tensors does not necessarily lie in the image of the symbol map.
Instead, it was pointed out in ref.~\cite{BrownThesis}, making explicit in a special case the very general approach of Chen~\cite{ChenBullAMS}, that a necessary and sufficient condition  for a symbol
\beq
 S=\sum_{I=(i_1,\dots,i_m)} c_I \ \omega_{i_1} \otimes \cdots \otimes \ \omega_{i_m} \quad (c_I\in \Q)\,,
 \eeq
to be integrable to a function is that 
\beq
\sum_{I=(i_1,\dots,i_m)} c_I \ \omega_{i_1} \otimes \cdots \otimes
 ( \omega_{i_j} \wedge  \omega_{i_{j+1}})\otimes
\cdots \otimes \ \omega_{i_m} \ = \ 0 \quad  \text{for all } 1\leq j\leq m\,,
\eeq
{where $\omega_{i_j} \wedge  \omega_{i_{j+1}}$ denotes the usual exterior product of two differential forms.}
We rewrite this for our purposes as
\beq\label{eq:integrability}
\sum_{I=(i_1,\dots,i_m)} c_I \  \big[\rdh\log \omega_{i_j}\wedge \rdh\log \omega_{i_{j+1}}  \big]\, \omega_{i_1} \otimes \cdots \otimes
 \widehat \omega_{i_j} \otimes  \widehat\omega_{i_{j+1}} \otimes
\cdots \otimes \ \omega_{i_m} \ = \ 0 \,,
\eeq
{where the hats indicate that we omit the corresponding factors in the tensor product.}
As an example, we indicate the statement for $G(a,b;x)$, whose symbol
\beq
 \cS\big(G(a,b;x)\big) =   \Big(1-\frac x b\Big)\otimes \Big(1-\frac x a\Big)\ - \ \Big(1-\frac x b\Big)\otimes \Big(1-\frac b a\Big) +  \Big(1-\frac x a\Big)\otimes \Big(1-\frac a b\Big)
\eeq
satisfies
\beq\label{eq:int_example}
 \rdh\log\Big(1-\frac x b\Big)\wedge \rdh\log \Big(1-\frac x a\Big)\ - \ \rdh\log \Big(1-\frac x b\Big)\wedge \rdh\log \Big(1-\frac b a\Big) +  \rdh\log \Big(1-\frac x a\Big)\wedge \rdh\log \Big(1-\frac a b\Big) =0\,,
 \eeq
 {where we recall that $\rdh\log x ={\rd x/x}$.}
Indeed, writing 
\beq
\rdh\log\Big(1-{\alpha\over\beta}\Big) = \frac{\rd y - \rd \beta}{y-\beta} \Big\vert_{y=0}^\alpha
\eeq
the left-hand side of eq.~\eqref{eq:int_example} becomes
\beq
\frac{\rd y - \rd b}{y-b} \Big\vert_{y=0}^x \wedge \frac{\rd y - \rd a}{y-a} \Big\vert_{y=0}^x 
\,-\, \frac{\rd y - \rd b}{y-b} \Big\vert_{y=0}^x \wedge \frac{\rd y - \rd a}{y-a} \Big\vert_{y=0}^b 
\,+\, \frac{\rd y - \rd a}{y-a} \Big\vert_{y=0}^x \wedge \frac{\rd y - \rd b}{y-b} \Big\vert_{y=0}^a
\eeq
and we find, e.g., that the coefficient of $\rd x\wedge \rd a$ is given by
\beq 
\frac{1}{x-b} \cdot\frac{-1}{x-a}\ -\ \frac{1}{x-b}\cdot \frac{-1}{b-a}\ + \ \frac{1}{x-a}\cdot \frac{1}{a-b} = 0\,.
\eeq
The coefficients of  $\rd x\wedge \rd b$ and  $\rd b\wedge \rd a$ vanish in a similar way.

\medskip
{\bf Relationship to the symbol of ref.~\cite{Goncharov:2010jf}.} In ref.~\cite{Goncharov:2010jf}, the Goncharov, Spradlin, Vergu and Volovich use the differential equation for multiple polylogarithms recursively to arrive at the definition of a symbol. More precisely, if $F:\C^n\to\C$ denotes a complex valued function depending on $n$ complex variables $x_k$, $1\le k\le n$,  the authors of ref.~\cite{Goncharov:2010jf} define the ``symbol of the transcendental function'' $F$ in the following recursive way: if the total differential of $F$ can be expressed in the form
\beq
\rd F = \sum_i\,F_i\,\rdh\log R_i\,,
\eeq
where $F_i$ and $R_i$ are functions of the variables $x_k$, and $R_i$ are moreover rational functions, then the symbol of $F$ is defined recursively via
\beq
\cS(F) = \sum_i\,\cS(F_i)\otimes R_i\,.
\eeq
In the case where $F$ is a multiple polylogarithm, we can write down the differential of $F$ in an explicit form. For example, in the special case where all the arguments of the multiple polylogarithm are generic (i.e., they are mutually different and do not take particular values), we obtain~\cite{Goncharov:2001}
\beq\label{eq:MPL_diff_eq}
\rd G(a_{n-1},\ldots,a_{1};a_n) = \sum_{i=1}^{n-1}G(a_{n-1},\ldots,\hat{a}_i,\ldots,a_{1};a_n)\,\rdh \log\left({a_i-a_{i+1}\over a_i-a_{i-1}}\right)\,.
\eeq
The symbol of $G(a_1,\ldots,a_{n-1};a_n)$ is then defined in the form
\beq\label{eq:GSVV_def}
\cS(G(a_{n-1},\ldots,a_{1};a_n)) = \sum_{i=1}^{n-1}\cS(G(a_{n-1},\ldots,\hat{a}_i,\ldots,a_{1};a_n))\otimes\left({a_i-a_{i+1}\over a_i-a_{i-1}}\right)\,.
\eeq
The symbol we just obtained looks seemingly different from the definition we gave in eq.~\eqref{eq:symb_def}, which consists in summing over all possible maximal sets of arrows of the polygon $P(a_{1},\ldots,a_{n-1},a_n)$ associated to $G(a_{n-1},\ldots,a_{1};a_n)$. In the following we show that the two definitions are equivalent up to a rearrangement of the terms in the sum, and hence give rise to the same tensor. 

Let us consider the $n$-gon $P=P(a_1,\ldots,a_n)$ (i.e. with sides decorated by $a_i$, the last one $a_n$ decorating the root side). We will show that the symbol of $P$ satisfies the recursion~\eqref{eq:GSVV_def}. For simplicity, we will concentrate here on the case of generic decorations. 
Let $\Lambda_P$ be the set of all linear orders on the dual tree attached to any of the maximal sets of arrows of $P$.
Then there is an obvious bijection between the terms in the double sum in eq.~\eqref{eq:symb_def} and the elements in $\Lambda_P$.
We can partition $\Lambda_P$ by collecting all those linear orders into a subset which share the same ÒlastÓ 2-gon that decorates the last vertex of this linear order. This partitions $\Lambda_P$ into a priori $2n$ subsets, as those last vertices correspond precisely to the 2-gons that we can cut off from $P$.

Note that cutting off the last 2-gon in a linear order on a maximal dissection corresponds to contracting the associated edge in the dual tree. Note also that, clearly, the last vertex must be a \emph{leaf} of the (rooted) dual tree, and hence each last 2-gon necessarily cuts off two successive sides of $P$.

\emph{Remark}: For each $n$-gon $P$ with $n > 2$, the three 2-gons involving the root side of $P$ can never become the last one in any linear order in $\Lambda_P$ . More explicitly, these are the two 2-gons $\twogon{a_{n-1}}{a_{n}}$ and $\twogon{a_{1}}{a_{n}}$. The former can arise only by cutting off the root side, while the latter can arise both by cutting off the root side and by cutting off the first side and the first vertex.
As a consequence, $\Lambda_P$ partitions into only $2n-3$ \emph{non-empty} subsets (of same cardinality) of the above type.

In view of the above, it is clear that any such subset indexes exactly the linear orders $\Lambda_{\tilde P}$ on the (dual trees of the maximal dissections of the) subpolygon $\tilde P$ of $P$ which is obtained from $P$ by cutting off a fixed 2-gon, followed by contracting the dissecting arrow to a point.

There are typically two ways of cutting off a 2-gon in which such a subpolygon $\tilde P$ can occur: cutting off the 2-gons $\twogon{a_i}{a_{i\pm1}}$ ($i=2,\ldots , n - 1$) leaves complementary subpolygons $\tilde P_i^\pm$ which are identified with $\tilde P_i$ upon contraction of the dissecting arrow. Note that the two 2-gons will give terms of opposite parity, as precisely one of them will be a forward arrow.
The only exception arises from cutting off $\twogon{a_1}{a_2}$, which corresponds to a forward arrow, for which only one complementary subpolygon $\tilde P_1^+$ can occur.

In summary, we get:\\
{\bf Claim:} \emph{There is a bijection of sets}
\beq\label{lamb_bij}
\Lambda_P \stackrel{1:1}{\longleftrightarrow}\bigcup_{i=1}^{n-1}\left(\Lambda_{\tilde P_i^+}\times \Big\{\hskip -8pt\twogon{a_i}{a_{i+1}}\hskip -5pt\Big\}\right)
\cup
\bigcup_{i=2}^{n-1}\left(\Lambda_{\tilde P_i^-}\times \Big\{\hskip -8pt\twogon{a_i}{a_{i-1}}\hskip -5pt\Big\}\right)\,.
\eeq
\emph{Moreover, the sign of a maximal set of arrows of $\Lambda_P$ agrees with the sign of the corresponding maximal 
set of arrows in $\Lambda_{\tilde P_i^+}$ and is opposite to the sign of the corresponding maximal set of arrows of $\Lambda_{\tilde P_i^-}$.}

All we need to note here is that the above remark ensures that both $\Lambda_{\tilde P_n^\pm}$ and $\Lambda_{\tilde P_1^-}$ are empty, so are left out at the right hand side, and that cutting off a 2-gon of the form $\twogon{a_i}{a_{i+1}}$ corresponds to a forward arrow, hence contributes
a sign $+1$ to the maximal dissection, while $\twogon{a_i}{a_{i-1}}$ corresponds to a backward arrow, hence contributes a sign $-1$.

Due to the bijection~\eqref{lamb_bij}, we can rewrite eq.~\eqref{eq:symb_def} by first summing over all subpolygons $\tilde P_i^\pm$, followed by a sum over all possible elements in $\Lambda_{\tilde P_i^\pm}$. The inner sum then evaluates to the symbol of the subpolygon $\tilde P_i^\pm$, and we are left with
\beq\label{polygon_rec}
\cS(P) = \sum_{i=1}^{n-1}\cS(\tilde P_i^+)\otimes \mmu{\twogon{a_i}{a_{i+1}}} - \sum_{i=2}^{n-1}\cS(\tilde P_i^-)\otimes \mmu{\twogon{a_i}{a_{i-1}}}\,,
\eeq
where the relative minus sign arises because, as discussed above, $\Lambda_{\tilde P_i^\pm}$ contribute with opposite signs. After identification of $\tilde P_i^+$ and $\tilde P_i^-$, eq.~\eqref{polygon_rec} agrees with the recursion~\eqref{eq:GSVV_def} modulo the additivity of the symbol. In order to finish the proof, we need to show that also the bases of the recursions are the same. It is indeed easy to check by explicit computation that, e.g., the symbol of $G(a_2,a_1;a_3)$ obtained from the recursive definition~\eqref{eq:GSVV_def} agrees with the symbol obtained from our polygon construction, eq.~\eqref{eq:symb_def}.

We note that in ref.~\cite{Goncharov:2010jf} the $\rdh\log$ of a constant is put to 0. Although this seems rather natural (and turns out to be sufficient in several cases), we advocate to use a refined version of this (which is what is typically used when working with symbols in a number field):
for each element of a set of multiplicatively independent elements in a given number field one can choose a logarithm independently but then the logarithm of any product formed from those elements is determined.
For example, we will see in section~\ref{sec:HPLbasis} that in the context of harmonic polylogarithms the only constants that need to be treated in this fashion are powers of 2, and hence it is sufficient to think of ``2" as an irreducible element.
The reason for considering this refined version is that it is very helpful for recognising functions from which a given symbol might originate. In particular, it has proved to be very useful{, e.g.}, when ``recognizing" HPL's for keeping track of terms which come from a (shuffle) product of polylogarithms, see section~\ref{sec:HPLbasis}.

{While the symbol of a multiple polylogarithms obtained by considering the maximal set of arrows of the associated decorated polygon % to the multiple polylogarithm
 is equivalent to the symbol obtained from the recursive definition~\eqref{eq:GSVV_def}, we believe that both approaches have their virtues. While the latter %recursive definition~\eqref{eq:GSVV_def} 
might be easier to implement into a computer program in general, it is strictly speaking only valid in the case of generic arguments of the polylogarithms. Indeed, if the arguments are non-generic, we obtain divergences in the right-hand side of eq.~\eqref{eq:MPL_diff_eq}, e.g. when $a_i=a_{i+1}$ for some $i$. It is then in principle necessary to resort to a careful regularization to deal with the degenerate cases~\cite{Goncharov:2001}. The definition of the symbol based on the decorated polygons, being combinatorial in nature, avoids this problem by construction and allows to identify very easily the degenerate cases (they correspond, e.g., to arrows ending on a side whose decoration is zero), and to discard them from the start, avoiding in this way the need of regularization. Furthermore, as we will see in the next section, the polygon approach has naturally built in the refined `$d\log$-prescription', because {the} combinatorial nature of the construction does not make a distinction between constants as e.g. ``2'' for which one might be typically tempted to define $d\log 2=0$.}

\medskip
{\bf Symbols for classical polylogarithms.} The polygons attached to classical polylogarithms $Li_m(x) = -G(\underbrace{0,\dots,0}_{m-1 \text{ terms}},1;x)$, are given by decorations $x$ (for the first side) and 0 (for the remaining non-root sides) as well as 1 (for the root side). Their attached symbol consists of (the negative of) a {\em single} elementary tensor, in fact we have
 \beq
 \label{eq:symb_Li_m}
  \cS\big(Li_m(x)\big) = - \ \big( (1-x)\otimes \underbrace{x\otimes\cdots\otimes x}_{m-1\ {\mathrm factors}}\big)\,,
  \eeq
where we have $m$ factors (``weight" $m$) on the right hand side. (Note the parentheses which separate the coefficient, here $-1$, from the actual tensor, to avoid a misinterpretation as $(x-1)\otimes x\otimes\cdots\otimes x$.)

Such tensors have long been considered in connection with functional equations of polylogarithms---in fact, Zagier~\cite{Zagier-Texel, Zagier-Appendix} has given a criterion for such equations built on those tensors, which has been used (cf.~ref.~\cite{Gangl-Selecta}) to find the first non-trivial equations for $Li_6$ and $Li_7$ (beyond weight~7 none are known), and the corresponding expressions for multiple polylogarithms are important already in Goncharov's early work (e.g.~\cite{Goncharov-Advances}) where he generalises the underlying tensor algebra considerably.

\section{A simple example}
\label{sec:example}
\paragraph{The symbol attached to $G(-1,1;x)$.} In this section we illustrate the fact that the symbol calculus provides a convenient tool to detect functional equations among (multiple) polylogarithms,
on the example of $G(-1,1;x)$ (which happens to coincide with the HPL $- H(-1,1;x)$). Even though we could of course immediately apply eq.~(\ref{eq:GabzLi}) to express $G(-1,1;x)$ in terms of classical polylogarithms, we will derive a similar functional equation using the tensor calculus introduced in the previous section. The multiple polylogarithm $G(-1,1;x)$ is associated to a trigon,
\beq
G(-1,1;x)\quad \longleftrightarrow\quad P(1,-1,x) = \threegon{1}{-1}{x}\,.
\eeq
The dissection of the trigon can easily be translated into the tensor associated to the polylogarithm,
\begin{eqnarray*}
& \threegon{1}{-1}{x} &\\
&\downarrow&\\
&\threegonarrowb{1}{-1}{x} + \threegonarrowc{1}{-1}{x} - \threegonarrowa{1}{-1}{x}&\\
&\downarrow&\\
&\mmu{\twogon{-1}{x}}\otimes\mmu{\twogon{1}{-1}} + \mmu{\twogon{1}{x}}\otimes\mmu{\twogon{-1}{x}} - \mmu{\twogon{1}{x}}\otimes\mmu{\twogon{-1}{1}}&
\end{eqnarray*}
The last line allows to read off the symbol of $G(-1,1;x)$,
\beq\bsp\label{eq:G2_tensor}
\cS&(G(-1,1;x)) \\
&=\left(1-{x\over -1}\right)\otimes\left(1-{-1\over1}\right) + 
\left(1-{x\over 1}\right)\otimes\left(1-{x\over-1}\right)-
\left(1-{x\over 1}\right)\otimes\left(1-{1\over-1}\right)\\
&=(1+x)\otimes2  + (1-x)\otimes (1+x)- (1-x)\otimes 2\,.
\esp\eeq

{Before turning to the question of how to attach a function to this symbol, let us briefly comment on how this symbol could have been obtained by using the recursive definition~\eqref{eq:GSVV_def}. Using eq.~\eqref{eq:MPL_diff_eq}, we obtain
\beq\bsp
\rd G(-1,1;x) &\,= G(-1;x)\,\rdh\log\Big(1-(-1)\Big)+G(1;x)\,\rdh\log\left({(-1)-x\over(-1)-1}\right)\\
&\,=G(-1;x)\,\rdh\log2+G(1;x)\,\rdh\log(1+x)-G(1;x)\,\rdh\log2\,.
\esp\eeq
The three terms in the last line of this equation are in one-to-one correspondence with the three terms in the symbol in eq.~\eqref{eq:G2_tensor}. Note, however, that we had to treat all the arguments of the (three-variable) function $G({\scriptstyle\bullet},{\scriptstyle\bullet};{\scriptstyle\bullet})$ as generic, and to use the refined `$\rdh\log$-prescription', i.e. $\rdh\log2\neq0$. Putting to zero all the $\rdh\log2$ terms is equivalent to putting to zero all elementary tensors in the symbol where a factor inside a tensor product is constant~\cite{Goncharov:2010jf}. As we will see below, we prefer to keep these terms as they provide us with valuable information about the function that should be associated to the symbol.
}

\paragraph{Attaching a function to a symbol.} Since every multiple polylogarithm of weight two can be expressed as a combination of classical polylogarithms, we make the ansatz that $G(-1,1;x)$ can be written, up to an additive constant, in the form
\beq\label{eq:ansatz}
\sum_ic_i\,\textrm{Li}_2(f_i(x)) + \sum_{j,k}c_{jk}\,\ln(g_j(x))\,\ln(h_k(x))\,,
\eeq
such that the tensor associated to this expression corresponds to the tensor in eq.~(\ref{eq:G2_tensor}), where $c_i$ and $c_{jk}$ are rational numbers and $f_i$, $g_j$ and $h_k \in \Q(x)$ are rational functions. We subdivide this problem into smaller ones by postulating
that we can distinguish between the three different contributions in eq.~(\ref{eq:G2_tensor}). 
%We neglect the constant term in eq.~(\ref{eq:ansatz}) for the moment and only concentrate on the first two sums.
% The constant $c$ will be determined in the end by requiring the function to have the correct value in some specific point, e.g., in $x=0$. 
By using a procedure suggested in ref.~\cite{Goncharov:2010jf, Alday:2010jz} we can distinguish the
first sum from the second by projecting the respective symbols onto their symmetric or alternating parts: each term in the second sum will give zero contribution for the latter one, while each summand in the first sum will give a non-zero contribution. 
%The $\textrm{Li}_2$-terms can be distinguished from the products of logarithms by looking at the different symmetries of the tensor~(\ref{eq:G2_tensor})~\cite{Goncharov:2010jf, Alday:2010jz}. 
Indeed the tensor associated to a product of logarithms is totally symmetric, and hence its contribution to the tensor vanishes if we project onto the antisymmetric component of the tensor in eq.~(\ref{eq:G2_tensor}). 

\paragraph{Preparatory steps: decomposing tensors into symmetric and antisymmetric parts.} We recall that, for a vector space $V$ (over $\C$, say, or more generally over a field of characteristic $\neq 2$) there is a direct sum decomposition $V\otimes V =
(V\odot V) \oplus (V\wedge V)$ (other notations, as used e.g. in refs.~\cite{Zagier-Texel} or \cite{Goncharov-Advances}, are $V\odot V = {\textrm{ Sym}}^2(V)$, and $V\wedge V
=\bigwedge{}^{\hskip -2pt 2\hskip 2pt} V$), and $V\odot V$ is generated by $a\odot b$ (for some $a$, $b\in V$), while
$V\wedge V$ is generated by $a\wedge b$ where we introduce the following rather standard notations for symmetric and antisymmetric tensors,
\beq\bsp
a\odot b &\,\equiv a\otimes b + b\otimes a\,,\\
a\wedge b &\,\equiv a\otimes b - b \otimes a\,.
\esp\eeq
\paragraph{Back to the example.} With this notation, the decomposition of a generic elementary rank two tensor (i.e., $a\otimes b$
for some $a$, $b\in V$)
into its symmetric and antisymmetric components can be expressed as
\beq
a\otimes b = {1\over 2}\,(a\odot b) + {1\over2}\,(a\wedge b)\,.
\eeq
Concentrating solely on the antisymmetric component of eq.~(\ref{eq:G2_tensor}), and using the antisymmetry of the wedge product, which, e.g., induces $2\wedge2=0$, we obtain
\beq\bsp\label{eq:wedge_G2}
\,&(1+x)\wedge2\  -\  \,(1-x)\wedge 2 \ +\  \,(1-x)\wedge (1+x)\\
 &\,= 
\,{1-x\over2}\wedge{1+x\over2}= \,\left(1-{1+x\over2}\right)\wedge{1+x\over2}\,.
\esp\eeq
As the tensor associated to a product of logarithms does not have an antisymmetric component, eq.~(\ref{eq:wedge_G2}) suggests that it is  the antisymmetric part of the tensor associated to some sum of dilogarithms, and from eq.~(\ref{eq:symb_Li_m}), it is easily identified as the antisymmetric part of $\cS\left(-\textrm{Li}_2\left({1+x\over2}\right)\right)$. 

Having identified the dilogarithm contributions to $G(-1,1;x)$, we can proceed via a bootstrap procedure and subtract off this contribution, leaving only a totally symmetric tensor
\beq\bsp
\cS\left(G(-1,1;x) + \textrm{Li}_2\left({1+x\over2}\right)\right) &\,=  2\odot(1+x)  - {1\over2}\,(2\odot2)\\
&\, = \cS\left(  \ln2\,\ln(1+x) -{1\over2}\,\ln^22\right) \,.
\esp\eeq
\paragraph{Fixing the constant.} We have shown that the tensor associated to $G(-1,1;x)$ equals the tensor associated to the combination $-\textrm{Li}_2\left({1+x\over2}\right) + \ln2\,\ln(1+x) -{1\over2}\,\ln^22$. It would be premature, however, to conclude that both expressions are equal, but they are equal only up to an additive constant independent of $x$. Indeed, specializing to $x=0$, and using the fact that $G(-1,1;x=0)=0$ and $\textrm{Li}_2(1/2) = {\pi^2\over12}-{1\over2}\log^22$, we see that
\beq
\Bigg\{G(-1,1;x) -\left[-\textrm{Li}_2\left({1+x\over2}\right) + \ln2\,\ln(1+x) -{1\over2}\,\ln^22\right]\Bigg\}_{\big|x=0} = {\pi^2\over 12}\,.
\eeq
Thus, we obtain
\beq
G(-1,1;x) = -\textrm{Li}_2\left({1+x\over2}\right) + \ln2\,\ln(1+x) -{1\over2}\,\ln^22 + {\pi^2\over 12}\,.
\eeq
Note that this additive constant is not detected by the symbol, because 
\beq
\cS(\pi^2) = -\cS(\log^2(-1))=0\,.
\eeq

\section{Integrating symbols: an algorithmic approach}
\label{sec:higher_weights}
In the previous section we illustrated how the symbol calculus can be used to derive a functional equation among polylogarithms. While in that weight two example all the steps were easily carried out by hand, an algorithmic approach is desirable in more complicated cases.
In this section, we present our approach that often allows to integrate a symbol of a (transcendental) function, i.e., to find a function $F$, written as a linear combination of (products of) multiple polylogarithms, whose symbol matches a given tensor $S$, which in the following we always assume to satisfy the integrability condition~\eqref{eq:integrability}.

From the example of the previous section it should be clear that the main challenges to address when going to higher weight $w$ are 
\begin{enumerate}
\item choosing appropriate arguments of the desired function types (as a few examples of function types of weight four, we list $\textrm{Li}_4$, $\textrm{Li}_{2,2}$ or $\textrm{Li}_2 \log \log$) such that their symbols span the vector space in which the tensor $S$ lies;
\item finding the generalization of the decomposition of weight two tensors into symmetric and antisymmetric parts (indicated in the simple example of weight two in section~\ref{sec:example}) to higher weights. This problem was addressed up to weight four in refs.~\cite{Goncharov:2010jf, Alday:2010jz}. 
\end{enumerate}

%\medskip
Let us assume that we have a linear combination $S$ (with rational coefficients) of elementary tensors
where the factors in each elementary tensor 
 are rational functions of some variables $x_1,\ldots, x_n$. In the following we assume the tensor to be of ``pure weight'' $w$, i.e., each elementary tensor is assumed to have the same number of {factors.
Without loss of generality we can then assume that $S$ takes the form (all sums are assumed to be finite)
\beq\label{eq:symb_1}
S = \sum_{i_1,\ldots,i_w}c_{i_1,\ldots,i_w}\,\Big(R_{i_1}(x_1,\ldots,x_n)\otimes\ldots\otimes R_{i_w}(x_1,\ldots,x_n)\Big)\,,
\eeq
where $R_{i_\ell}(x_1,\ldots,x_n)$ ($1\leq i_\ell\leq m(S)$ for some $m(S)$ determined by the initial tensor $S$) are rational functions in the  variables $x_i$ and the $c_{i_1,\dots,i_w}$ are rational numbers. %The additivity of the tensor 
Distributivity (cf.~eqs.~(\ref{distr1}, \ref{distr2})) then implies that, without loss of generality, $S$ can be rewritten (with new constants $\tilde c_{j_1,\dots,j_w}\in \mathbb{Q}$) as
\beq
S = \sum_{j_1,\ldots,j_w}\tilde c_{j_1,\ldots,j_w}\,\pi_{j_1}\otimes\ldots\otimes \pi_{j_w}\,,
\eeq
where $\pi_{j}=\pi_{j}(x_1,\ldots,x_n)$ ($1\leq j\leq K$ for some $K$) are suitably chosen rational functions which are multiplicatively independent (i.e., there is no non-trivial relation of the form $\prod_{j=1}^K \pi_j^{r_j} = \pm1$, for $r_j\in\mathbb{Z}$). 
In practice, we achieve this by simply factorizing the rational functions $R_{i}(x_1,\ldots,x_n)$ in eq.~\eqref{eq:symb_1} into irreducible polynomials over $\mathbb{Q}$, say (i.e., polynomials in $\mathbb{Q}[x_1,\ldots,x_n]$ that cannot be written as the product of two non-constant polynomials in $\mathbb{Q}[x_1,\ldots,x_n]$). Let us denote the set of these polynomials by $P_S = \{\pi_1,\ldots,\pi_K\}$, which will constitute our building blocks in the following. The symbol can then be seen as an element of the tensor algebra over the $\Q$-vector space generated by the formal basis vectors in the set $P_S$ (more precisely, we should consider it as an element of the $w$th grading of the tensor algebra over the $\Z$-module 
$\langle \pm \prod_{j=1}^K \pi_j^{r_j}\mid r_j\in \Z\rangle$). Our goal is now to find a function, say $F$, that is a rational linear combination of (multiple) polylogarithms (and products thereof) whose arguments are rational functions in the $x_i$ such that $\cS(F) = S$. The procedure to achieve this proceeds in two steps: first we have to decide on the types of functions that should appear in $F$, and then we have to concoct suitable rational functions in the $x_i$ as arguments of these functions such that for some linear combination of these functions the resulting expression fulfills the condition $\cS(F) = S$.
Note that this latter step is not algorithmic in general, as it may involve finding arguments for the functions that have singularities \emph{outside} $P_S$. 

\subsection{Choosing the types of functions}
Our first goal is to construct a set of function types (our `basic types') out of which we can construct our candidate function $F$. 
This involves two steps, and we want both the functions and their arguments to be `as simple as possible', but we need to take into account that all the possible function types one can write down for a given weight are related by an abundance of functional equations. The main criterion we will use in the following is that a function type that can be written as a product of lower weight function types is `simpler' than a function type of \emph{pure weight/transcendentality} (i.e., a function that cannot be written as a sum of terms, each of which being a product of lower weights). Furthermore, we are guided by the conjecture (which the second author learned many years ago from Goncharov) that
a multiple polylogarithm $\textrm{Li}_{m_1,\ldots,m_k}$ with $m_j=1$ for some $j$ can be expressed in terms of multiple polylogarithms where no index is equal to 1.
This conjecture suggests to put a restriction on the function types of pure transcendentality that can appear for a given weight. Furthermore, the shuffle and stuffle relations provide us with further constraints. As an example, we can deduce from
\beq
\textrm{Li}_{m_1,m_2}(x,y) + \textrm{Li}_{m_2,m_1}(y,x) = \textrm{Li}_{m_1}(x) \,\textrm{Li}_{m_2}(y) - \textrm{Li}_{m_1+m_2}(x\,y)\,.
\eeq
that we can hence ignore $\textrm{Li}_{m_2,m_1}$  with $m_2<m_1$.
For low weights, the corresponding sets of (presumably independent) functions which are indecomposable, i.e.~cannot be written in terms of products of lower order functions, are given in Table~\ref{tab:pure_trans_basis}.

\begin{table}[!t]
\begin{center}
\begin{tabular}{|c|c|}
\hline
Weight & Basic function types of pure weight \\  
\hline
1 & $\ln x$\\
2 & $\textrm{Li}_2(x)$\\
3 & $\textrm{Li}_3(x)$\\
4 & $\textrm{Li}_4(x)$, $\textrm{Li}_{2,2}(x,y)$\\
5 & $\textrm{Li}_5(x)$, $\textrm{Li}_{2,3}(x,y)$\\
6 & $\textrm{Li}_6(x)$, $\textrm{Li}_{2,4}(x,y)$, $\textrm{Li}_{3,3}(x,y)$, $\textrm{Li}_{2,2,2}(x,y,z)$\\
\hline
\end{tabular}
\caption{\label{tab:pure_trans_basis}Indecomposable multiple polylogarithms of pure weight $\leq 6$.}
\end{center}
\end{table}

\subsection{Finding the arguments}
\label{sec:arguments}
Having at hand a suitable set to construct the basic function types from, we still need to find the {\em arguments} of these function types. In the context of particle physics it has proved helpful to use guidance from educated guesses, motivated by physical constraints (cf. refs.~\cite{Dixon:2011pw, Heslop:2011hv}), to construct the symbol and/or the functions expressing the desired physical quantities.
To see how %this construction works, 
one might proceed even without any such guidance, 
let us concentrate first on classical polylogarithms only. 

We start by defining, for $P_S=\{\pi_j\}_j$ as above, 
\beq
\overline{P}_S = P_S \cup P'_S\,,
\eeq
where $P'_S$ is the set of all prime factors that appear in $\pi_i\pm\pi_j$ and $1\pm\pi_i$, $\forall\pi_i, \pi_j\in P_S$. Let us denote the elements of $\overline{P}_S$ by $\overline{\pi}_i$.
Since $S$ is constructed out of the irreducible polynomials $\pi_i\in P_S\,\subset \overline{P}_S$, it is perhaps natural to hope that all arguments appearing in the polylogarithmic expressions needed for $S$ can be written in the form
\beq
R^\pm_{n_1,\ldots,n_k}(x_1,\ldots,x_n) = \pm\overline{\pi}_1^{n_1}(x_1,\ldots,x_n)\,\ldots\,\overline{\pi}_k^{n_k}(x_1,\ldots,x_n) \,,
\eeq
where $n_1,\ldots,n_k$ are integers. Let us denote the set of these functions $R^\pm_{n_1,\ldots,n_k}(x_1,\ldots,x_n)$ by $\cR_S$, i.e.~this is, up to sign, the multiplicative span of the $\bar\pi_j\in \overline{P}_S$. 
%Note that in practice it is often enough to consider only a subset of the polynomials in $\overline{P}_S$ to define the set $\cR_S$. 
Note that in practice it is often enough to consider $\cR_S$ to be the span of only a subset of the polynomials in $\overline{P}$. 
Finally, note that the set $\cR_S$ carries a group structure, given by the multiplication of rational functions.

\paragraph{Choosing arguments for classical polylogarithms.} However, not all of these functions are good candidates for arguments of polylogarithms. Indeed, if for example such a function appears as an argument of a classical polylogarithm, then by eq.~\eqref{eq:symb_Li_m} for $R=R^\pm_{n_1,\ldots,n_k}(x_1,\ldots,x_n)$ we can write
\beq\bsp
\cS(\textrm{Li}_n(R)) 
= -\,(1-R)\otimes\underbrace{R\otimes\cdots\otimes R}_{(n-1)\textrm{ times}}\,.
\esp\eeq
It is now easy to see that if we want this tensor to be an element of the tensor algebra of the vector space generated by the set $\overline{P}_S$, then we need to impose the constraint
\beq\label{eq:1_minus_R}
1-R\in \cR_S\,.
\eeq
Let us introduce the notation
\beq\label{eq:1_minus_R_upper_1}
\cR^{(1)}_S = \{R\in \cR_S \mid 1-R\in \cR_S\} \subset \cR_S\,.
\eeq
It follows that, for $R\in \cR^{(1)}_S$, the symbol of $\textrm{Li}_n(R)$ ($n\ge1$) is a linear combination of tensors of the form $\overline{\pi}_{\ell_1}\otimes\ldots\otimes\overline{\pi}_{\ell_n}$. Hence all the rational functions in the set $\cR^{(1)}_S$ are good candidates for arguments of the classical polylogarithms that can appear in our function $F$. Note that $\cR^{(1)}_S$ is no longer a group in general. It can however be given some more structure by considering the following action of the symmetric group $S_3$ on rational functions, defined for a rational function $R$ and rational functions $\sigma_i$ of one variable by
\begin{eqnarray}\label{eq:S3_action}
\sigma_1(R) = R\,, & \sigma_2(R) = 1-R\,, & \sigma_3(R) = 1/R\,,\nonumber\\
\sigma_4(R) = 1/(1-R)\,, & \sigma_5(R) = 1-1/R\,, & \sigma_6(R) = R/(R-1)\,.
\end{eqnarray}
Note that the $\sigma_j$ form a group (under composition of functions) isomorphic to the permutation group $S_3$ on three letters.
It is easy to check that $\cR_S^{(1)}$ is closed under this action of $S_3$. As $S_3$ is generated by the two elements $\sigma_2$ and $\sigma_3$, it is enough to check that $\cR_S^{(1)}$ is closed under these two maps. Closure under $\sigma_2$ is trivial by definition of $\cR_S^{(1)}$. To see that it is also closed under $\sigma_3$ we have to check that $\forall R\in \cR_S^{(1)}$, $1-\sigma_3(R)\in \cR_S$. Indeed, we have
\beq
1-\sigma_3(R) = 1-1/R = -(1-R)\,R^{-1} \in \cR_S\,,
\eeq
because of the group structure of $\cR_S$.

\paragraph{Choosing arguments for polylogarithms of depth $>1$.} So far we have only considered classical polylogarithms, but in general we should also be able to make a sensible ansatz for the arguments of multiple polylogarithms of depths greater than one. In the following, we find it more convenient to work with the functions $G_{m_1,\ldots,m_k}$ defined in eq.~\eqref{eq:Gm_def} rather than with the functions $\textrm{Li}_{m_1,\ldots,m_k}$. As the two function types are related via eq.~\eqref{eq:Gm_def}, one can easily convert from one representation to the other.

 Let us consider a multiple polylogarithm of depth two, say $G_{2,2}$. We are hence looking for a pair of rational functions $(R_1, R_2)\in\cR_S\times \cR_S$ that are good candidates for the arguments of $G_{2,2}$. The symbol of $G_{2,2}(R_1,R_2)$ is given by
\beq\bsp
\cS&(G_{2,2}(R_1,R_2)) = -\left(1-{1\over R_1}\right)\otimes R_1\uplus\left[\left(1-{R_1\over R_2}\right)\otimes {R_1\over R_2}\right]\\
&\,-\left(1-{1\over R_2}\right)\otimes R_2\uplus\left[\left(1-{R_2\over R_1}\right)\otimes {R_2\over R_1}\right]
-\left(1-{1\over R_2}\right)\otimes \left(1-{R_2\over R_1}\right)\otimes R_1\uplus R_2\\
&\,+2\left(1-{1\over R_2}\right)\otimes \left(1-{R_2\over R_1}\right)\otimes R_1\otimes R_1
-2\left(1-{1\over R_2}\right)\otimes \left(1-{1\over R_1}\right)\otimes R_1\otimes R_1\\
&\,-\left(1-{1\over R_2}\right)\otimes R_2\uplus\left[\left(1-{1\over R_1}\right)\otimes R_1\right]
\,,
\esp\eeq
recalling our notation for the shuffle products (see eq.~\eqref{eq:shuffle_binding}),
\beq\bsp
A\otimes B\otimes C\uplus D \,&= A\otimes B\otimes C\otimes D + A\otimes B\otimes D\otimes C\,,\\
A\otimes B\uplus( C\otimes D) \,&= A\otimes B\otimes C\otimes D + A\otimes C\otimes B\otimes D + A\otimes C\otimes D\otimes B\,.
\esp\eeq
Using the same reasoning as for classical polylogarithms, we see that the candidate arguments for multiple polylogarithms of depth two are pairs of rational functions from the set
\beq
\cR_S^{(2)} = \{(R_1,R_2)\in\cR_S^{(1)}\times\cR_S^{(1)}|R_1-R_2\in\cR_S\}\,.
\eeq
An important consequence is that no new rational functions are needed to construct the set $\cR_S^{(2)}$, but all the information is already contained in $\cR_S^{(1)}$. The new set $\cR_S^{(2)}$ then consists of pairs of elements of $\cR_S^{(1)}$, subject to the additional constraint that their difference must again be factorizable in terms of the same prime elements. Moreover, we saw that $\cR_S^{(1)}$ is endowed with a natural action of the group $S_3$, defined in eq.~\eqref{eq:S3_action}. It is hence natural to ask for non-trivial symmetry groups that leave the set $\cR_S^{(2)}$ invariant. First, it is easy to see that the defining condition for $\cR_S^{(2)}$ is invariant (up to an overall sign) under swapping the two entries of any given pair.
%an exchange of $R_1$ and $R_2$. 
Second, the action of $S_3$ defined in eq.~\eqref{eq:S3_action} induces a (simultaneous on both factors) action on $\cR_S^{(1)}\times\cR_S^{(1)}$, defined for $\sigma_i\in S_3$ by
\beq
(R_1, R_2) \stackrel{\sigma_i}{\longrightarrow} (\sigma_i(R_1),\sigma_i(R_2))\,.
\eeq
It is now easy to check that $\cR_S^{(2)}$ is closed under this action. To see this, it is enough to check that $\sigma_i(R_1) - \sigma_i(R_2)\in \cR_S$ for $i=1,2$ and whenever $(R_1,R_2)\in \cR_S^{(2)}$. Indeed, we have
\beq\bsp
&\sigma_2(R_1) - \sigma_2(R_2) = (1-R_1) - (1-R_2) = -(R_1-R_2)\in\cR_S\,,\\
&\sigma_3(R_1) - \sigma_3(R_2) = 1/R_1 - 1/R_2 = -(R_1-R_2)R_1^{-1}R_2^{-1}\in\cR_S\,,
\esp\eeq
where we used the fact that $R_1-R_2\in \cR_S$ and that $\cR_S$ is a multiplicative group. Combining this $S_3$ symmetry with the invariance under an exchange of arguments, here $R_1$ and $R_2$, we see that $\cR_S^{(2)}$ is closed under the action of the group $S_3\times S_2$, defined for $(\sigma,\rho)\in S_3\times S_2$ by
\beq
(R_1,R_2) \stackrel{(\sigma,\rho)}{\longrightarrow}\left(\sigma(R_{\rho(1)}), \sigma(R_{\rho(2)})\right)\,,
\eeq
i.e. the factor $S_2$ simply acts as a permutation of the entries.

The previous discussions for depths one and two readily generalize to higher depth. Our candidate arguments for the multiple polylogarithms of depth $k$ are $k$-tuples of rational functions from the set
\beq\label{eq:RTk}
\cR_S^{(k)} = \{(R_1,\ldots,R_k)\in\cR_S^{(1)}\times\ldots\times\cR_S^{(1)} \mid R_i-R_j\in\cR_S, 1\le i<j\le k\}\,,
\eeq
and using exactly the same argument as in the depth two case, we see that $\cR_S^{(k)}$ can be equipped with an action of the group $S_3\times S_k$, acting on $(R_1,\ldots,R_k)\in\cR_S^{(k)}$ via
\beq
(R_1,\ldots,R_k) \stackrel{(\sigma,\rho)}{\longrightarrow}\left(\sigma(R_{\rho(1)}), \ldots,\sigma(R_{\rho(k)})\right)\,,
\eeq
i.e. the factor $S_k$ simply acts as a permutation of the entries.

\subsection{Integrating the symbol (1)}
\label{sec:integ_symb_1}
\def \symb{ S}
We now turn to the problem of integrating the tensor $S$ that satisfies the integrability condition~\eqref{eq:integrability}.
In practice, such tensors could come from computing a Feynman integral in terms of multiple polylogarithms, or by computing its symbol by other means~\cite{CaronHuot:2011ky,Dixon:2011pw, Heslop:2011hv,Gaiotto:2011dt,Dixon:2011ng,DelDuca:2011wh}.
Our goal is to find a function $F$, more precisely a linear combination of (multiple) polylogarithms, such that $\cS(F) = S$.
The considerations of the previous section %allow us 
suggest writing down an ansatz for the set $\Phi$ of {\em functions} (i.e., function types, together with rational functions as arguments) to express $F$ in.  
We assume the ``folklore conjecture" that any functional equation of MPL's involving different weights should actually split into functional equations of pure weight, so we can assume $S$ to be of pure weight. It is convenient to 
partition $\Phi$ into subsets $\Phi^{(w)}$ according to the weight $w$, say
$\Phi^{(w)} = \{b_i^{(w)}\}_i$. 
Then we want to find rational coefficients $c_{i_1\ldots i_\ell}$ such that
\beq\bsp\label{eq:S_Tensor}
S &\,= \sum_{i}c_i\,\cS\left(b_i^{(w)}\right) + \sum_{\stackrel{i_1,i_2}{w_1+w_2=w}}c_{i_1i_2}\,\cS\left(b_{i_1}^{(w_1)}\,b_{i_2}^{(w_2)}\right) + \ldots + \sum_{i_1,\ldots,i_w}c_{i_1\ldots i_w}\,\cS\left(b_{i_1}^{(1)}\ldots b_{i_w}^{(1)}\right)
\esp\eeq
where we can assume that $w_i\geq w_{i+1}$ for all $i$.
In view of property \eqref{producttoshuffle}, each term on the right hand side can be written as a shuffle of terms $\cS\left(b_i^{(w)}\right)$.
All the terms on both sides of this equation are elements of the grade $w$ part of the tensor algebra over the vector space spanned by basis vectors labeled by the elements in $\overline{P}_S$. We know from linear algebra that a straightforward basis for the tensor space of weight $w$ tensors over the vector space spanned by $\overline{P}_S$  is given by $\{\bar\pi_{i_1}\otimes\ldots\otimes \bar\pi_{i_w} | \bar\pi_{i_\ell}\in \overline{P}_S\}$. 
At this stage we have hence mapped the problem of finding a function $F$ satisfying $\cS(F)=S$ into a problem of linear algebra, more precisely the problem of finding the coefficients $c_{i_1\ldots i_\ell}$ such that eq.~\eqref{eq:S_Tensor} is true. As we know a basis of the tensor algebra, we can just compute the coefficients by extracting and comparing the coefficient of each basis vector on either side of the equation and solve the ensuing linear system. Note that the function $F$ obtained in this way is not unique. Indeed, the map $\cS$ is non-injective and hence it would be possible to find a different function $F'$ such that $F-F'$ in the kernel of $\cS$. This issue will be addressed in section~\ref{sec:kernel}.

\subsection{A set of projectors}
Even though we have solved the problem of integrating the symbol in principle and we have reduced it to a linear algebra problem, the linear system one has to solve can be quite large. 
It is therefore preferable to break it down into smaller problems, for example by introducing a suitable filtration on the target space\footnote{A filtration of a vector space $V$ is a sequence $\{V_i\}_{1\le i\le n}$ of subspaces of $V$ such that $V_1\subset V_2\subset\ldots\subset V_n=V$\,.} which allows to successively solve the problem for the filtration pieces.
Such a filtration would allow to separate the different contributions in eq.~\eqref{eq:S_Tensor}: to get the ball rolling, we would like to solve for the coefficients of the functions $b_i^{(w)}$ without having to care about the product terms. This can be achieved by introducing a projector that sends to zero exactly all the product terms.
\begin{defi}\label{defi:Pi}
Let $V$ be a vector space. 
We define linear operators $\Pi_w$ acting on elementary tensors of lengths $w\ge1$ by  $\Pi_1 = \textrm{id}$ and for $w\ge 2$ by
\beq\bsp
\Pi_w&(a_1\otimes\ldots\otimes a_w)\\ &\, = {w-1\over w}\,\left[\Pi_{w-1}(a_1\otimes\ldots\otimes a_{w-1})\otimes a_w - \Pi_{w-1}(a_2\otimes\ldots\otimes a_{w})\otimes a_1\right]\,.
\esp\eeq
\end{defi}
Similar sets of operators acting on shuffle algebras have already been defined in refs.~\cite{Ree:1958,Griffing:1995}, differing from our definition only by an overall normalization (F.~Brown, in a text  in preparation on the representation theory of polylogarithms, uses essentially the same operators, referring to ref.~\cite{Bourbaki}, II \S2, Prop.~7a). More precisely, the operators in refs.~\cite{Ree:1958,Griffing:1995} are defined by $\rho_1=\textrm{id}$, and for $w\ge2$ by
\beq
\rho_w(a_1\otimes\ldots\otimes a_w) = \rho_{w-1}(a_1\otimes\ldots\otimes a_{w-1})\otimes a_w - \rho_{w-1}(a_2\otimes\ldots\otimes a_{w})\otimes a_1\,.
\eeq
The exact correspondence between $\Pi_w$ and $\rho_w$ is simply, for all $w\ge1$,
\beq
\Pi_w = {1\over w}\rho_w\,.
\eeq
It follows from refs.~\cite{Ree:1958,Griffing:1995} that the kernel of $\rho_w$ corresponds exactly to the ideal\footnote{An ideal in a commutative algebra $\cA$ is an additive subgroup $\cI$ such that $\forall a\in \cA$ and $\forall b\in \cI$, one has $ab\in \cI$.} generated by all shuffle products. Since $\Pi_w$ and $\rho_w$ only differ by an overall normalization, we immediately arrive at the following
\begin{proposition}
The kernel of $\Pi_w$ equals the ideal generated by all shuffle products, i.e., for every element $\xi$ in a shuffle algebra, $\Pi_w(\xi)=0$ if and only if $\xi$ can be written as a linear combination of shuffle products.
\end{proposition}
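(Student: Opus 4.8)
The plan is to leverage the identity $\Pi_w = \tfrac{1}{w}\,\rho_w$ recorded immediately above the statement, which turns the proposition into a one-line consequence of the known description of $\ker\rho_w$ from refs.~\cite{Ree:1958,Griffing:1995}. The genuine mathematical content lives entirely in that imported combinatorial fact about $\rho_w$; the proposition itself is essentially a corollary obtained by rescaling.

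First I would observe that for every weight $w\geq 1$ the scalar $\tfrac1w$ is nonzero, so multiplication by it is an invertible linear map. Consequently, acting on the weight-$w$ graded component of the shuffle algebra, the operators $\Pi_w$ and $\rho_w$ annihilate exactly the same elements: for any $\xi$ one has $\Pi_w(\xi)=\tfrac1w\,\rho_w(\xi)$, so $\Pi_w(\xi)=0$ if and only if $\rho_w(\xi)=0$. Hence $\ker\Pi_w=\ker\rho_w$ as subspaces. I would then invoke the cited result that $\ker\rho_w$ is precisely the ideal generated by all shuffle products (restricted to weight $w$). Combining the two steps yields that $\Pi_w(\xi)=0$ if and only if $\xi$ is a linear combination of shuffle products, which is exactly the assertion.

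The hard part, in the sense of real mathematical depth, is contained wholly in the statement about $\rho_w$, not in the reduction. If one instead wanted a self-contained argument, the main obstacle would be to prove that statement directly. The approach there is to recognise $\rho_w$ as a normalisation of the Dynkin-type (first Eulerian) projector on the graded-connected shuffle Hopf algebra: for instance at weight two, $\rho_2(a_1\otimes a_2)=a_1\otimes a_2 - a_2\otimes a_1$ is antisymmetrisation, whose kernel on rank-two tensors is the symmetric part, i.e.\ the span of shuffle products $a_1\uplus a_2$. One then uses that over $\Q$ this Hopf algebra is free commutative on Lyndon words, so that its indecomposables are identified with its primitives; under this identification the image of $\rho_w$ is the primitive (Lie) part and its kernel is exactly the space of decomposables, which is the shuffle ideal.

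Since the paper is content to cite \cite{Ree:1958,Griffing:1995} for the $\rho_w$ statement, none of that work is needed here, and the proof of the proposition reduces to the trivial observation that scaling by the nonzero constant $1/w$ preserves the kernel. I therefore expect the entire proof to occupy only a sentence or two, with the only care being to note that $\Pi_w$ and $\rho_w$ act on the same graded piece so that ``having the same kernel'' is meaningful term by term.
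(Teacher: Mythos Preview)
Your proposal is correct and matches the paper's own treatment exactly: the paper does not give a separate proof but simply notes that $\Pi_w=\tfrac{1}{w}\rho_w$, cites \cite{Ree:1958,Griffing:1995} for the description of $\ker\rho_w$ as the shuffle ideal, and observes that the nonzero rescaling preserves the kernel. Your additional commentary on what a self-contained argument would require goes beyond what the paper provides, but the core reduction is identical.
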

{In other words, the projectors $\Pi_w$ are by construction such that they annihilate precisely all shuffles. Conversely, if $\Pi_w$ applied to some tensor $\xi$ does not vanish, then it is not possible to express $\xi$ as a linear combination of shuffle products.}

The reason for the normalization factor in Definition~\ref{defi:Pi} is that it makes $\Pi_w$ idempotent, in other words, $\Pi_w$ is a projector.
\begin{proposition} For any  $w\in\mathbb{N}\,,$
$\Pi_w$ is idempotent, and hence a projector, i.e.
\beq
\Pi_w^2 = \Pi_w\,.
\eeq
\end{proposition}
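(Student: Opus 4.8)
The plan is to reduce the claim to an eigenvalue identity for the unnormalised operators $\rho_w = w\,\Pi_w$ and then to recognise these as adjoints of the classical Dynkin map. Since $\Pi_w = \tfrac1w\rho_w$, idempotency $\Pi_w^2=\Pi_w$ is equivalent to
\beq
\rho_w^2 = w\,\rho_w\,.
\eeq
First I would rewrite the defining recursion of $\rho_w$ in operator form. Writing $c_w$ for the cyclic shift $a_1\otimes\dots\otimes a_w\mapsto a_2\otimes\dots\otimes a_w\otimes a_1$, and letting $\rho_{w-1}\otimes\mathrm{id}$ act on the first $w-1$ tensor factors, the recursion for $\rho_w$ becomes $\rho_w = (\rho_{w-1}\otimes\mathrm{id})\circ(\mathrm{id}-c_w)$, as one checks directly on a pure tensor. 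In particular each $\rho_w$ is the image of an element of the group algebra $\mathbb{Q}[S_w]$ under the place-permutation action $\sigma\cdot(a_1\otimes\dots\otimes a_w)=a_{\sigma^{-1}(1)}\otimes\dots\otimes a_{\sigma^{-1}(w)}$, so that any identity valid in $\mathbb{Q}[S_w]$ descends to an operator identity on $V^{\otimes w}$ for every $V$.

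The key step is to identify $\rho_w$ with the adjoint of the Dynkin left-bracketing operator. Let $\theta_w(a_1\otimes\dots\otimes a_w)=[[\dots[a_1,a_2],a_3]\dots,a_w]$, the fully left-nested bracket expanded using the concatenation product; it satisfies the analogous recursion $\theta_w=(\mathrm{id}-\gamma_w)\circ(\theta_{w-1}\otimes\mathrm{id})$, where $\gamma_w=c_w^{-1}$ is the opposite cyclic shift. Taking adjoints with respect to the inner product for which pure tensors are orthonormal sends the operator of a permutation $\sigma$ to that of $\sigma^{-1}$, i.e. it is the anti-automorphism $*\colon\sigma\mapsto\sigma^{-1}$ of $\mathbb{Q}[S_w]$, which reverses products. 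Applying $*$ to the $\theta_w$-recursion therefore yields $\theta_w^{*}=(\theta_{w-1}^{*}\otimes\mathrm{id})(\mathrm{id}-\gamma_w^{-1})=(\rho_{w-1}\otimes\mathrm{id})(\mathrm{id}-c_w)$, which is exactly the recursion for $\rho_w$; an induction on $w$ (with base case $\rho_1=\theta_1^{*}=\mathrm{id}$) then gives the identification $\rho_w=\theta_w^{*}$.

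Finally I would invoke the Dynkin--Specht--Wever theorem, which states that $\theta_w$ maps $V^{\otimes w}$ into the free Lie algebra $L_w$ and acts there as multiplication by $w$; equivalently $\tfrac1w\theta_w$ is idempotent and $\theta_w^2=w\theta_w$ holds already in $\mathbb{Q}[S_w]$ (cf. the Bourbaki reference cited above). Since the adjoint of an idempotent is again idempotent, $\Pi_w=\tfrac1w\theta_w^{*}$ is a projector; explicitly, applying $*$ to $\theta_w^2=w\theta_w$ gives $(\theta_w^{*})^2=w\,\theta_w^{*}$, that is $\rho_w^2=w\,\rho_w$, which is the desired statement. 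The \emph{main obstacle} is the middle step: getting the operator form of the recursion exactly right---in particular tracking which cyclic shift occurs and in which order the factors compose---so that the adjoint of Dynkin's recursion matches that of $\rho_w$ on the nose; once $\rho_w=\theta_w^{*}$ is in hand, Dynkin--Specht--Wever finishes the proof at once. A self-contained alternative bypasses Dynkin's theorem by using the Proposition just proved: as $\ker\Pi_w$ equals the shuffle ideal, $\Pi_w^2=\Pi_w$ is equivalent to the congruence $\rho_w\xi\equiv w\,\xi$ modulo shuffle products for all $\xi$, which one can try to establish by induction via the recursion; this route is more laborious, however, since concatenating a single letter onto a shuffle product need not return an element of the shuffle ideal.
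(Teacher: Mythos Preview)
Your argument is correct, but it follows a genuinely different route from the paper. The paper proceeds by quoting the Ree/Griffing identity
\[
w\,(a_1\otimes\ldots\otimes a_w)=\sum_{k=0}^{w-1}(a_1\otimes\ldots\otimes a_k)\uplus\rho_{w-k}(a_{k+1}\otimes\ldots\otimes a_w)
\]
and then simply applying $\rho_w$ to both sides: by Proposition~1 (the kernel of $\rho_w$ is the shuffle ideal) every term with $k\ge1$ dies, leaving $w\rho_w=\rho_w^2$. Your approach instead recognises $\rho_w$ as the adjoint in $\mathbb{Q}[S_w]$ of the Dynkin left-bracketing map $\theta_w$ and transports the Dynkin--Specht--Wever relation $\theta_w^2=w\,\theta_w$ across the anti-automorphism $\sigma\mapsto\sigma^{-1}$. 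The paper's proof is shorter and more self-contained, since it only needs Proposition~1 plus one cited formula, and it avoids invoking the DSW theorem as a black box. Your route, on the other hand, is more structural: the identification $\rho_w=\theta_w^{*}$ not only yields idempotency but also pins down the image of $\Pi_w$ as the space dual to the Lie elements under the tensor pairing, which explains conceptually why the projector complements the shuffle ideal. Your operator-form check of both recursions (the placement of $c_w$ versus $\gamma_w=c_w^{-1}$ and the order of composition) is accurate, and the induction base $\rho_1=\theta_1^*=\mathrm{id}$ is trivially satisfied, so the identification $\rho_w=\theta_w^{*}$ goes through as stated.
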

\begin{proof}
In ref.~\cite{Ree:1958}, Lemma~1.2, and also in eq.~(2) in ref.~\cite{Griffing:1995},  it was shown that $\rho_w$ satisfies the identity
\beq
w\,(a_1\otimes\ldots\otimes a_w) = \sum_{k=0}^{w-1}(a_1\otimes\ldots\otimes a_k)\uplus\rho_{w-k}(a_{k+1}\otimes\ldots\otimes a_w)\,.
\eeq
We now act on both sides of this equation with $\rho_w$. Since $\rho_w$ annihilates all shuffles, all the terms on the right-hand side vanish, except for $k=0$. Hence we obtain
\beq
w\,\rho_w(a_1\otimes\ldots\otimes a_w) = \rho^2_w(a_1\otimes\ldots\otimes a_w)\,,
\eeq
and so $\Pi_w^2=\Pi_w$, after dividing both sides by $w^2$.
\end{proof}

\subsection{Integrating the symbol (2)}
The projectors defined in the previous section can be used to improve the chances of successful integration of the symbol described in section~\ref{sec:integ_symb_1}.
As the kernel of $\Pi_w$ corresponds to all possible linear combinations of shuffles, its effect on eq.~\eqref{eq:S_Tensor} is to remove all product terms, i.e.,
\beq
\Pi_wS = \sum_{i}c_i\,\Pi_w\cS\left(b_i^{(w)}\right)\,,
\eeq
and so we can solve for {\em fewer} coefficients $c_i$ without having to worry about the product terms. 

\def \la{\lambda}
\def \cF{{\mathcal F}}
Having found the coefficients $c_i$, we are left with the determination of the coefficients of the product terms.
In order to proceed via induction, we determine the behaviour of shuffles under tensor products of projectors (which are themselves projectors). We first study the instructive case of a projector $\Pi_\ell\otimes \Pi_{\ell'}$ which we apply to a shuffle 
\beq \label{simpleshuffle}
(a_1\otimes \dots\otimes a_{k})\sha (b_{1}\otimes \dots\otimes b_{k'})\,,
\eeq
where $k+k'=\ell+\ell' = w$ (still $w$ denoting the weight), with $k\geq k'\geq 1$, $\ell\geq \ell'\geq 1$ and $k\neq \ell$. We can rewrite the shuffle in a form more suitable to applying $\Pi_\ell\otimes \Pi_{\ell'}$ by regrouping the
sum \eqref{simpleshuffle} using deconcatenation (i.e.~splitting into a left hand part and a right hand part)
of the sequence $(1,\dots,k)$ into two blocks $(1,\dots, k_1)$ and $(k_1+1,\dots,k_1+k_2)$, giving
\beq 
\sum_{k_1=1}^{\min(k,\ell)} \big((a_1\otimes\dots\otimes a_{k_1})\sha (b_1\otimes \dots\otimes b_{\ell-k_1})\big) \,\otimes \, \big((a_{k_1+1}\otimes \dots \otimes a_k)\sha (b_{\ell-k_1+1}\otimes \dots \otimes b_{k'})\big)\,.
\eeq
Applying $\Pi_\ell\otimes \Pi_{\ell'}$ to this sum will annihilate each of the $\min(k,\ell)$ summands individually,
in fact it will annihilate the left hand factor except (possibly) for  $k_1=\ell$, since in the other cases there is a proper shuffle on the left which will be mapped to zero by $\Pi_\ell$ already, by Proposition 1.
If the remaining case occurs, the corresponding right hand factor is mapped to zero by $\Pi_{\ell'}$ instead.
For example, if $(k_1,k_2)=(3,3)$ and $(\ell_1,\ell_2)=(4,2)$, we rewrite 
\beq\bsp
(a_1\otimes  a_2 \otimes a_3) \sha (b_1\otimes b_2\otimes b_3) =\,& 
\big( a_1\sha (b_1\otimes b_2\otimes b_3) \big) \otimes  (a_2 \otimes a_3)\\
+\,& \big((a_1\otimes  a_2) \sha (b_1\otimes b_2)\big) \ \otimes\  (a_3\sha b_3) \\
+\,&\big((a_1\otimes  a_2 \otimes a_3) \sha (b_1)\big) \ \otimes\ \big(b_2\otimes b_3\big) \,,
\esp\eeq
where  already the factor $\Pi_4$ of $\Pi_4\otimes \Pi_2$ annihilates the left part of each of the terms.

More generally, for a partition $\la=(\la_1,\dots,\la_r)$ of $w$, we call the {\em $\la$-shuffle} of a $w$-fold tensor $a_1\otimes\dots\otimes a_w$ the product
\beq
(a_1\otimes\dots\otimes a_{\la_1})\sha (a_{\la_1+1}\otimes\dots\otimes a_{\la_1+\la_2})\sha \dots \sha(a_{\la_1+\dots+\la_{r-1}+1}\otimes\dots\otimes a_w)
\eeq
and given any partition $\la'=(\la'_1,\dots,\la'_s)$ with $s<r$, the projector  $\Pi_{\la'}$ will vanish on a $\lambda$-shuffle, as we can again regroup its terms according to $\la'$ and find for each so combined summand at least one projector factor
$\Pi_{\la'_j}$ which vanishes on the corresponding part. This sketches a proof of

%sum $\sum_{\sigma\in Sh(\la_1,\dots,\la_r)} a_{\sigma(1)}\otimes\dots \otimes a_{\sigma(w)}$. So if $\la=(\la_1,\la_2)$ is a partition into two parts, the
%$la$-shuffle of $a_1\otimes \dots \otimes a_{\la_1+\la_2}$ is the usual shuffle $(a_1\otimes \dots\otimes a_{\la_1})\sha (a_{\la_1+1}\otimes \dots\otimes a_{\la_1+\la_2})$. More generally, if $\a$ has $r$ parts, the $\la$-shuffle of $a_1\otimes\dots\otimes a_w$ is the shuffle of $r$ elementary tensors of respective lengths $\la_i$.
%Then we have}

%\medskip
%\noindent
\begin{proposition} For two different non-increasing partitions $\la$ and $\la'$ of $w$, of length $\ell(\la)$ and $\ell(\la')$, respectively, we have that a $\la$-shuffle vanishes under $\Pi_{\la'}$  whenever $\ell(\la)\geq \ell(\la')$.
\end{proposition}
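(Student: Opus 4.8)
The plan is to compute $\Pi_{\lambda'}$ applied to a $\lambda$-shuffle by expanding the shuffle according to the block structure of $\lambda'$, and to show that each resulting elementary summand is annihilated by at least one of the factors $\Pi_{\lambda'_j}$ making up $\Pi_{\lambda'}=\Pi_{\lambda'_1}\otimes\cdots\otimes\Pi_{\lambda'_s}$. The tool that makes this regrouping precise is the Hopf-algebra structure on the shuffle algebra $T(V)$: the deconcatenation coproduct $\Delta$ is an algebra homomorphism for $\sha$, and hence so is its iterate $\Delta^{(s-1)}\colon T(V)\to T(V)^{\otimes s}$. On the weight-$w$ graded piece, $\Pi_{\lambda'}$ is precisely the composite of $\Delta^{(s-1)}$ (restricted to the multidegree $(\lambda'_1,\dots,\lambda'_s)$ component, which is the fixed-position splitting into blocks of those exact sizes) followed by $\Pi_{\lambda'_1}\otimes\cdots\otimes\Pi_{\lambda'_s}$.

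First I would write the $\lambda$-shuffle as $X=X_1\sha\cdots\sha X_r$, where $X_i$ is the $i$-th block, of length $\lambda_i$. Applying $\Delta^{(s-1)}$ and using its multiplicativity for $\sha$ gives $\Delta^{(s-1)}(X)=\Delta^{(s-1)}(X_1)\sha\cdots\sha\Delta^{(s-1)}(X_r)$, the shuffle being taken slot by slot in $T(V)^{\otimes s}$. Expanding each $\Delta^{(s-1)}(X_i)$ as the sum over its $s$-fold consecutive (possibly empty) splittings and restricting to multidegree $(\lambda'_1,\dots,\lambda'_s)$, the surviving terms are indexed by non-negative integer matrices $C=(c_{ij})_{1\le i\le r,\,1\le j\le s}$ with row sums $\lambda_i$ and column sums $\lambda'_j$ (a transportation table): here $c_{ij}$ records how many letters of block $X_i$ land in slot $j$. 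The content of slot $j$ is then the shuffle of the sub-blocks of lengths $c_{1j},\dots,c_{rj}$ coming from the respective blocks.

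The key step is the vanishing criterion. If some column $j$ of $C$ has two or more nonzero entries, then slot $j$ is a genuine shuffle product of at least two nonempty tensors, hence lies in the shuffle ideal and is killed by $\Pi_{\lambda'_j}$ (Proposition~1); so the whole summand maps to zero. It therefore suffices to rule out matrices $C$ all of whose columns have at most one nonzero entry. Such a $C$ has at most $s$ nonzero entries, whereas the positivity of every row sum $\lambda_i\ge1$ forces at least one nonzero entry per row, hence at least $r$ in total; thus $r\le s$. When $\ell(\lambda)=r>s=\ell(\lambda')$ this already contradicts, so no surviving summand exists and $\Pi_{\lambda'}(X)=0$ (this recovers the case sketched in the text). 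When $r=s$, the same count forces exactly one nonzero entry in each row and each column, i.e.\ a permutation pattern with $c_{i,\sigma(i)}=\lambda_i=\lambda'_{\sigma(i)}$; this says the multisets of parts of $\lambda$ and $\lambda'$ coincide, and since both are written in non-increasing order it forces $\lambda=\lambda'$, contrary to the hypothesis $\lambda\neq\lambda'$. Hence in both cases every summand vanishes, giving $\Pi_{\lambda'}(X)=0$.

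The main obstacle, and the point requiring care, is the bookkeeping that legitimises the regrouping: one must verify that $\Delta^{(s-1)}$ is multiplicative for $\sha$ (so that the shuffle genuinely distributes slot by slot and the content of each slot is exactly a shuffle of sub-blocks), and that the fixed-position block decomposition defining $\Pi_{\lambda'}$ agrees with the multidegree-$(\lambda')$ component of $\Delta^{(s-1)}$. I would isolate the multiplicativity as a short lemma, standard for the shuffle Hopf algebra, after which the argument is purely combinatorial; the counting-plus-permutation step then closes the strict-inequality case and the new equal-length case uniformly.
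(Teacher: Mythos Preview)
Your argument is correct and follows the same underlying idea as the paper's sketch: regroup the $\lambda$-shuffle according to the block structure of $\lambda'$, index the resulting pieces by transportation tables (matrices $C$ with prescribed row and column sums), and observe that any column of $C$ with at least two nonzero entries produces a genuine shuffle in the corresponding slot, hence is annihilated by the factor $\Pi_{\lambda'_j}$ via Proposition~1.

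Where your write-up goes beyond the paper's sketch is in two respects. First, you formalise the regrouping through the Hopf-algebra structure (multiplicativity of the deconcatenation coproduct for~$\sha$), which cleanly justifies why the content of each slot is precisely a shuffle of sub-blocks; the paper does this by hand in the two-factor case and then asserts the general pattern. Second, and more substantively, the paper's sketch preceding the proposition only treats the strict case $\ell(\lambda')<\ell(\lambda)$ explicitly, whereas your pigeonhole-plus-permutation argument handles the equal-length case $r=s$ as well, forcing $\lambda=\lambda'$ and thereby closing the gap between the sketch and the full statement of the proposition. This is exactly the argument the paper's authors had in mind (a version of it appears in a commented-out passage), so your approach is not genuinely different, but it is more complete and more carefully grounded.
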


\begin{comment}
%\medskip\noindent
\tg{ {\bf Proof (sketch):} Suppose first  $\ell(\la) = \ell(\la')$. We can regroup the shuffle terms of a $\la$-shuffle into  tensors that are grouped according to the partition $\la'$, having the same number of terms from $\la$ in each partition piece; those groups can be indexed by all $\ell(\la')\times \ell(\la)$-matrices with non-negative integer entries, whose rows add up to the individual $\la_i$ while the columns add up to the $\la_j'$. Since in at least one of the columns there are two non-zero entries (by the assumption on the partitions being non-increasing), this implies that the corresponding group of terms can be written with at least one shuffle involved {\em within} some partition piece, hence the whole group is annihilated by $\Pi_\la$. TBC}
\end{comment}

%\medskip
This proposition suggests to define a sequence of subspaces, each of which contains the next one (hence forming a descending filtration).
For this, we consider the standard lexicographic order $\succ$ on non-increasing partitions of $w$ given by (denoting $\lfloor x\rfloor$ the largest integer $\leq x$)
 \beq\bsp
 (w)\succ(w-1,1)\succ
(w-2,2)\succ\dots&\,\succ(w-\lfloor\frac w 2\rfloor ,\lfloor\frac w 2\rfloor)\succ (w-2,1,1)\succ\\
&\,\succ\dots\succ (2,\underbrace{1,\dots,1}_{w-2 \text{ slots}}\succ (\underbrace{1,\dots,1}_{w \text{ slots}})\,.
\esp\eeq
As an example, in weight five we have the following non-decreasing partitions, ordered as
\beq
(5) \succ (4,1) \succ (3,2) \succ (3,1,1) \succ (2,2,1) \succ (2,1,1,1) \succ (1,1,1,1,1)\,.
\eeq

From these, we form the descending filtration alluded to above by taking the span of all shuffles of partition types greater than a given type $\la$, say, as
\beq
\cF_\la = \langle \la'-\text{shuffles in $V$ for all }\la\succ \la', \ \la\neq \la' \rangle\,.
\eeq
Hence $\cF_{(1,\dots,1)}$ is the zero space, $\cF_{(2,1,\dots,1)}$ is generated by all $(1,\dots,1)$-shuffles (i.e. fully symmetric tensors), $\cF_{(3,1,\dots,1)}$ is generated by all $(1,\dots,1)$-shuffles and $(2,1,\dots,1)$-shuffles etc. The proposition now guarantees that $\cF_\la\subset \ker \Pi_\la $.
Similarly, we put, for convenience, the ``shifted" sequence
\beq
\cF_\la^{\succeq} = \langle \la'-\text{shuffles in $V$ for all }\la\succeq \la' \rangle\,,
\eeq
so that $\cF_{\la^+}^{\succeq} = \cF_\la$ where $\la^+$ denotes the immediate successor of $\la$ in the lexicographic order above.

We proceed by induction on $\lambda$ with respect to the order $\succ$, starting with the shuffle $\la=(w)$. 
By the analysis above, we can put $S_{(w)}\equiv  \sum_i c_i \cS\big(b_i^{(w)}\big)$, provided we have been able to solve for the $c_i$. This gives us the basis step for the induction.
Now assume we have found an (integrable) tensor $S_\la$ ``approximating" $S$ in the sense that $S-S_\la\in 
\cF_{\la}$. Then we try to construct a ``better approximation" $S_{\la^+}$ for the successor partition $\la^+$ of $\la$ by finding integrable tensors $T_{\la^+} \in \cF_{\la^+}^\succeq = \cF_\la$ such that $S_{\la^+}\equiv S_\la+T_{\la^+}$ satisfies $ S-S_{\la^+} \in \cF_{\la^+}$.  If we are successful in finding such a $T_\la^+$, this finishes the induction step.

 We expect to be able to find such a $T_{\la^+}$ using (certain sums of products of) multiple polylogarithms, by selecting for each part $\la_r^+$ of $\la^+$ an ``indecomposable function type" of weight $\la_r^+$ (see table~\ref{tab:pure_trans_basis}) and taking their product. 
{In other words, we assume that the tensor $T_{\la^+}$ can be written, as a linear combination of $\la^+$-shuffles, in the form
\beq
T_{\la^+} = \sum_{i_1,\ldots,i_{l}}c_{i_1,\ldots,i_l}\,\cS\Big(b_{i_1}^{(\la^+_1)}\Big)\uplus\ldots\uplus\cS\Big(b_{i_l}^{(\la^+_l)}\Big)\,,
\eeq
 with $l=\ell(\la^+)$ and $\la^+=(\la^+_1,\ldots,\la^+_l)$.}
In the weight~5 example {of table~\ref{tab:pure_trans_basis}}, the  respective polylogarithmic functions we consider are
\beq\bsp
(5) &\,\leftrightarrow \textrm{Li}_5(R_1),\,\textrm{Li}_{2,3}(R_1,R_2)\,,\\
(4,1) &\,\leftrightarrow \textrm{Li}_4(R_1)\,\ln R_2,\,\textrm{Li}_{2,2}(R_1,R_2)\,\ln R_3\,,\\
(3,2) &\,\leftrightarrow \textrm{Li}_3(R_1)\,\textrm{Li}_2(R_2)\,,\\
(3,1,1) &\,\leftrightarrow \textrm{Li}_3(R_1)\,\ln R_2\,\ln R_3\,,\\
(2,2,1) &\,\leftrightarrow \textrm{Li}_2(R_1)\,\textrm{Li}_2(R_2)\,\ln R_3\,,\\
(2,1,1,1) &\,\leftrightarrow \textrm{Li}_2(R_1)\,\ln R_2\,\ln R_3\,\ln R_4\,,\\
(1,1,1,1,1) &\,\leftrightarrow \ln R_1\,\ln R_2\,\ln R_3\,\ln R_4\,\ln R_5\,,
\esp\eeq
where the $R_i$ correspond to the rational functions as indicated in section~\ref{sec:arguments}.

Since the proposition implies $\Pi_\la \cF_\la=0$, we get $\Pi_\la (S-S_\la) = 0$. Therefore we are essentially working at each step in the quotient space $\cF_\la^{\succeq}/\cF_\la$ and have to solve for considerably fewer coefficients than if we worked in $\cF_\la^{\succeq}$.
If at any stage we cannot find a $T_{\la^+}$ with the desired property, i.e. cannot solve for the corresponding coefficients using our limited spanning set of input functions $b_i^{(s)}$, the algorithm stops (we can of course try to rerun it with a larger set of input functions).
Otherwise, it ends with producing an integrable tensor $S_{(1,\dots,1)}$ with $S-S_{(1,\dots,1)}=0$, solving the main part of our integration problem.

\subsection{Elements in the kernel of the symbol map}
\label{sec:kernel}
The algorithmic approach we described in the previous section often allows us, given a tensor $S$, to construct a function $F$ such that $\cS(F) = S$. Let us now assume that the tensor $S$ was obtained in some way from an analytic expression $F_0$ (representing, say, a Feynman integral). It would be premature to conclude that the function $F$ we constructed is equal to the original expression $F_0$, because they are only equal up to terms that are in the kernel of $\cS$.

In the following we describe a way that, at least in most of the cases we studied so far, allows to fix this remaining ambiguity by parametrizing $F-F_0$ in a suitable way. The parametrization we are proposing takes the form
\beq
F-F_0 = \sum_{i}\,\tilde c_i\,\tilde k_i +\sum_\ell\sum_{\stackrel{i_1,\ldots,i_\ell}{w_1+\ldots+w_\ell=w}}c_{i_1\ldots i_\ell}\,k_{i_1\ldots i_\ell}\,b_{i_1}^{(w_1)}\ldots \,b_{i_\ell}^{(w_\ell)}\,,
\eeq
where $b_{i_k}^{(w_k)}$ are defined in the previous section, and $c_{i_1\ldots i_\ell}$ and $\tilde c_i$ are rational coefficients, and $k_{i_1\ldots i_\ell}$ and $\tilde k_i$ are generators of the kernel of the symbol map. Below we give a (non-exhaustive) list of such generators, which cover a wide range of applications. The free coefficients can then be fixed by considering special values for the variables $x_i$, e.g., $x_i=0$ or $x_i=1$, yielding a linear system for the coefficients.

In order for the above procedure to work we need to know the generators of the kernel of $\cS$. Even though this is a very difficult question to answer in general, we can compile a list of (presumably transcendental) numbers whose symbol should be defined as zero.
\begin{itemize}
\item $\cS(\log(-1))=0$. An associated polygon would be $\twogon{0}{-1}$, and this gives the symbol $-1$ which is zero (modulo 2-torsion which we ignore). In particular, this shows that the symbol does not detect the multivaluedness of the logarithm (let alone of the polylogarithms), and in particular does not suggest how to fix the branch cuts of the polylogarithms.
\item All multiple zeta values (MZV's) are in the kernel of $\cS$. Indeed, MZV's can be defined as the values in $x_i=1$ of the multiple polylogarithms $\textrm{Li}_{m_1,\ldots,m_k}(x_1,\ldots,x_k)$. Then it is easy to see that the associated decorated polygon will have all decorations equal to 0 or 1, and hence the symbol vanishes.
\item In addition, there are combinations of transcendental numbers that individually have a non-vanishing symbol\footnote{We recall that the ``refined" symbol of a constant is not necessarily zero.}, but there is a linear combination with zero symbol, e.g.,
\beq\label{eq:Li4(1/2}
\cS\left(\textrm{Li}_4\left({1\over 2}\right) +{1\over24}\,\ln^42\right) = -\left(1-{1\over2}\right)\otimes {1\over 2}\otimes {1\over 2}\otimes {1\over 2} + {1\over 24} (2\odot 2\odot2\odot2) =0\,.
\eeq
\end{itemize}
The previous example is a special case of a more general result for so-called colored multiple zeta values, defined by the alternating sums
\beq\label{eq:CMZV_def}
\zeta(m_1,\ldots,m_k;s_1,\ldots,s_k) = \sum_{0<n_1<n_2<\dots <n_k} \frac{s_1^{n_1} s_2^{n_2} \cdots s_k^{n_k} }{n_1^{m_1} n_2^{m_2} \cdots n_k^{m_k} }\,,
\eeq
with $m_i\in\mathbb{N}$ and $s_i\in\{\pm1\}$.
It is easy to check that $\zeta(1,1,1,1;-1,-1,1,1) = -\textrm{Li}_4\left({1\over 2}\right)$, and so eq.~\eqref{eq:Li4(1/2} can be written as
\beq
\cS(\zeta(1,1,1,1;-1,-1,1,1) - {1\over24}\,\ln^42) = 0\,.
\eeq
More generically, we have,
\begin{proposition}\label{proposition:CMZV}$\phantom{aaaaaaaaaaaaaaaaaaaaaaaaaaaaaaa}$
\begin{enumerate}
\item If at least one of the $m_i$ is different from $\pm1$ and $(m_1, s_1)\neq (1,1)$, then 
\beq
\cS(\zeta(m_1,\ldots,m_k;s_1,\ldots,s_k))=0\,.
\eeq
\item $\forall s_i \in \{-1,1\}$, $\forall m\ge1$, 
\beq
\cS\left(\zeta(\underbrace{1,\ldots,1}_{m \textrm{ times}};-1,s_2,\ldots,s_m) - \,{1\over m!}\ln^m{1\over2}\right) = 0.
\eeq
\end{enumerate}
\end{proposition}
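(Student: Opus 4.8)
The plan is to turn the colored multiple zeta value into a $G$-function evaluated at $1$ and to exploit that, once all singularities lie in $\{0,\pm1\}$ and the root equals $1$, only one non-trivial tensor entry can ever survive. By eq.~\eqref{eq:Gm_def} one has $\zeta(m_1,\dots,m_k;s_1,\dots,s_k)=(-1)^k\,G(\vec a;1)$ with $\vec a\in\{0,\pm1\}^{w}$, $w=\sum_i m_i$, and root decoration $1$; a zero entry of $\vec a$ occurs exactly when some $m_i\ge2$, and the hypothesis $(m_1,s_1)\neq(1,1)$ is precisely the condition that this $G$-function converges, i.e. that its first index differs from the root $1$. (There is a harmless clash between the summation order displayed in eq.~\eqref{eq:CMZV_def} and the index labelling used in the statement; only the resulting decorated polygon matters below.)

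First I would record the only genuine computation needed, namely the possible values of $\mu$ on a $2$-gon both of whose decorations lie in $\{0,\pm1\}$ (the root being $1$). A short case check shows that $\mu$ equals $2$ exactly when the two decorations are $+1$ and $-1$; in every other case it is a root of unity $\pm1$, killed by the torsion rule and by $C\otimes1\otimes D=0$, or else the non-invertible value $0$, arising from two equal non-zero decorations or from a zero base. Since $0\notin\Q(\dots)^\times$ such a $2$-gon yields no admissible factor and its dissection is discarded — this is the combinatorial trace of the boundary divergences removed by shuffle regularization, and it is consistent with $G(\underbrace{-1,\dots,-1}_{w};1)=\tfrac1{w!}\log^w2$ and with $\cS(\textrm{Li}_2(1))=0$ read off from eq.~\eqref{eq:symb_Li_m}. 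Hence every surviving elementary tensor in $\cS(G(\vec a;1))$ has all $w$ entries equal to $2$, so $\cS(G(\vec a;1))=c(\vec a)\,2^{\otimes w}$ for some $c(\vec a)\in\Z$, and the whole problem reduces to computing this single integer.

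For part~(1) I would show $c(\vec a)=0$ whenever some entry of $\vec a$ is $0$. A factor $2$ can come only from a $2$-gon both of whose decorations are $\pm1$, so any $2$-gon carrying the decoration $0$ contributes a trivial or a discarded factor. Running the recursion eq.~\eqref{polygon_rec}, the side decorated by $0$ (which is not the root, since the root is $1$) is cut off as part of some $2$-gon in every branch, and at that moment its factor is of the forbidden type. Thus every term of the double sum in eq.~\eqref{eq:symb_def} is killed and $\cS(G(\vec a;1))=0$, which is the assertion of part~(1).

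For part~(2), where all $m_i=1$ so $\vec a\in\{\pm1\}^{w}$ with first index $-1$, I would prove $c(\vec a)=1$ by induction on $w$ via eq.~\eqref{polygon_rec}, peeling off the last $2$-gon, the base case being $\cS(G(-1;1))=(2)$. Only $2$-gons pairing the first index $-1$ with the root, or pairing two opposite inner signs, give a non-trivial factor $2$, each accompanied by a weight-$(w-1)$ subpolygon; the subpolygons that are formally divergent (first index $+1$) always occur as a forward/backward pair with equal $\mu$-factor and opposite sign in eq.~\eqref{polygon_rec}, so they cancel before any divergence is met, whereas the convergent subpolygons again have first index $-1$ and are controlled by the induction hypothesis, so tallying the signs gives net coefficient $1$, independent of $s_2,\dots,s_m$. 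Together with the prefactor $(-1)^k=(-1)^m$ from eq.~\eqref{eq:Gm_def} and $\cS\!\big(\tfrac1{m!}\log^m\tfrac12\big)=(1/2)^{\otimes m}=(-1)^m2^{\otimes m}$ this yields the stated identity. The delicate point throughout is the treatment of the degenerate $2$-gons: specialising the generic recursion eq.~\eqref{eq:GSVV_def} to coincident decorations produces $d\log0$, so justifying that these factors are dropped while the forward/backward cancellations tame the formally divergent subpolygons — most safely by comparison with the shuffle-regularized iterated integrals — is the main obstacle; granting it, part~(2) collapses to the one-line sign count above.
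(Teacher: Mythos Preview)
Your route is genuinely different from the paper's. The paper first applies the H\"older involution $G(a_1,\dots,a_n;1)\mapsto(-1)^n G(1-a_n,\dots,1-a_1;1)$, sending the decorations $\{0,\pm1\}$ to $\{0,1,2\}$, and then works in the hook-arrow tree picture: for Part~(1) the vertices labelled $1$ cannot be connected to anything, so no dissection survives; for Part~(2) the admissible trees are classified (all $2$-vertices attach to the root, $0$-vertices to the $2$'s) and the resulting sum is collapsed via a binomial generating-function identity, yielding the coefficient $(-1)^a\binom{n-1}{a}$ for \emph{any} sign pattern. Your direct argument for Part~(1)---every maximal dissection contains the $0$-side in some $2$-gon, whose $\mu$ lies in $\{0,\pm1\}$---is shorter and avoids H\"older entirely.

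For Part~(2), your induction via eq.~\eqref{polygon_rec} is correct in outline, and your cancellation of the divergent subpolygon when $b_2=+1$ is exactly right. But ``tallying the signs gives net coefficient $1$'' is the actual content, and you have not supplied it. Concretely: by induction every convergent subpolygon contributes the coefficient $1$; writing $\delta_j=[b_j\neq b_{j+1}]$, removing $b_j$ for $2\le j\le m-1$ contributes $\delta_{j-1}-\delta_j$ (forward minus backward, each non-trivial iff the adjacent signs differ), removing $b_m$ contributes $+\delta_{m-1}$ (forward only), and removing $b_1$ contributes $1-\delta_1$ (your cancellation gives $0$ when $b_2=+1$, and only the forward $\twogon{-1}{1}$ survives when $b_2=-1$). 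The sum telescopes to $1$. Without this computation the induction does not close; it is not a one-line sign count but a genuine (if short) telescoping identity.

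Finally, you overstate the difficulty of the degenerate $2$-gons. The polygon definition eq.~\eqref{eq:symb_def} is combinatorial: a $2$-gon with $\mu=0$ simply does not contribute a tensor factor and is discarded, exactly as the paper indicates for non-generic arguments. No appeal to shuffle regularization is needed---this is the point of working with polygons rather than with eq.~\eqref{eq:GSVV_def}. Your ``main obstacle'' is already handled by the formalism; the missing piece is the telescope above.
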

\noindent The proof of this proposition will be given in appendix~\ref{app:CMZV_proof}. %% the next section.

\section{Application: a spanning set for harmonic polylogarithms}
\label{sec:HPLbasis}
In this section we illustrate our approach by expressing all harmonic polylogarithms (HPL's) up to weight four in terms of a spanning set of functions. We start by determining the arguments our spanning set should depend upon. Indeed, in the case of harmonic polylogarithms we can classify all the arguments of the spanning set of functions under the mild assumption that all the arguments should be rational functions. Then, it is easy to see that the polygons associated to HPL's correspond to polygons where the root edge is decorated by the variable $x$, and all other sides are decorated only by $0$ or $\pm1$. This implies that the tensor associated to an HPL is an element of the tensor algebra of the vector space generated by the formal basis vectors $[x]$, $[1-x]$, $[1+x]$ and $[2]$. Indeed, the decorations of the polygon associated to an HPL are all $\pm1$ or $0$, except for the root which is decorated by the variable $x$. It is then easy to see that any dissection of this polygon will only involve the following five bigons
 \beq\bsp
 \mmu{\twogon{0}{x}} = x\,,&\quad  \mmu{\twogon{1}{x}} = 1-x\,,\\
   \mmu{\twogon{-1}{x}} = 1+x\,,&\quad
    \mmu{\twogon{1}{-1}} =  \mmu{\twogon{-1}{1}}=2\,.
    \esp\eeq 
Hence, the sets $P_{HPL}$ and $\overline{P}_{HPL}$ of irreducible polynomials defined in section~\ref{sec:higher_weights} are
 \beq
 P_{HPL}=\{2,x,1-x,1+x\} {\rm~~and~~} \overline{P}_{HPL}=P_{HPL}\cup\{2\pm x, 3\pm x,1\pm2x\}\,.
 \eeq
The most general rational function we can construct out of the irreducible polynomials in $\overline{P}_{HPL}$ then reads
\beq
\pm x^{\alpha_1}\,(1-x)^{\alpha_2}\,(1+x)^{\alpha_3}\,2^{\alpha_4}\,(2-x)^{\alpha_5}\,(2+x)^{\alpha_6}\,(3-x)^{\alpha_7}\,(3+x)^{\alpha_8}\,(1-2x)^{\alpha_9}\,(1+2x)^{\alpha_{10}}\,,
\eeq
with $\alpha_i\in\mathbb{Z}$. In the case of HPL's, however, it turns out (a posteriori) that we can restrict ourselves to the following set of rational functions
\beq
\cR_{HPL} = \{ \pm2^\delta\,x^\alpha\,(1-x)^\beta\,(1+x)^\gamma\, |\, \alpha, \beta, \gamma, \delta\in\mathbb{Z}\}\,.
\eeq
In the following we denote the elements of $\cR_{HPL}$ by
\beq\label{eq:Rabcd}
R_{\alpha\beta\gamma\delta}^{\pm}(x) = \pm2^\delta\,x^\alpha\,(1-x)^\beta\,(1+x)^\gamma\,,
\eeq
with $\alpha, \beta,  \gamma, \delta\in \mathbb{Z}$.
It follows then from the previous section that we should consider the subset of these rational functions (those contained in the set $\cR^{(1)}_{HPL}$; see eq.~\eqref{eq:1_minus_R} and eq.~\eqref{eq:1_minus_R_upper_1}) 
that satisfy the constraint
\beq\label{eq:arg_constraint}
1-R_{\alpha\beta\gamma\delta}^{\pm}(x) = R_{\alpha'\beta'\gamma'\delta'}^{s}(x)\,,
\eeq
for some integers $\alpha',\beta', \gamma'$ and $\delta'$, and $s=\pm 1$. A little algebra shows that quadruples $(\alpha, \beta, \gamma, \delta)$ are confined to the values given in Tab.~\ref{tab:arguments}. The first line corresponds to constant arguments, and will not be discussed any further. 
Note that we could also include the inverses of the arguments in Tab.~\ref{tab:arguments}.
Using the inversion formula for the classical polylogarithms, we can however always express $\textrm{Li}_n$ functions with inverted arguments in terms of polylogarithms taken at the arguments in Tab.~\ref{tab:arguments},
\beq
\textrm{Li}_n\left({1\over x}\right) = (-1)^{n-1} \textrm{Li}_n\left({x}\right) + \textrm{ products of lower weight terms},\quad n\ge2\,.
\eeq
Since furthermore the arguments in Tab.~\ref{tab:arguments} are less than 1 for $x\in[0,1]$, we will in the following only consider these functions as arguments of the $\textrm{Li}_n$ functions. Note however that even the functions in Tab.~\ref{tab:arguments} are not completely independent, because we can use the distribution formula for the $\textrm{Li}_n$ function to relate three of them, e.g.,
\beq\bsp
\textrm{Li}_n\left(x^2\right) & = 2^{n-1}\,\left(\textrm{Li}_n(x) + \textrm{Li}_n(-x)\right)\,,%\\
% \textrm{Li}_n\left(\left({1-x\over1+x}\right)^2\right) & = 2^{n-1}\,\left(\textrm{Li}_n\left({1-x\over1+x}\right) + \textrm{Li}_n\left({x-1\over x+1}\right)\right)\,.
\esp\eeq
and three others using the same equation with $x$ replaced by ${1-x\over1+x}$.
We finally arrive at the conclusion that, if we want to reduce all HPL's to some small (possibly minimal) set of (multiple) polylogarithms, the $\textrm{Li}_n$ sector of that set contains classical polylogarithms with 16 different arguments. As we will see below, for lower weights we can find more relations among the spanning set of functions, reducing its size even further.
In the following we will also have to deal with multiple polylogarithms of depth greater than one. From the previous section we know that the pairs of arguments
$(R_1,R_2)$ of these functions come from Tab.~\ref{tab:arguments}, subject to the additional constraint $R_1-R_2\in \cR_{HPL}$.

\begin{table}
\begin{center}
\begin{tabular}{|c|c|c|c|c|c||c|c|c|c|c|c|}
\hline
$s$ & $\alpha$ & $\beta$ & $\gamma$ & $\delta$ & $R_{\alpha\beta\gamma\delta}^{s}(x)$ & $s$ & $\alpha$ & $\beta$ & $\gamma$ & $\delta$ & $R_{\alpha\beta\gamma\delta}^{s}(x)$\\
\hline\hline
- & 0 & 0 & 0&0 & -1 & + & 0 & 0 & 0& -1 &1/2\\
+ & 1 & 0 & 0& 0 & $x$ & - & 1 & 0 & 0& 0 &$-x$\\
+ & 0 & 1 & 0& 0& $1-x$ & + & 0 & 0 & -1& 0& $1/(1+x)$\\
+ & 2 & 0 & 0& 0 & $x^2$& + & 0 & 1 & 1& 0 & $1-x^2$\\
- & 1 & -1 & 0& 0& $x/(x-1)$ & + & 1 & 0 & -1& 0 & $x/(x+1)$\\
+ & 0 & 1 & -1& 0 &  $(1-x)/(1+x)$ &- & 0 & 1 & -1& 0 & $(x-1)/(x+1)$\\
- & 2 & -2 & -2& 0& $x^2/(x^2-1)$ &+ & 0 & 2 & -2& 0& $(1-x)^2/(1+x)^2$\\
+ & 0 & 1 & 0& -1 & $(1-x)/2$ & + & 0 & 0 & 1& -1& $(1+x)/2$\\
- & 1 & -1 & 0& 1 & $2x/(x-1)$&+ & 1 & 0 & -1& 1 & $2x/(x+1)$\\
+ & 2 & 0 & -2& 2& $4x/(1-x)^2$ & - & 2 & -2 & 0& 2 & $-4x/(x+1)^2$\\
\hline
\end{tabular}
\caption{\label{tab:arguments}Solutions to the constraint~(\ref{eq:arg_constraint}). Only half of the solutions are shown, all other solutions being related by $R^s_{(-\alpha)(-\beta)(-\gamma)(-\delta)}(x) = 1/R_{\alpha\beta\gamma\delta}^{s}(x)$.}
\end{center}
\end{table}

Finally, using the results from the previous section, as well as some elementary identities among classical polylogarithms, we find the following spanning set of indecomposable functions up to weight four,
\begin{itemize}
\item for weight one,
\beq
\cB_1^{(1)}(x)= \ln x, \quad \cB_1^{(2)}(x)= \ln (1-x),\quad \cB_1^{(3)}(x)= \ln (1+x)\,, \quad \cB^{(4)}_1(x) = \ln 2\,,
\eeq
\item for weight two,
\beq
\cB_2^{(1)}(x)= \textrm{Li}_2(x), \quad \cB_2^{(2)}(x)= \textrm{Li}_2(-x),\quad \cB_2^{(3)}(x)= \textrm{Li}_2\left({1-x\over 2}\right)\,,
\eeq
\item for weight three,
\beq\bsp
&\cB_3^{(1)}(x)= \textrm{Li}_3(x), \quad \cB_3^{(2)}(x)= \textrm{Li}_3(-x),\quad \cB_3^{(3)}(x)= \textrm{Li}_3(1-x),\\
& \cB_3^{(4)}(x)= \textrm{Li}_3\left({1\over1+x}\right)\,,\cB_3^{(5)}(x)= \textrm{Li}_3\left({1+x\over2}\right), \quad  \cB_3^{(6)}(x)= \textrm{Li}_3\left({1-x\over2}\right),\\
&  \cB_3^{(7)}(x)= \textrm{Li}_3\left({1-x\over1+x}\right), \quad  \cB_3^{(8)}(x)= \textrm{Li}_3\left({2x\over x-1}\right)\,,
\esp\eeq
\item for weight four,
\beq\bsp
&\cB_4^{(1)}(x)= \textrm{Li}_4(x), \quad \cB_4^{(2)}(x)= \textrm{Li}_4(-x),\\
& \cB_4^{(3)}(x)= \textrm{Li}_4(1-x), \quad\cB_4^{(4)}(x)= \textrm{Li}_4\left({1\over1+x}\right),\\
&\cB_4^{(5)}(x)= \textrm{Li}_4\left({x\over x-1}\right), \quad  \cB_4^{(6)}(x)= \textrm{Li}_4\left({x\over x+1}\right),\\
&\cB_4^{(7)}(x)= \textrm{Li}_4\left({1+x\over2}\right), \quad  \cB_4^{(8)}(x)= \textrm{Li}_4\left({1-x\over2}\right),\\
&  \cB_4^{(9)}(x)= \textrm{Li}_4\left({1-x\over1+x}\right), \quad  \cB_4^{(10)}(x)= \textrm{Li}_4\left({x-1\over x+1}\right),\\
&\cB_4^{(11)}(x)= \textrm{Li}_4\left({2x\over x+1}\right),\quad \cB_4^{(12)}(x)= \textrm{Li}_4\left({2x\over x-1}\right),
\esp\eeq
\beq\bsp
&\cB_4^{(13)}(x)= \textrm{Li}_4\left(1-x^2\right),\quad \cB_4^{(14)}(x)= \textrm{Li}_4\left({x^2\over x^2-1}\right),\\
&\cB_4^{(15)}(x)= \textrm{Li}_4\left({4x\over (x+1)^2}\right)\,.
\esp\eeq
\end{itemize}
These functions are sufficient to express all harmonic polylogarithms up to weight three. Starting from weight four, we need to extend the set of functions by adjoining {\em multiple} polylogarithms. We find that it is enough to add the following three supplementary functions in order to express all harmonic polylogarithms up to weight four,
\beq
\cB_4^{(16)}(x) = \textrm{Li}_{2,2}(-1,x),\quad \cB_4^{(17)}(x) = \textrm{Li}_{2,2}\left({1\over 2},{2x\over x+1}\right),\quad \cB_4^{(18)}(x) = \textrm{Li}_{2,2}\left({1\over 2},{2x\over x-1}\right)\,.
\eeq
Note that, if the vector of singularities only takes values in the set $\{0,1\}$, we can restrict ourselves to the smaller spanning set,
\begin{itemize}
\item for weight one: $\cB_1^{(1)}(x)$, $\cB_1^{(2)}(x)$,
\item for weight two: $\cB_2^{(1)}(x)$, 
\item for weight three: $\cB_3^{(1)}(x)$, $\cB_3^{(3)}(x)$,
\item for weight four: $\cB_4^{(1)}(x)$, $\cB_4^{(3)}(x)$, $\cB_4^{(5)}(x)$.
\end{itemize}
Our choice for the spanning set is of course not unique, and we might have chosen a different set of functions, related to the $\cB_i^{(j)}$ functions via functional equations. Our choice was motivated by the fact that  $\cB_i^{(j)}(x)$ is manifestly real for $x\in[0,1]$. Note that outside this interval, the branching structure of these functions can be more complicated. This issue is addressed in appendix~\ref{app:anal_cont}.

\subsection{Example}
Let us conclude this section by giving an example of how we can apply the procedure of section~\ref{sec:higher_weights} to express a generic HPL of weight four in terms of the functions $\cB^{(i)}_j$. We discuss in detail the example of $H(0,0,1,1;x) = S_{2,2}(x)$, all other cases being similar. For the list of all results, we refer to appendix~\ref{app:results}. We start by deriving the tensor associated to $H(0,0,1,1;x)$. The polygon associated to $H(0,0,1,1;x) = G(0,0,1,1;x)$ is
\beq\nonumber
\fivegon{1}{1}{0}{0}{x}
\eeq
There is only one relevant maximal dissection of this pentagon (all other dissections give rise to twogons with decorations 0 and / or 1 which vanish by definition),
\beq\nonumber
\fivegonarrowtCDa{1}{1}{0}{0}{x}  \quad\leftrightarrow\quad \mmu{\twogon{1}{x}}\otimes\mmu{\twogon{1}{x}}\otimes\mmu{\twogon{0}{x}}\otimes\mmu{\twogon{0}{x}}\,.
\eeq
From this dissection we can immediately read of the symbol of $H(0,0,1,1;x)$ as
\beq\label{eq:S_22_tensor}
\cS(H(0,0,1,1;x)) = (1-x)\otimes(1-x)\otimes x\otimes x\,.
\eeq
Note that in general the symbol of a harmonic polylogarithm $H(a_1,\ldots,a_w;x)$, with $a_i\in\{0,1\}$ is simply given by $(-1)^k\,(a_w-x)\otimes\ldots\otimes(a_1-x)$, where $k$ is the number of $a_i$'s equal to 1.
{Before turning to the question of how to express $H(0,0,1,1;x)$ in terms of the spanning set, let us review how we could have obtained the symbol~\eqref{eq:S_22_tensor} using the recursive definition of the symbol~\eqref{eq:GSVV_def}.
Note however, that in this case we cannot apply eq.~\eqref{eq:MPL_diff_eq} immediately, as the arguments of $G(0,0,1,1;x) = H(0,0,1,1;x)$ are not generic and we would hence obtain divergences in the right-hand side of eq.~\eqref{eq:MPL_diff_eq}. We therefore need to use a regularized version of the differential equation~\eqref{eq:MPL_diff_eq}~\cite{Goncharov:2001},
\beq
\rd H(0,0,1,1;x)=H(0,1,1;x)\,\rdh\log x\,,
\eeq
and so
\beq
\cS(H(0,0,1,1;x)=\cS(H(0,1,1;x))\otimes x\,.
\eeq
We can iterate this procedure to compute the symbol of $H(0,1,1;x)$. The differential equation for $H(0,1,1;x)$ is
\beq
\rd H(0,1,1;x) = H(1,1;x)\,\rdh \log x\,,
\eeq
and so we get
\beq
\cS(H(0,1,1;x)) = \cS(H(1,1;x))\otimes x\,.
\eeq
The symbol of $H(1,1;x)={1\over2}\log^2(1-x)$ is easy to obtain from eq.~\eqref{producttoshuffle},
\beq
\cS(H(1,1;x))={1\over2}\cS(\log^2(1-x)) = {1\over2}\cS(\log(1-x))\uplus\cS(\log(1-x)) = (1-x)\otimes(1-x)\,.
\eeq
Putting everything together, we immediately arrive at the symbol given in eq.~\eqref{eq:S_22_tensor}.
}

{Let us now turn to the actual problem we want to study, namely how to express $H(0,0,1,1;x)$ in terms of the spanning set for HPL's defined in the previous section.}
The goal is to find rational numbers $c_{i_1\ldots i_\ell}^{(k)}$ such that the tensor associated to 
\beq\bsp\label{eq:S22_ansatz}
\sum_{i=1}^{18}&c_{i}^{(1)}\,\cB_4^{(i)}(x) + 
\sum_{i=1}^8\sum_{j=1}^4c_{ij}^{(2)}\,\cB_3^{(i)}(x)\,\cB_1^{(j)}(x) +
\sum_{i,j=1}^3c_{ij}^{(3)}\,\cB_2^{(i)}(x)\,\cB_2^{(j)}(x) \\
&\,+ 
\sum_{i=1}^3\sum_{j,k=1}^4c_{ijk}^{(4)}\,\cB_2^{(i)}(x)\,\cB_1^{(j)}(x)\,\cB_1^{(k)}(x) 
+ 
\sum_{i,j,k,l=1}^4c_{ijkl}^{(5)}\,\cB_1^{(i)}(x)\,\cB_1^{(j)}(x)\,\cB_1^{(k)}(x)\,\cB_1^{(l)}(x)\,, 
\esp\eeq
equals the tensor given in eq.~(\ref{eq:S_22_tensor}). 
The symbol of  eq.~\eqref{eq:S22_ansatz} is easily obtained from the fact that $\cS$ is linear and maps products of polylogarithms to shuffles, i.e.,
\beq\bsp\label{eq:S22_ansatz_2}
\sum_{i=1}^{18}&c_{i}^{(1)}\,\cS(\cB_4^{(i)}(x)) + 
\sum_{i=1}^8\sum_{j=1}^4c_{ij}^{(2)}\,\cS(\cB_3^{(i)}(x))\sha\cS(\cB_1^{(j)}(x))\\
+&\,
\sum_{i,j=1}^3c_{ij}^{(3)}\,\cS(\cB_2^{(i)}(x))\sha\cS(\cB_2^{(j)}(x))  \\
+&\,\sum_{i=1}^3\sum_{j,k=1}^4c_{ijk}^{(4)}\,\cS(\cB_2^{(i)}(x))\sha\cS(\cB_1^{(j)}(x))\sha\cS(\cB_1^{(k)}(x)) \\
+&\,
\sum_{i,j,k,l=1}^4c_{ijkl}^{(5)}\,\cS(\cB_1^{(i)}(x))\sha\cS(\cB_1^{(j)}(x))\sha\cS(\cB_1^{(k)}(x))\sha\cS(\cB_1^{(l)}(x))\,.
\esp\eeq
The symbol of each function $\cB_j^{(i)}$ can be easily obtained, e.g.,
\beq\bsp
\cS(\cB_4^{(6)}(x)) =&\, \cS\left(\textrm{Li}_4\left({x\over x+1}\right)\right)  =-\left(1-{x\over x+1}\right)\otimes \left({x\over x+1}\right)\otimes \left({x\over x+1}\right)\otimes \left({x\over x+1}\right)\\
=&\, -\left({1\over x+1}\right)\otimes\left({x\over x+1}\right)\otimes\left({x\over x+1}\right)\otimes \left({x\over x+1}\right)\\
=&\, (x+1)\otimes x\otimes x\otimes x-(x+1)\otimes x\otimes x\otimes (x+1)-(x+1)\otimes x\otimes (x+1)\otimes x\\
&\,+(x+1)\otimes x\otimes (x+1)\otimes (x+1)-(x+1)\otimes (x+1)\otimes x\otimes x\\
&\,+(x+1)\otimes (x+1)\otimes x\otimes (x+1)+(x+1)\otimes (x+1)\otimes (x+1)\otimes x\\
&\,-(x+1)\otimes (x+1)\otimes (x+1)\otimes (x+1)\,.
\esp\eeq
The different shuffles in eq.~\eqref{eq:S22_ansatz_2} can be distinguished further by acting with the projectors defined in the previous section. In particular, we obtain
\beq
\Pi_4\,\cS(H(0,0,1,1;x)) = \sum_{i=1}^{18}c_{i}^{(1)}\,\Pi_4\,\cS(\cB_4^{(i)}(x))\,.
\eeq
Equating the coefficients of the different elementary tensors on both sides of this equation, we obtain a linear system that allows us to solve for the for the coefficients $c_i^{(1)}$, $1\le i\le18$. The solution is easily obtained, and is given by
\beq
c_1^{(1)} = -c_3^{(1)} = c_5^{(1)} = 1\,,
\eeq
and all other coefficients are zero. We now proceed recursively, and subtract the (integrable) tensor arising from the solution we have found. By construction, the symbol of this difference must vanish under the action of the projector $\Pi_4$,
\beq
\Pi_4\,\cS\left(H(0,0,1,1;x) - L(x) \right) = 0\,,
\eeq
where we defined
\beq
L(x) = \textrm{Li}_4(x)-\textrm{Li}_4(1-x) + \textrm{Li}_4\left({x\over x-1}\right)\,.
\eeq
We next turn to the determination of the coefficients $c_{ij}^{(2)}$, i.e., coefficients of terms of the form $\cB_3^{(i)}(x)\cB_1^{(j)}(x)$. We can isolate these terms by applying the projector $\Pi_3\otimes\Pi_1$,
\beq\bsp
(\Pi_3&\,\otimes \Pi_1)\,\cS\left(H(0,0,1,1;x) -L(x)\right)\\
=&\,
\sum_{i,j=1}^3c_{ij}^{(2)}\,(\Pi_3\otimes \Pi_1)[\cS(\cB_3^{(i)}(x))\sha\cS(\cB_1^{(j)}(x))]\\
=&\,
\sum_{i,j=1}^3c_{ij}^{(2)}\,[\Pi_3\cS(\cB_3^{(i)}(x))]\otimes[\Pi_1\cS(\cB_1^{(j)}(x))]\,,
\esp\eeq
and equating the coefficients of the elementary tensors on both sides of the equation we can solve for the coefficients $c_{ij}^{(2)}$. We find
 $c_{12}^{(2)} = -1$, and $c_{ij}^{(2)} = 0$ if $(i,j) \neq(1,2)$. We again subtract this contribution to find an expression that vanishes under the actions of both $\Pi_4$ and $\Pi_3\otimes\Pi_1$. Next we act with the projector $\Pi_2\otimes\Pi_2$ on this difference
 \beq
 (\Pi_2\otimes\Pi_2)\cS\left(H(0,0,1,1;x) -L(x) + \textrm{Li}_3(x)\ln(1-x)\right) = 0\,,
 \eeq
 and we immediately conclude that $c_{ij}^{(3)} = 0$, $\forall 1\le i,j\le 3$\,. Similarly, by acting with the projector $\Pi_2\otimes\Pi_1\otimes\Pi_1$ we conclude that all the coefficients $c_{ijk}^{(4)}$ must vanish. The remaining terms are thus all associated to products of logarithms, which can immediately be read off from the tensor
 \beq\bsp
\cS&\left(H(0,0,1,1;x) -L(x)+ \textrm{Li}_3(x)\ln(1-x)\right)\\
&\, = {1\over 24}\,(1-x)\odot(1-x) \odot (1-x) \odot (1-x) -{1\over6}\,x\odot(1-x) \odot (1-x) \odot (1-x)\\
&\, = \cS\left({1\over24}\,\ln^4(1-x) - {1\over6}\,\ln x\,\ln^3(1-x)\right)\,.
\esp\eeq
At this stage, we have found a combination of (a product of)  functions in our spanning set that has the same symbol as $H(0,0,1,1;x)$, and so the quantities are equal up to terms that are mapped to zero by $\cS$.
We make an ansatz assuming we have found sufficiently many elements in the kernel of $\cS$, as
\beq\bsp\label{eq:kernel_W4}
a_1&\,\left(\textrm{Li}_4\left({1\over 2}\right) +{1\over24}\,\ln^42\right) + a_2\,\pi^4 + \zeta_3\,\sum_{i=1}^4b_i\,\cB_1^{(i)}(x) \\
&\,+ 
\pi^2\,\left(\sum_{i=1}^3c_i\,\cB_2^{(i)}(x) + \sum_{i,j=1}^4c_{ij}\,\cB_1^{(i)}(x)\,\cB_1^{(j)}(x)\right)\,,
\esp\eeq
where $a_i$, $b_i$, $c_i$ and $c_{ij}$ are rational numbers, and $\zeta_3=\zeta(3)$ denotes the value in $s=3$ of the Riemann $\zeta$ function
\beq
\zeta(s) = \sum_{n=1}^\infty{1\over n^s}\,. 
\eeq
The coefficients can now be fixed by looking at particular values of $x$. In particular, harmonic polylogarithms are known analytically up to weight four for $x=0$, $x=\pm1$ and $x=\pm{1\over2}$. It turns out that in all cases these values are enough to fix all the free coefficients. In the case of $H(0,0,1,1;x)$, we finally arrive at
\beq\bsp
H(0,0,1,1;x) &\, = S_{2,2}(x)\\
&\,= -\text{Li}_4(1-x)+\text{Li}_4(x)+\text{Li}_4\left(\frac{x}{x-1}\right)\\
&\,-\text{Li}_3(x) \ln (1-x)+\frac{1}{24} \ln^4(1-x)-\frac{1}{6} \ln x \ln^3(1-x)\\
&\,+\zeta_3 \ln (1-x)+\frac{1}{12} \pi^2 \ln^2(1-x)+\frac{\pi ^4}{90}\,.
\esp\eeq

\section{Conclusion}
In this paper, we have provided a review of the symbol map, a linear map that associates to a multiple polylogarithm of weight $n$ an $n$-fold tensor in a way that captures many of the combinatorial properties of polylogarithms, and also respects the functional equations they satisfy. %, are captured suitably. 
While so far the symbol map was defined recursively via iterated differentials, we have given a diagrammatic rule where the symbol of a multiple polylogarithm is obtained directly via a weighted sum over the maximal dissections of the decorated polygon associated to the polylogarithm introduced in ref.~\cite{GGL:2009}. 

Furthermore, we have addressed the problem of integrating a symbol, i.e., the problem of %how to 
finding a function whose symbol matches a given tensor that satisfies the integrability condition~\eqref{eq:integrability}. We have presented a systematic approach of how to find a candidate spanning set for such a function. Once this candidate spanning set has been constructed, and working under the assumptions that its elements %in the candidate spanning set are enough 
suffice to express the integrated symbol in, we showed how a set of projectors can be defined which help to find a function whose symbol matches the initial tensor. While our approach falls short of a complete algorithmic proof and %might not be
is surely not adequate in all possible scenarios, we nevertheless believe that it can be applied in many situations, as was for example demonstrated in ref.~\cite{DelDuca:2011ne,DelDuca:2011jm,DelDuca:2011wh} where our method was successfully applied to obtain new compact analytic results for certain one-loop hexagon integrals in $D=6$ dimensions. Finally, we have used our approach to derive a spanning set for harmonic polylogarithms up to weight 4, and this spanning set was recently used to obtain an efficient numerical implementation of harmonic polylogarithms up to weight four~\cite{Buehler:2011ev}.

\section*{Acknowledgements}
CD is grateful to B.~Anastasiou, S.~Buehler, P.~Heslop, M.~Spradlin, C.~Vergu and A.~Volovich for useful discussions. HG expresses his thanks to F.~Brown and D.~Kreimer for helpful comments. The authors thank E.~W.~N.~Glover for helpful remarks on a preliminary version of the text. JR would like to thank EPSRC for their support. This work was supported by the Research Executive Agency (REA) of the European Union under the Grant Agreement number PITN-GA-2010-264564 (LHCPhenoNet).

%%%%%%%%%%%%%%%%%%%%%%%%%%%%%%%%%%%%%%%%%%%

%%%%%%%%%%%%%%%%%%%%%%%%%%%%%%%%%%%%%%%%%%%
%%%%%%%%%%%%%%%%%%%%%%%%%%%%%%%%%%%%%%%%%%%

%% \input{thm1proofjohntext}

\appendix

\section{Review on shuffle algebras}
\label{app:algebras}
As shuffle algebras are a recurrent theme when working with multiple polylogarithms and their symbols, we review in this appendix the most important notions. Before turning to the special case of shuffle algebras, we first review some basic notions about algebras in general.

\paragraph{Algebras over a field.} An algebra $\cA$ over a field $F$ ($F=\mathbb{R}$ or $F=\mathbb{C}$, say) is a vector space\footnote{More generally, we could consider $\cA$ to be a module over a ring $R$.}  over $F$ together with an associative  and distributive multiplication $\cA\otimes\cA\to\cA$. In the case the multiplication has a unit element, the algebra is called unital. Furthermore, an algebra is said to be \emph{graded} if $\cA$ can be written as a direct sum as a vector space,
\beq
\cA=\bigoplus_{n\in I}\cA_n\,,
\eeq
and if $\forall a\in \cA_m$ and $\forall b\in\cA_n$, we have 
\beq
a\cdot b\in \cA_{m+n}\,.
\eeq
One of the most prominent representatives of a graded algebra is the tensor algebra associated to an $F$-vector space $V$, defined by
\beq
\cT(V) = \bigoplus_{n=0}^\infty\cT_n(V)\,,
\eeq
where $\cT_0(V) = F$ and $T_1(V) = V$, and for $n\ge 2$ we define
\beq
\cT_n(V) = \underbrace{V\otimes\ldots\otimes V}_{n \textrm{ times}}\,.
\eeq
The multiplication on $\cT(V)$ is defined on elementary tensors $a_1\otimes\ldots\otimes a_n$ by
\beq\bsp\label{eq:tensor_prod}
\cT_m(V)\otimes\cT_n(V) &\,\to \cT_{m+n}(V)\\
(a_1\otimes\ldots\otimes a_m)\otimes (b_1\otimes\ldots\otimes b_n) &\,\mapsto a_1\otimes\ldots\otimes a_m\otimes b_1\otimes\ldots\otimes b_n\,,
\esp\eeq
making the tensor algebra into a \emph{graded} algebra in an obvious way. Furthermore, the tensor algebra is also unital, the unit being the unit $1\in F$.

A \emph{homomorphism} between two algebras $\cA$ and $\cB$ is a linear map $\phi$ that preserves the multiplication, i.e., a linear map $\phi$ such that $\forall a,b\in \cA$, $\phi(a\cdot b) = \phi(a)\cdot\phi(b)$.

\paragraph{Ideals in algebras.} An \emph{ideal} in an algebra $\cA$ (or more generally in a ring) is an additive subgroup $\cI$ of $\cA$ such that
\beq
\forall a\in\cA, \forall b\in\cI, \quad a\cdot b\in\cI {\rm~~and~~} b\cdot a\in \cI\,.
\eeq
An easy example of an ideal is given by considering the ring of integer numbers $\mathbb{Z}$ and its subset $n\mathbb{Z}$, $n\in \mathbb{Z}$, i.e., the set of all integer multiples of $n$. The set $n\mathbb{Z}$ is obviously an additive subgroup of $\mathbb{Z}$, and every time we multiply an element of $n\mathbb{Z}$ by an integer number, we obtain another multiple of $n$. Hence $n\mathbb{Z}$ is an ideal of $\mathbb{Z}$. Another example of an ideal is the kernel of an algebra homomorphism $\phi$. Indeed, $\textrm{Ker}\,\phi$ is a sub-vector space of $\cA$, and hence an additive subgroup. Furthermore, $\forall a\in\cA$ and $\forall b\in\textrm{Ker}\,\phi$, we have
\beq
\phi(a\cdot b)=\phi(a)\cdot\phi(b) = \phi(a)\cdot 0 = 0\,,
\eeq
and so $a\cdot b\in \textrm{Ker}\,\phi$, making $\textrm{Ker}\,\phi$ into an ideal in $\cA$.

\paragraph{Shuffle algebras.} After this rather general discussion on algebras, let us from now on focus exclusively on the example of shuffle algebras. As a starting point, let us consider a set $\cL$, whose elements we will refer to as \emph{letters}, and consider the set $\cW$ of all words constructed from elements in $\cL$, i.e., the set of all possible concatenations of letters in $\cL$, together with the \emph{empty word} $\varepsilon$, consisting of no letters (more precisely, $\cW$ is the free monoid generated by the elements in $\cL$). We can define a multiplication on $\cW$ given by concatenation of words, the empty word being the unit element.

Let us now consider the vector space $\cA$ over some field $F$ given by all formal linear combinations of words in $\cW$. $\cA$ is then in fact an algebra over $F$, the multiplication given simply by the concatenation of words. Furthermore, it is easy to see that $\cA$ is also graded by the length of the words (the concatenation of two words with length $m$ and $n$ gives a word of length $m+n$).

We can define another multiplication on $\cA$, the so-called \emph{shuffle product}, defined on words by
\beq\label{eq:shuffle_def}
(a_1\ldots a_m)\uplus(a_{m+1}\ldots a_{m+n}) = \sum_{\sigma\in\Sigma(n_1, n_2)}\,a_{\sigma^{-1}(1)}\ldots a_{\sigma^{-1}(n_1+n_2)},\\
      \eeq
where $\Sigma(n_1,n_2)$ denotes the set of all shuffles of $n_1+n_2$ elements, i.e., the subset of the symmetric group $S_{n_1+n_2}$ defined by
\beq
\Sigma(n_1,n_2) = \{\sigma\in S_{n_1+n_2} |\, \sigma^{-1}(1)<\ldots<\sigma^{-1}({n_1}) {\rm~~and~~} \sigma^{-1}(n_1+1)<\ldots<\sigma^{-1}(n_1+{n_2})\}\,.
\eeq
The vector space $\cA$ together with the shuffle product is called a \emph{shuffle algebra}. A shuffle algebra is again graded by the length of the words, and the empty word $\varepsilon$ is the unit element of the shuffle algebra. The definition~\eqref{eq:shuffle_def} of the shuffle product is equivalent to the following recursive definition, $\forall x, y\in\cL$, $\forall u,v\in \cW$, 
\beq\bsp
\varepsilon\uplus u &\,= u\uplus \varepsilon = u\,,\\
(xu)\uplus(yv) &\,= x(u\uplus(yv)) + y((xu)\uplus v)\,.
\esp\eeq
Note that the tensor algebra of a vector space $V$ can be equipped with a shuffle product in a natural way: the set of letters $\cL$ is simply a basis of $V$, the set of words corresponds to the elementary tensors $a_1\otimes\ldots\otimes a_n$, the empty word is simply the scalar 1, and the concatenation of words corresponds to the multiplication~\eqref{eq:tensor_prod} of two tensors.

In section~\ref{sec:polylogs} we have encountered another example of a shuffle algebra, the shuffle algebra of multiple polylogarithms, eq.~\eqref{eq:G_shuffle}. In that case letters are the elements $a_i$ of the vector of singularities $(a_1,\ldots,a_n)$, the latter being the words. Concatenation is simply defined by the concatenation of the vectors of singularities. The length of a word corresponds to the weight of the polylogarithm, i.e., the number of components of the vector of singularities. In other words, the shuffle algebra of polylogarithms is graded by the weight.

\section{Selected examples of symbols}
In this appendix we compile a list of the symbols of the most commonly used (multiple) polylogarithms. First of all, the symbol of an ordinary logarithm is simply the argument of the logarithm,
\beq\label{eq:log_symbol}
\cS(\ln x) = x\,.
\eeq
From eq.~\eqref{producttoshuffle} it follows then, for every non-negative integer $n$,
\beq\label{eq:log_power_symbol}
\cS\left({1\over n!}\ln^n x\right) = \underbrace{x\otimes\ldots\otimes x}_{n\textrm{ times}} \equiv x^{\otimes n}\,.
\eeq

The symbol of the classical polylogarithms have a similarly simple form, i.e., for every non-negative integer $n$ we obtain
\beq\label{eq:Lin_symbol}
\cS\left(\textrm{Li}_n(x)\right) = -(1-x)\otimes x^{\otimes (n-1)} = -(1-x)\otimes\underbrace{x\otimes\ldots\otimes x}_{(n-1)\textrm{ times}}\,,
\eeq
where we used the notation of eq.~\eqref{eq:log_power_symbol}. Note that for $n=1$, the classical polylogarithm can be expressed as an ordinary logarithm, $\textrm{Li}_1(x) = -\ln(1-x)$, which is consistent with the symbols given in Eqs.~\eqref{eq:log_symbol} and~\eqref{eq:Lin_symbol},
\beq
\cS\left(\textrm{Li}_1(x)\right) = -(1-x) = \cS(-\ln(1-x))\,.
\eeq
Finally, the symbol of a Nielson polylogarithm reads,
\beq\label{eq:Snp_symbol}
\cS(S_{n,p}(x)) = (-1)^p\,(1-x)^{\otimes p}\otimes x^{\otimes n} = (-1)^p\,\underbrace{(1-x)\otimes\ldots\otimes (1-x)}_{p\textrm{ times}}\otimes\underbrace{x\otimes\ldots\otimes x}_{n\textrm{ times}}\,.
\eeq
Again we note that the Nielson polylogarithms contain the classical polylogarithms as a special case, $S_{n-1,1}(x) = \textrm{Li}_n(x)$, an identity which is easily verified at the level of the symbols~\eqref{eq:Lin_symbol} and~\eqref{eq:Snp_symbol}.

\def \HPLgon#1{{
\xy
\POS(0,0) \ar@{=} (25,0)
\POS(35,0) \ar@{=} (60,0)
\POS(25,0) \ar@{.} (35,0)^{#1}
\POS(0,0) *+{\bullet}
\POS(0,0) \ar@{-} (-7,-7)_{a_n}
\POS(-7,-7) \ar@{-} (-7,-17)_{a_{n-1}}
\POS(-7,-17) \ar@{-} (0,-24)_{a_{n-2}}
\POS(60,0) \ar@{-} (67,-7)^{a_1}
\POS(67,-7) \ar@{-} (67,-17)^{a_2}
\POS(67,-17) \ar@{-} (60,-24)^{a_3}
\POS(60,-24) \ar@{.} (0,-24)

\POS(-7,-7)\ar@{->>} (10,0)
\POS(-7,-17)\ar@{->>} (20,0)
\POS(67,-17)\ar@{->>} (40,0)
\POS(67,-7)\ar@{->>} (50,0)

\endxy
}}

The previous examples of the classical and Nielson polylogarithms are both just special cases of harmonic polylogarithms where the components of the vector of singularities take values in $\{0,1\}$,
\beq
\textrm{Li}_n(x) = H(\vec 0_{n-1},1;x) {\rm~~and~~} S_{n,p}(x) = H(\vec 0_n,\vec 1_p;x)\,.
\eeq
The symbol of a harmonic polylogarithm $H(a_1,\ldots,a_n;x)$, with $a_i\in\{0,1\}$, can be written in the compact form
\beq\label{eq:HPL_symbol}
\cS(H(a_1,\ldots,a_n;x)) = (-1)^k\,(a_n-x)\otimes\ldots\otimes(a_1-x)\,,
\eeq
where $k$ is the number of components in the vector of singularities $(a_1,\ldots,a_n)$ equal to $1$. Indeed, the polygon $P(a_n,\ldots,a_1,x)$ associated to $G(a_1,\ldots,a_n;x) = (-1)^k\,H(a_1,\ldots,a_n;x)$ has the root side decorated by $x$, and all other sides decorated by $0$ or $1$. It is easy to see that the only relevant maximal dissection of such a polygon is
\begin{center}
\hskip 1.25cm
\HPLgon{x}
\end{center}
All other maximal dissection give rise to bigons decorated only by 0 and / or 1, which give a zero contribution to the symbol. The non-vanishing dissection produces a term in the symbol given by
\beq
\mmu{\twogon{a_n}{x}}\otimes\mmu{\twogon{a_{n-1}}{x}}\otimes\ldots\otimes\mmu{\twogon{a_2}{x}}\otimes\mmu{\twogon{a_{1}}{x}}\,,
\eeq
which is equal to the tensor in the right-hand side of eq.~\eqref{eq:HPL_symbol} (apart from the sign).

Generic harmonic polylogarithms where the components of the vector of singularities take values in $\{-1,0,1\}$ do not admit a compact expression for the symbol. In this case, the symbol is however easily obtained from the symbols of generic multiple polylogarithms, which are reviewed up to weight four in the next subsections.

\subsection{The symbol of a generic multiple polylogarithm of weight one}
A generic multiple polylogarithm of weight one can be written as $G(a;x)$, with $a,x\in \mathbb{C}$. We can associate a bigon to this function via
\beq
G(a;x) \leftrightarrow P(a,x) = \twogon{a}{x}\,.
\eeq
The symbol of $G(a;x)$ is then
\beq
\cS(G(a;x)) = \mmu{\twogon{a}{x}}\,.
\eeq

\subsection{The symbol of a generic multiple polylogarithm of weight two}
To a generic multiple polylogarithm $G(a,b;x)$ of weight two we associate a trigon
\beq\label{eq:Gabz}
G(a,b;x) \leftrightarrow P(b,a,x) = \threegon{b}{a}{x}\,.
\eeq
The symbol of $G(a,b;x)$ is then obtained by looking at all the maximal dissections of the trigon, obtained by inserting a single arrow. In the following we give the three maximal dissections, together with the term in the symbol they correspond to. We use the shorthand
\beq
ab|cd \equiv \mmu{\twogon{a}{b}}\otimes\mmu{\twogon{c}{d}}\,.
\eeq
The three maximal dissections of the trigon in eq.~\eqref{eq:Gabz}, together with the term in the symbol $\cS(G(a,b;x))$ they correspond to, are
\begin{center}
\begin{tabular}{ccccc}
\threegonarrowb{b}{a}{x} &$\phantom{aaaaaa}$& \threegonarrowc{b}{a}{x} & $\phantom{aaaaaa}$&\threegonarrowa{b}{a}{x}\\
\\
$+ax|ba$ & &$+bx|ax$ && $-bx|ab$
\end{tabular}
\end{center}

\subsection{The symbol of a generic multiple polylogarithm of weight three}
To a generic multiple polylogarithm $G(a,b,c;x)$ of weight three we associate a tetragon
\beq\label{eq:Gabcz}
G(a,b,c;x) \leftrightarrow P(c,b,a,x) = \fourgon{c}{b}{a}{x}\,.
\eeq
The symbol of $G(a,b,c;x)$ is then obtained by looking at all the maximal dissections of the tetragon, obtained by inserting two non-intersecting arrows. In the following we give the twelve maximal dissections, together with the term in the symbol they correspond to. We use the shorthand
\beq
ab|cd|ef \equiv \mmu{\twogon{a}{b}}\otimes\mmu{\twogon{c}{d}}\otimes\mmu{\twogon{e}{f}}\,,
\eeq
as well as the notation for shuffles,
\beq
A|B\uplus C \equiv A|B|C + A|C|B\,.
\eeq
The twelve maximal dissections of the tetragon in eq.~\eqref{eq:Gabcz}, together with the term in the symbol $\cS(G(a,b,c;x))$ they correspond to, are
\begin{center}
\begin{tabular}{cccc}
\xy 
\fourgona \POS(4,5) \ar@{<<-} +(-4,-10)  \POS(6,5) \ar@{<<-} +(4,-10) 
\endxy &
\xy
\fourgona \POS(10,1) \ar@{<<-} +(-10,4) \POS(10,-1) \ar@{<<-} +(-10,-4)
\endxy
&\xy
\fourgona \POS(6,-5) \ar@{<<-} +(4,10) \POS(4,-5) \ar@{<<-} +(-4,10)
\endxy
&\xy
\fourgona \POS(0,-1) \ar@{<<-} +(10,-4) \POS(0,1) \ar@{<<-} +(10,4)
\endxy
\\
\\
$+cx|bx|ax$&$+ax|ca|ba$&$-bx|ab\uplus cb$&$+cx|ac|bc$\\
\\
\xy 
\fourgona \POS(5,5) \ar@{<<-} +(-5,-10) \POS(10,0) \ar@{<<-} +(-10,-5)
\endxy
&\xy
\fourgona \POS(10,0) \ar@{<<-} +(-10,5)\POS(5,-5) \ar@{<<-} +(-5,10)
\endxy
&\xy\fourgona \POS(5,-5) \ar@{<<-} +(5,10) \POS(0,0) \ar@{<<-} +(10,5)
\endxy
&\xy\fourgona \POS(0,0) \ar@{<<-} +(10,-5) \POS(5,5) \ar@{<<-} +(5,-10) 
\endxy
\\
\\
$+cx|ax|ba$&$+ax|ba|cb$&$+cx|bc|ab$&$-cx|ax\uplus bc$\\
\\
\xy 
\fourgona \POS(5,5) \ar@{<<-} +(-5,-10) \POS(5,-5) \ar@{<<-} +(5,10) 
\endxy
&\xy\fourgona \POS(10,0) \ar@{<<-} +(-10,5) \POS(0,0) \ar@{<<-} +(10,-5)
\endxy
&\xy\fourgona \POS(5,5) \ar@{<<-} +(5,-10) \POS(5,-5) \ar@{<<-} +(-5,10)
\endxy
&\xy\fourgona \POS(10,0) \ar@{<<-} +(-10,-5)\POS(0,0) \ar@{<<-} +(10,5)
\endxy
\\
\\
$-cx|bx|ab$&$-ax|ca|bc$&$bx|ax\uplus cb$&$-cx|ac|ba$
\end{tabular}
\end{center}

\subsection{The symbol of a generic multiple polylogarithm of weight four}
To a generic multiple polylogarithm $G(a,b,c,d;x)$ of weight two we associate a pentagon
\beq\label{eq:Gabcdz}
G(a,b,c,d;x) \leftrightarrow P(d,c,b,a,x) = \fivegon{d}{c}{b}{a}{x}\,.
\eeq
The symbol of $G(a,b,c,d;x)$ is then obtained by looking at all the maximal dissections of the pentagon, obtained by inserting three non-intersecting arrows. In the following we give the 55 maximal dissections, together with the term in the symbol they correspond to. We use the shorthand
\beq
ab|cd|ef|gh \equiv \mmu{\twogon{a}{b}}\otimes\mmu{\twogon{c}{d}}\otimes\mmu{\twogon{e}{f}}\otimes\mmu{\twogon{g}{h}}\,,
\eeq
as well as the notation for shuffles
\beq\bsp
A|B|C\uplus D &\,\equiv A|B|C|D + A|B|D|C\,,\\
A|B\uplus(C|D) &\, \equiv A|B|C|D + A|C|B|D + A|C|D|B\,,\\
A|B\uplus C\uplus D &\, \equiv \sum_{\sigma\in S_3} A|\sigma(B)|\sigma(C)|\sigma(D)\,,
\esp\eeq
where in the last equation the sum runs over all permutations of the set $\{B, C, D\}$.
The 55 maximal dissections of the pentagon in eq.~\eqref{eq:Gabcdz}, together with the term in the symbol $\cS(G(a,b,c,d;x))$ they correspond to, are
\begin{center}\tabcolsep 2.5pt
\begin{tabular}{ccccc}
%\hskip 15pt
\fivegonarrowtCDa{d}{c}{b}{a}{x} &
\fivegonarrowtCEa{d}{c}{b}{a}{x}&
\fivegonarrowtCAa{d}{c}{b}{a}{x}&
\fivegonarrowtCBa{d}{c}{b}{a}{x}&
\fivegonarrowtCCa{d}{c}{b}{a}{x} \\
\\
$+dx|cx|bx|ax$&
$+ax|da|ca|ba$&
$-bx|ab\uplus (db|cb)$&
$+cx|dc\uplus(ac|bc)$&
$-dx|ad|bd|cd$\\
\\
\fivegonarrowtCDb{d}{c}{b}{a}{x} &
\fivegonarrowtCEb{d}{c}{b}{a}{x}&
\fivegonarrowtCAb{d}{c}{b}{a}{x}&
\fivegonarrowtCBc{d}{c}{b}{a}{x}&
\fivegonarrowtCCc{d}{c}{b}{a}{x}\\
\\
$+cx|dc\uplus(bx|ax)$&
$-dx|ad|ca|ba$&
$+bx|ax\uplus(db|cb)$&
$-cx|dc\uplus(ac|ba)$&
$+dx|ad|bd|cb$\\
\\
\fivegonarrowtCCb{d}{c}{b}{a}{x}  &
\fivegonarrowtCDc{d}{c}{b}{a}{x}&
\fivegonarrowtCEc{d}{c}{b}{a}{x}&
\fivegonarrowtCAc{d}{c}{b}{a}{x}&
\fivegonarrowtCBb{d}{c}{b}{a}{x}\\
\\
$+ax|da|bd|cd$&
$-dx|cx|bx|ab$&
$-ax|da|ca|bc$&
$+bx|ab\uplus(db|cd)$&
$+dx|cx|ac|bc$\\
\\
\fivegonarrowtCDd{d}{c}{b}{a}{x} &
\fivegonarrowtCEd{d}{c}{b}{a}{x}&
\fivegonarrowtCAd{d}{c}{b}{a}{x}&
\fivegonarrowtCBd{d}{c}{b}{a}{x}&
\fivegonarrowtCCd{d}{c}{b}{a}{x}\\
\\
$-cx|dc\uplus(bx|ab)$&
$+dx|ad|ca|bc$&
$-bx|ax\uplus(db|cd)$&
$-dx|cx|ac|ba$&
$-ax|da|bd|cb$\\
%\\
%
\end{tabular}
\begin{tabular}{ccccc}
\fivegonarrowtBAb{d}{c}{b}{a}{x}&
\fivegonarrowtBDb{d}{c}{b}{a}{x}&
\fivegonarrowtBBa{d}{c}{b}{a}{x}&
\fivegonarrowtBCb{d}{c}{b}{a}{x}&
\fivegonarrowtBEd{d}{c}{b}{a}{x}\\
\\
$+bx|ax\uplus(cb|dc)$&
$+dx|cd|ac|ba$&
$-dx|ax\uplus(bd|cb)$&
$+cx|dc\uplus(ax|ba)$&
$-dx|ad|ba|cb$\\
\\
\fivegonarrowtBAc{d}{c}{b}{a}{x}&
\fivegonarrowtBDf{d}{c}{b}{a}{x}&
\fivegonarrowtBBd{d}{c}{b}{a}{x}&
\fivegonarrowtBCd{d}{c}{b}{a}{x}&
\fivegonarrowtBEa{d}{c}{b}{a}{x}\\
\\
$+ax|ba|db|cb$&
$+cx|dc\uplus(bc|ab)$&
$-dx|ad|cd|bc$&
$-dx|cd\uplus(bx|ax)$&
$+dx|ax|ca|ba$\\
\\
\fivegonarrowtBAd{d}{c}{b}{a}{x}&
\fivegonarrowtBDe{d}{c}{b}{a}{x}&
\fivegonarrowtBBc{d}{c}{b}{a}{x}&
\fivegonarrowtBCc{d}{c}{b}{a}{x}&
\fivegonarrowtBEb{d}{c}{b}{a}{x}\\
\\
$-ax|ba|db|cd$&
$+dx|cx|bc|ab$&
$+ax|da|cd|bc$&
$+dx|cd\uplus(bx|ab)$&
$-dx|ax|ca|bc$\\
\\
\fivegonarrowtBAa{d}{c}{b}{a}{x}&
\fivegonarrowtBDa{d}{c}{b}{a}{x}&
\fivegonarrowtBBb{d}{c}{b}{a}{x}&
\fivegonarrowtBCa{d}{c}{b}{a}{x}&
\fivegonarrowtBEc{d}{c}{b}{a}{x}\\
\\
$-bx|ab\uplus(cb|dc)$&
$-dx|cd|ac|bc$&
$+dx|ax\uplus(bd|cd)$&
$+dx|cx|ax|ba$&
$+ax|da|ba|cb$\\
\\
\fivegonarrowtBAe{d}{c}{b}{a}{x}&
\fivegonarrowtBDc{d}{c}{b}{a}{x}&
\fivegonarrowtBBf{d}{c}{b}{a}{x}&
\fivegonarrowtBCf{d}{c}{b}{a}{x}&
\fivegonarrowtBEf{d}{c}{b}{a}{x}\\
\\
$+ax|ca|dc\uplus ba$&
$+dx|bd| ab\uplus cb$&
$-cx|dc\uplus bc\uplus ax$&
$+dx|ad|cd\uplus ba$&
$+dx|bx|ax\uplus cb$\\
\\
 \fivegonarrowtBAf{d}{c}{b}{a}{x}&
 \fivegonarrowtBDd{d}{c}{b}{a}{x} &
 \fivegonarrowtBBe{d}{c}{b}{a}{x}&
 \fivegonarrowtBCe{d}{c}{b}{a}{x}&
\fivegonarrowtBEe{d}{c}{b}{a}{x} \\
\\
$-ax|ca|dc\uplus bc$&
$-dx|bd|cd\uplus ab$&
$-dx|cx|ax\uplus bc$&
$-ax|da|cd\uplus ba$&
$-dx|bx|cb\uplus ab$\\
\\
\fivegonarrowtAa{d}{c}{b}{a}{x} &
\fivegonarrowtAb{d}{c}{b}{a}{x} &
\fivegonarrowtAc{d}{c}{b}{a}{x} &
\fivegonarrowtAd{d}{c}{b}{a}{x} &
\fivegonarrowtAe{d}{c}{b}{a}{x} \\
\\
$+ax|ba|cb|dc$&
$-dx|cd|bc|ab$&
$+dx|ax\uplus(cd|bc)$&
$-dx|cd\uplus(ax|ba)$&
$+dx|ax|ba|cb$\\
\end{tabular}
\end{center}

\section{Proof of Proposition~\ref{proposition:CMZV}}
\label{app:CMZV_proof}

In this section we present the proof of Proposition~\ref{proposition:CMZV}. The proof uses the combinatorics of the decorated polygons introduced in section~\ref{sec:tensors}. In order to be able to map the symbol of colored multiple zeta values (CMZV's) to polygons, we first have to relate the CMZV's to multiple polylogarithms. From the series representations~\eqref{eq:Lim_def} and~\eqref{eq:CMZV_def}, it is easy to see that one has the relation (provided that the CMZV's converge)
\beq\label{eq:CMZV_to_MPL}
\zeta(m_1,\ldots,m_k;s_1,\ldots,s_k) = (-1)^w\,G_{m_1,\ldots,m_k}(\hat s_1,\ldots,\hat s_k)\,,
\eeq
where we defined $w= m_1+\ldots+m_k$ and
\beq
\hat s_j = \prod_{i=1}^{j}s_i\,.
\eeq
Hence, using the correspondence between multiple polylogarithms and decorated polygons, we can associate to the CMZV $\zeta(m_1,\ldots,m_k;s_1,\ldots,s_k)$ the polygon\footnote{There is of course the factor $(-1)^w$ of eq.~\eqref{eq:CMZV_to_MPL} to be kept in mind.} 
\beq
P(\underbrace{0,\ldots,0}_{m_k-1\textrm{ times}},\hat s_k,\ldots \underbrace{0,\ldots,0}_{m_1-1\textrm{ times}},\hat s_1,1)\,.
\eeq

We start by introducing concepts needed to prove, and then prove Proposition~\ref{proposition:CMZV}. We will in fact proof results at the level of the polygons that are slightly more general than the results given in Proposition~\ref{proposition:CMZV} but not imposing the restriction $(m_1,s_1) \neq (1,1)$ (which corresponds to divergent CMZV's). The first proposition we give is a generalization of statement 2 in the proposition. Statement 1 in Proposition~\ref{proposition:CMZV} is equivalent to Proposition~\ref{secondprop}. 
%Note that throughout this section we use the notation
%\beq
%x^{\otimes n} = \underbrace{x\otimes\ldots\otimes x}_{n\textrm{ times}}\,.
%\eeq

\begin{Proposition}
\label{mainprop}
The symbol of $P(\varepsilon_{1},...,\varepsilon_{n},1)$ for some $\varepsilon_{i} = \pm 1$ is equal to $\lambda_{a,n}(2^{\ot n})$ for
\beq
\lambda_{a,n} = (-1)^{a}\binom{n-1}{a} {\rm~~and~~}a = n- \max\{i \: | \: \varepsilon_{i} = -1\}\,.
\eeq
\end{Proposition}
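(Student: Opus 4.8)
\emph{Reduction to a scalar.} The plan is to turn the statement into a purely numerical recursion by exploiting that only one bigon value ever occurs here, and then to evaluate that recursion by induction on $n$. Writing $\mu(x;y)=1-y/x$ for the value attached to a bigon with non-root side $x$ and root $y$ (both in $\{+1,-1\}$), one has $\mu(x;y)=2$ if $x\neq y$ and $\mu(x;y)=0$ if $x=y$. Since the root of $P(\varepsilon_1,\dots,\varepsilon_n,1)$ is $1$ and every other side is $\pm1$, \emph{every} bigon arising in \emph{any} maximal dissection has both decorations in $\{\pm1\}$, so each elementary tensor in $\cS(P)$ is either $0$ or a scalar multiple of $2^{\otimes n}$. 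Hence $\cS(P)=c_n(\varepsilon)\,2^{\otimes n}$ for some $c_n(\varepsilon)\in\Z$, and the claim reduces to $c_n(\varepsilon)=(-1)^a\binom{n-1}{a}$ with $a=n-\max\{i:\varepsilon_i=-1\}$ (using $\max\emptyset=0$, which correctly gives $c_n=0$ when all $\varepsilon_i=+1$ since $\binom{n-1}{n}=0$).

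\emph{A numerical recursion.} Next I would specialise the recursion \eqref{polygon_rec} (equivalently \eqref{eq:GSVV_def}) with the conventions $a_0=0$ and $a_{n+1}=1$. Deleting $\varepsilon_i$ leaves the subpolygon $\tilde P_i=P(\varepsilon_1,\dots,\widehat{\varepsilon_i},\dots,\varepsilon_n,1)$ of the same type, appended with the forward factor $\mu(\varepsilon_i;a_{i+1})$ (sign $+$) and the backward factor $\mu(\varepsilon_i;a_{i-1})$ (sign $-$); the left-end bigon at $i=1$ has root $a_0=0$, giving the trivial factor $1$ and hence no contribution. As every such factor lies in $\{0,2\}$, this collapses to
\[
c_n(\varepsilon)=\sum_{i=1}^{n}\sigma_i\,c_{n-1}(\varepsilon\setminus\varepsilon_i),\qquad \sigma_i=[\varepsilon_i\neq a_{i+1}]-[\varepsilon_i\neq a_{i-1}],
\]
with $[\cdot]$ the Iverson bracket and the second bracket omitted at $i=1$. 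The base case $n=1$ ($c_1=1$ for $\varepsilon_1=-1$, $c_1=0$ for $\varepsilon_1=+1$) is checked directly.

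\emph{Evaluating by induction.} Let $j=\max\{i:\varepsilon_i=-1\}$ and $a=n-j$, and set $f(m,a):=(-1)^a\binom{m-1}{a}$. The three ranges of $i$ behave as follows. For $i<j$ the trailing $+1$-count of $\varepsilon\setminus\varepsilon_i$ is unchanged ($=a$); writing $\sigma_i=d_i-d_{i-1}$ with $d_i=[\varepsilon_i\neq\varepsilon_{i+1}]$ and $\sigma_1=d_1$, the sum $\sum_{i<j}\sigma_i$ telescopes to $d_{j-1}=[\varepsilon_{j-1}=1]$. For $i>j$ one finds $\sigma_i=-[i=j+1]$, so only $i=j+1$ survives (present when $a\geq1$), removing one trailing $+1$ and lowering the count to $a-1$. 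For $i=j$ one has $\sigma_j=1-[\varepsilon_{j-1}=1]$ and trailing count $a+b$, where $b$ is the length of the run of $+1$'s immediately to the left of position $j$. Combining the $i<j$ and $i=j$ contributions, exactly one of them is active (according to whether $\varepsilon_{j-1}=1$) and contributes the value with trailing count $a$, so their sum is $f(n-1,a)=(-1)^a\binom{n-2}{a}$; the $i=j+1$ term contributes $-f(n-1,a-1)=(-1)^a\binom{n-2}{a-1}$. Pascal's rule then gives
\[
c_n(\varepsilon)=(-1)^a\Bigl(\binom{n-2}{a}+\binom{n-2}{a-1}\Bigr)=(-1)^a\binom{n-1}{a},
\]
with $\binom{n-2}{-1}=0$ absorbing the case $a=0$, which closes the induction.

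\emph{Main obstacle.} The real work lies in the bookkeeping of the middle two steps: determining $\sigma_i$ with the correct boundary conventions (the root $a_{n+1}=1$ at one end and the formal $a_0=0$ killing the left bigon at the other), and precisely tracking how the trailing-$+1$ count of the subsequence changes when each $\varepsilon_i$ is deleted. Once these are settled, the telescoping of $\sum_{i<j}\sigma_i$ and a single use of Pascal's rule finish the argument, and the striking feature that $c_n$ depends only on the position of the last $-1$ falls out automatically.
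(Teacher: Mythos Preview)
Your argument is correct, and it is a genuinely different route from the paper's.

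The paper first applies the H\"older convolution~\eqref{eq:Hoelder_inf} to turn the $\pm1$ labels into $0$'s and $2$'s, then introduces the machinery of \emph{hook-arrow trees} to enumerate the surviving dissections explicitly: the sides labelled $2$ must connect to the root, and each block of $0$'s between two consecutive $2$'s splits into two pieces attached to its neighbours. The resulting dual trees are then simplified iteratively, and the key combinatorial step is the binomial/generating-function identity of Proposition~\ref{genfuncprop}, which collapses each two-level branch into one. Only after this pictorial reduction does the answer $(-1)^a\binom{n-1}{a}\,2^{\otimes n}$ emerge.

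You bypass all of that. You stay with the original $\pm1$ labels, notice immediately that every bigon value is $0$ or $2$, and then feed the polygon recursion~\eqref{polygon_rec} (equivalently~\eqref{eq:GSVV_def}) directly into a scalar recursion for the coefficient $c_n(\varepsilon)$. The induction hypothesis --- that $c_{n-1}$ depends only on the trailing $+1$-count --- lets the $i<j$ sum telescope to $d_{j-1}$, and the case split on $\varepsilon_{j-1}$ neatly makes the $i<j$ and $i=j$ contributions combine to a single $f(n-1,a)$, after which one Pascal step finishes. The boundary cases ($j=1$, $a=0$, and all $\varepsilon_i=+1$ via $\max\emptyset=0$) are all handled by the conventions $\binom{n-2}{-1}=\binom{n-2}{n-1}=0$, as you note.

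In short: your proof is shorter and uses only Pascal's rule on top of tools already established in Section~\ref{rules}, whereas the paper's proof is more structural and builds the hook-arrow/dual-tree picture that it then reuses almost verbatim for Proposition~\ref{secondprop}. Either is a valid proof of the proposition; yours is the more economical one for this statement in isolation.
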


We start by noting that
from the H\"{o}lder convolution~\eqref{eq:Hoelder_inf} with $p = \infty$ it follows that $P(x_{1},...,x_{m},1)$ has the same symbol as the polygon $P(1-x_{m},...,1-x_{1},1)$.
So, without loss of generality, we consider the polygon
$P(\underbrace{0,...,0}_{t_{0}},2,\underbrace{0,...,0}_{t_{1}},2,0, ...,0 ,2,\underbrace{0,...,0}_{t_{m}},1)$
and will find its symbol, remembering to take into account the factor of $(-1)^{n}$. The move from sides labelled $1$ and $-1$ to sides labelled $0$ and $2$ increases the number of dissections that have coefficient $0$. The combinatorics of the dissections of polygons of this type is best captured by a certain type of planar trees, the so-called \emph{Hook-arrow trees}, which is a change of view on the maximal dissections, and hence the symbol, of a polygon.

\begin{comment}
\paragraph{Hook-arrow trees.}
\begin{Definition}
\label{hatdef}
Every full dissection of a polygon uniquely defines a hook-arrow tree, and vice-versa. Firstly, in the middle of each side of dissected polygon we place a vertex. We allow these vertices to inherit the label of the side they sit on. These form the vertices of the hook-arrow tree. We form the edges of the tree by drawing a line between each new vertex that does not cross an arrow from the dissection. Finally we remove the polygon and arrows and are left with a connected tree. The edges are then uniquely given a direction towards the vertex corresponding to the final side of the polygon, which we consider to be a distinguished vertex. 

A full dissection on an hook-arrow is therefore represented by a planted, non-crossing, directed, spanning tree on a set of vertices arranged in a circle. Taking all possible trees of this form on a set of $n$ vertices will represent all possible maximal dissections of an $n$-gon.
\end{Definition}
\end{comment}

%We illustrate this definition in the following example.

\paragraph{Hook-arrow trees.}
Every full dissection of a polygon uniquely defines a certain spanning tree $\tau$ on the vertices which are the midpoints of the polygon sides, and vice-versa. These vertices, $v_{1},...,v_{n}$, inherit the label of the side they sit on and they form the vertices of $\tau$. We induce the edges of $\tau$ as all possible lines, between the $v_{i}$, that do not cross arrows from the dissection. Here is an example of a full dissection of a 4-gon with the spanning tree induced:

\begin{center}
\begin{tikzpicture}[inner sep=2pt,scale = 1.5, dot/.style={fill=black,circle,minimum size=1pt}]
\path(2,1) node[circle,draw,fill=black,minimum size = 8pt] (Q1){};
\path(2,-1) coordinate (Q2);
\path(4,-1) coordinate (Q3);
\path(4,1) coordinate(Q4);
%Square Lines
\begin{scope}
\draw [double distance = 4pt] (Q4)--(Q1)node[pos=0.33](thirdq41){};
\end{scope}
\draw(Q1)-- (Q2)--(Q3)node[pos=0.33](thirdq23){}--(Q4);
%labels
\draw node[dot,label = left:$1$] (ql1) at (2,0) {};
\draw node[dot,label=below:$2$] (ql2) at (3,-1) {};
\draw node[dot,label=right:$3$] (ql3) at (4,0) {};
\draw node[dot, label = above:$4$] (ql4) at (3,1) {};
%Arrows
\draw[->,line width = 1pt] (Q3) to (ql4);
\draw[->,line width = 1pt] (Q1) to (ql2);
\draw[-] (ql2) to (ql4);
\draw[-] (ql1) to (ql2);
\draw[-] (ql4) to (ql3);
\end{tikzpicture} 
\end{center}

We also induce a root on $\tau$ as the vertex lying on the final side of the polygon. The edges are then oriented towards the root. For the above example of a dissected 4-gon the rooted spanning tree is:

 \begin{center}
\begin{tikzpicture}[inner sep=2pt,scale = 1.5, dot/.style={fill=black,circle,minimum size=1pt}]

\draw node[dot,label = left:$1$] (rl1) at (5,0) {};
\draw node[dot,label=below:$2$] (rl2) at (6,-1) {};
\draw node[dot,label=right:$3$] (rl3) at (7,0) {};
\draw node[fill=black,circle, minimum size = 10pt, label = above:$4$] (rl4) at (6,1) {};
\draw node[fill=white,circle, minimum size = 8pt] (rl4) at (6,1) {};
\draw node[dot] (rl4a) at (6,1) {};

\midarrow{rl2}{rl4};
\midarrow{rl1}{rl2};
\midarrow{rl3}{rl4};

\end{tikzpicture} 
\end{center}

The edges of the tree will not cross by construction; we define interlacing to reflect this for use in the definition of a hook-arrow tree.

\begin{Definition}
A tree with  a linear order on its vertices $w_j$ is said to be {\bf  interlaced } if there exists a choice of four vertices $ w_{1}<...<w_{4}$ such that both edges $\{w_{1},w_{3}\}$ and $\{w_{2},w_{4}\}$ are contained in the tree.
\end{Definition}

We now give a formal definition of a hook-arrow tree and illustrate the definition with an example.
\begin{Definition}
\label{hatdef}
A {\bf hook-arrow tree} is a rooted spanning tree on a set of vertices in a linear order, $v_{1}<...<v_{n}$, which is not interlaced and has root $v_{n}$.  
\end{Definition}

We can think of a hook-arrow tree as being a tree embedded in the plane on the vertices arranged in a circle.

%\begin{Example}
% \label{octexample}
\newpage\noindent 
{\bf Example.}
For the polygon P(2,0,2,0,0,2,0,1) (attached to the multiple polylogarithm $G(0,2,0,0,2,0,2;1)$) we have the following possible maximal dissection
\small
\begin{center}
\octexampletwo{0.5}
\end{center}
\normalsize 
We note that, for clarity, in this example we give the arrows in the dissection of a polygon dashed shafts. 

We now construct the hook-arrow tree for this dissection, following the method outlined in Definition \ref{hatdef} (see also fig.~\ref{fig:hook}):

\begin{enumerate}
\item Add vertices to the middle of each side of the polygon;
\item Join all vertices that can be connected without crossing the shaft of an arrow;
\item Remove the polygon and arrows and direct tree towards distinguished vertex representing final side of polygon.
\end{enumerate}

\begin{center}
\begin{figure}[!h]
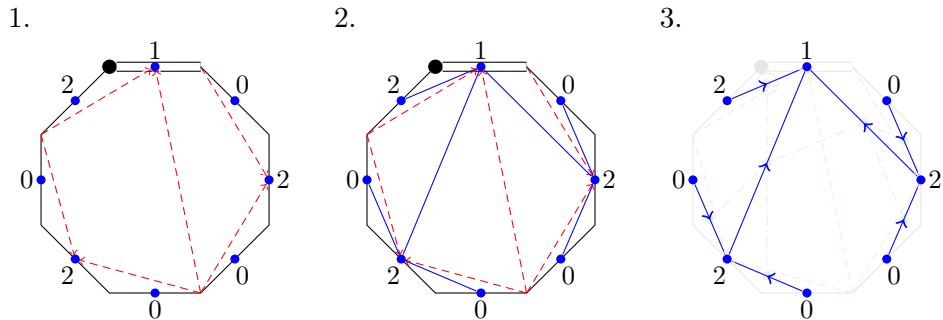

\begin{center}
\begin{tabular}{lll}
1.  & 2. & 3. \\
\small
\octexamplethree{0.5}{100}{0}
\normalsize 
&
\small
\octexamplethree{0.5}{100}{100}
\normalsize 
&
\small
 \octexample{10}{100}{10}{10}{0.5}
\normalsize 
\end{tabular}
\caption{\label{fig:hook}Construction of a hook arrow tree corresponding to a maximal dissection. Details are given in the text.}
\end{center}
\end{figure}
\end{center}

We now reintroduce the dual tree view of a dissection from section 3.1 of ref.~\cite{GGL:2009} as this is also beneficial in finding the symbol attached to a polygon. As with the dissection of a polygon using arrows, the dual tree can easily be seen in the hook-arrow tree view. For clarity we give the dual tree a dash-dotted line.

\begin{center}
\small
\begin{tabular}{ccc}
 \octexample{10}{100}{100}{10}{0.5} &\qquad & \octexamplefour{100}{10}{100}{100}{0.5 } \\[10pt]
Hook-arrow tree and dual tree && Polygon dissection and dual tree \\[10pt]
\end{tabular}
\normalsize 
\end{center}

%\end{Example}

We now elaborate on the interpretation of a hook-arrow tree.

The association between a $2$-gon in a polygon dissection, an edge of a hook-arrow tree and a term of a tensor product in the symbol is

\begin{center}
\begin{tikzpicture}[inner sep=2pt,scale = 1, dot/.style={fill=black,circle,minimum size=1pt}]
\draw node[label=below:$2$-gon] at (1,1){};
\draw node[label=below:Directed edge] at (4,1){};
\draw node[label=below:Term represented] at (7.5,1){};

\draw node[label=above:$b$] at (1,0){};
\draw node[label=below:$a$] at (1,-1){};
\path(0,0) coordinate (P1);
\path(2,0) coordinate (P2);
\draw (0,0) .. controls (0.6,-1.4) and (1.4,-1.4)  .. (2,0); 
\begin{scope}
\draw [double distance = 4pt] (P1)--(P2){};
\end{scope}
\draw node[dot] (A1) at (0,0){};

\draw node (A11) at (2.5,-0.5){};
\draw node (A12) at (3.5,-0.5){};
\draw[<->,line width=1pt] (A11) to (A12);

\draw node[dot,label=above:$b$](a) at (4,0){};
\draw node[dot,label=below:$a$](b) at (4,-1){};
\midarrow{b}{a};

\draw node (A21) at (4.5,-0.5){};
\draw node (A22) at (5.5,-0.5){};
\draw[<->,line width=1pt] (A21) to (A22);

\draw node[label=below:$1- \displaystyle\frac{b}{a}. $] at (7.5,-0.2){};
\draw node at (7.3,-0.5){};
\end{tikzpicture}
\end{center} 

In the dissection of a polygon, we set the coefficient of certain terms in the symbol to $0$ if the sides of $2$-gons have certain combinations of labels, e.g., $\mmu{\twogon{{0}}{{a}}} =0$. We do the same in a hook-arrow tree when we have the corresponding edges
$$\begin{tikzpicture}[inner sep=2pt,scale = 1.5, dot/.style={fill=black,circle,minimum size=1pt}]
\draw node[dot,label=above:$1$](a) at (1,0){};
\draw node[dot,label=below:$0$](b) at (1,-1){};
\midarrow{b}{a};

\draw node[dot,label=above:$0$](a) at (0,0){};
\draw node[dot,label=below:$a$](b) at (0,-1){};
\midarrow{b}{a};

\draw node[dot,label=above:$a$](a) at (-1,0){};
\draw node[dot,label=below:$a$](b) at (-1,-1){};
\midarrow{b}{a};
\end{tikzpicture}$$
for any $a$. 

To obtain the term in the symbol, the edges of the hook-arrow tree are chosen in the same order as the corresponding $2-$gons in the dissection of a polygon. This can also be seen by viewing the dual tree (this correspondence is shown in the above example). Note that an explicit algorithm for this purely from the view of the hook-arrow tree can be written.

The sign of a maximal dissection is determined, in a similar way to the polygons by first determining the number, $\alpha$, of `backward' edges in the hook-arrow tree. These correspond to edges that, respecting their direction, go from one vertex in the order of the vertices defined by the polygon, to a previous vertex. The sign is then $(-1)^{\alpha}$.

\begin{Remark}
As previously mentioned, it is important to note that for every full dissection of a polygon there is a unique corresponding hook-arrow tree and that the method of extracting the symbol from hook-arrow trees is simply a different view of the procedure for polygons. We include an overview of the construction of an hook-arrow tree because constructing all possible dissections for the functions required in Proposition \ref{mainprop} is a lot easier to view from this perspective. The actual proof then takes place in the dual tree view.
\end{Remark}
Finally, before the proof of Proposition \ref{mainprop} we require the following proposition for which we give a sketch proof involving generating functions.

\begin{Proposition}
\label{genfuncprop}
If $c,n \in \mathbb{Z}^{\ge 0}$ then
$$\displaystyle\sum_{i=0}^{n} (-1)^{i}\binom{n-i+c}{n-i}\binom{n+c+1}{i} = (-1)^{n}$$
\end{Proposition}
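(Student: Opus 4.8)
The plan is to prove the identity by a single generating-function manipulation: I would read each binomial coefficient as a Taylor coefficient of an explicit power series and then recognise the entire sum as one coefficient of a product of two such series.

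First I would expand the negative-power binomial series
\[
(1-x)^{-(c+1)} = \sum_{k\ge 0}\binom{k+c}{k}\,x^k\,,
\]
so that $\binom{n-i+c}{n-i}$ is precisely the coefficient of $x^{n-i}$ in $(1-x)^{-(c+1)}$. Simultaneously, the (finite) binomial theorem gives
\[
(1-x)^{n+c+1} = \sum_{i\ge 0}(-1)^i\binom{n+c+1}{i}\,x^i\,,
\]
so that $(-1)^i\binom{n+c+1}{i}$ is the coefficient of $x^i$ in $(1-x)^{n+c+1}$.

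The key step is then to observe that the left-hand side of the proposition is exactly the coefficient of $x^n$ in the Cauchy product of these two series. Indeed, the coefficient of $x^n$ in a product $A(x)B(x)$ is $\sum_{i}\big([x^i]B\big)\big([x^{n-i}]A\big)$, and with $A(x)=(1-x)^{-(c+1)}$ and $B(x)=(1-x)^{n+c+1}$ this convolution reproduces $\sum_i(-1)^i\binom{n+c+1}{i}\binom{n-i+c}{n-i}$. Since both factors involve only non-negative powers of $x$, the convolution contributing to $x^n$ runs over $i=0,\dots,n$, matching the range in the statement. The two exponents then cancel, and the sum equals
\[
[x^n]\,\Big((1-x)^{-(c+1)}\,(1-x)^{n+c+1}\Big) = [x^n]\,(1-x)^n = (-1)^n\,,
\]
which is the desired result.

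I do not expect a serious obstacle here, since the whole identity collapses once the exponents cancel to leave the elementary polynomial $(1-x)^n$. The only points requiring a little care are bookkeeping ones: verifying that the upper summation limit $i=n$ (rather than $i=n+c+1$) is the correct one for the coefficient extraction---which holds because $(1-x)^{-(c+1)}$ contributes nothing to negative powers of $x$, so all terms with $i>n$ vanish---and noting that the argument is valid for every $c,n\in\mathbb{Z}^{\ge 0}$ because all the manipulations are formal identities of (terminating, in the case of $B$) power series. An induction on $n$, or a Vandermonde-type rearrangement, would furnish an alternative but strictly more laborious route, so I would favour the generating-function proof.
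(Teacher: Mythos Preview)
Your proof is correct, and although both you and the paper invoke generating functions, the routes are genuinely different. The paper treats the identity globally: it introduces the series $\Phi(x)=\sum_{n\ge0}x^n\sum_{r=0}^n(-1)^{n-r}\binom{r+c}{r}\binom{n+c+1}{n-r}$, extends the inner sum, interchanges summations, and then uses the standard identities $\sum_m\binom{m}{k}x^m=x^k/(1-x)^{k+1}$ and $\sum_m\binom{m+k}{k}x^m=1/(1-x)^{k+1}$ to collapse $\Phi(x)$ to $1/(1+x)$, reading off $(-1)^n$ as the $n$th coefficient. You instead work at fixed $n$: interpreting the two binomials as coefficients of $(1-x)^{-(c+1)}$ and $(1-x)^{n+c+1}$ makes the sum the $x^n$-coefficient of their product, and the exponents cancel to leave $(1-x)^n$ outright. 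Your argument is shorter and avoids the reordering step entirely; the paper's approach, by contrast, packages all $n$ simultaneously but pays for this with a more delicate manipulation, since the inner binomial $\binom{n+c+1}{n-r}$ depends on the outer index $n$. Your observation that the summation range $0\le i\le n$ is automatically the correct one (because $(1-x)^{-(c+1)}$ has no negative-degree terms) is exactly the point needed to justify the convolution, and the argument goes through for all $c,n\ge0$ as stated.
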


\begin{proof} (Sketch)

Let $r=n-i$ and view the right hand side of the identity as coefficients of the generating function
$$\Phi(x) = \displaystyle\sum_{n=0}^{\infty} x^{n} \sum_{r=0}^{n} (-1)^{n-r}\binom{r+c}{r}\binom{n+c+1}{n-r}.$$
Firstly since for $r>n$ we have $\binom{n+c +1}{n-r} = 0$, we can change the summation of $r$ to run over all positive integers. Then by re-ordering the summation signs for small $x$, and applying some basic properties of binomial coefficients we can arrive at
$$\Phi(x) = \displaystyle \sum_{r=0}^{\infty} \frac{(-1)^{r-c-1}}{x^{c+1}} \binom{r+c}{r} \sum_{n=0}^{\infty} \binom{n+c+1}{r+c+1}(-x)^{n+c+1}.$$
Then by using 
$$ \displaystyle\sum_{m}\binom{m}{k} x^{m} = \frac{x^{k}}{(1-x)^{k+1}} \quad \textrm{and then} \quad \displaystyle\sum_{m} \binom{m+k}{k}x^{m} = \frac{1}{(1-x)^{k+1}}$$
we can show that 
$$\Phi(x) = \frac{1}{1+x} = \displaystyle\sum_{n=0}^{\infty}(-x)^{n}$$
which proves the result.
\end{proof}

\begin{proof} (of Proposition \ref{mainprop})
After applying the H\"{o}lder convolution with $p=\infty$, and without loss of generality, we attempt to find all hook-arrow trees relating to the polygon
$$P(\underbrace{0,...,0}_{t_{0}},2,\underbrace{0,...,0}_{t_{1}},2,0,\quad ...\quad,0 ,2,\underbrace{0,...,0}_{t_{m}},1)$$
which do not represent terms with coefficient $0$ in the symbol. After some consideration we see that these hook-arrow trees must take the following form.

\begin{center}
\begin{tikzpicture}[inner sep=2pt,scale = 3, dot/.style={fill=black,circle,minimum size=1pt}]
\draw node[dot,label = left:$2$] (Px) at (-1,0.4) {};
\draw node[dot,label=below:$2$] (Py) at (-0.4,-1) {};
\draw node[dot,label=right:$2$] (Pz) at (1,0.4) {};
\draw node[dot, label = above:$1$] (P1) at (0,1) {};
\draw node[ label = above:$\cdot$] (dots) at (0.6,-0.75) {};
\draw node[ label = above:$\cdot$] (dots) at (0.7,-0.65) {};
\draw node[ label = above:$\cdot$] (dots) at (0.5,-0.85) {};
\midarrow{Px}{P1};
\midarrow{Py}{P1};
\midarrow{Pz}{P1};
%\draw [-] (0.1,0.6) to (0.35,-0.4) {};
%\draw [-] (0.15,0.625) to (0.45,-0.35) {};
%\draw [-] (0.2,0.65) to (0.55,-0.3) {};
\zerogroup{-50+180}{5}{Px}{$t_{0}$};
\zerogroup{10+180}{5}{Px}{$t_{1,1}$};
\zerogroup{240-20}{5}{Py}{$t_{1,2}$};
\zerogroup{-60-20}{5}{Py}{$t_{2,1}$};
\zerogroup{-10}{5}{Pz}{$\quad \quad t_{m-1,2}$};
\zerogroup{50}{5}{Pz}{$t_{m}$};
\end{tikzpicture}
\end{center}

Each $t_{i,1}$ and $t_{i,2}$, for $i=1,...,m-1$ are chosen integers $0\le t_{i,1},t_{i,2} \le t_i$ such that $t_{i,1}+t_{i,2} = t_{i}$. The choice of the $t_{i,j}$ arises from the fact that we can choose where to partition each group of $t_{i}$ vertices labelled $0$, for $i=1,...,m-1$, and attach them to the vertices labelled $2$, remembering that the vertices must not cross. In the case of the function $P(2,0,2,0,0,2,0,1)$ from the example above, where $m=3, t_{0}=0, t_{1}=1,t_{2}=2$ and $t_{3}=1$, we would have $6$ possible valid dissections, arising from two choices of the $t_{i,j}$ in $t_{1,1}+t_{1,2}=1$ and three choices from $t_{2,1} +t_{2,2} = 2$. We note that the example above explored the particular dissection where $t_{1,1}=0,t_{1,2}=1,t_{2,1}=1$ and $t_{2,2}=1$.

We will now show how it is possible to simplify this tree by effectively removing the edges joining vertices labelled $2$ and $1$ and replacing them with edges connecting vertices labelled $0$ and $2$. For this we return to the dual tree notation described above, and in section~\ref{nutshell}. The dual tree of the above hook-arrow tree above is

\small
\begin{center}
\begin{tikzpicture}[inner sep=2pt,scale = 1, dot/.style={fill=black,circle,minimum size=1pt}]
\draw node[dot,] (H1) at (0,0) {};
\draw node[draw,circle, minimum size = 10pt,label = above:$\frac{1}{2}$] (C1) at (0,0) {};
\dualpart{0}{0}{$t_{0}$}{$t_{1,1}$}{1};
\draw node[dot,label = above:$\frac{1}{2}$] (H2) at (3,-1.5) {};
\draw [-] (H1) to (H2) {};
\dualpart{3}{-1.5}{$t_{1,2}$}{$t_{2,1}$}{1};
\draw node[dot,label = above:$\frac{1}{2}$] (H3) at (6,-3) {};
\draw [-] (H2) to (H3) {};
\dualpart{6}{-3}{$t_{2,2}$}{$t_{3,1}$}{1};
\draw node[dot,label = above:$\frac{1}{2}$] (H4) at (9,-4.5) {};
\dualpart{9}{-4.5}{$t_{m-2,2}$}{$t_{m-1,1}$}{1};
\draw [dashed] (H3) to (H4) {};
\draw node[dot,label = above:$\frac{1}{2}$] (H5) at (12,-6) {};
\draw [-] (H4) to (H5) {};
\dualpart{12}{-6}{$t_{m-1,2}$}{$t_{m}$}{1};

\draw node (W0) at (-2.3,-6 -1.2) {\begin{tabular}{c}where we \\ define \end{tabular}};
\draw node (W1) at (0.8,-6 -1.2) {to be};

\draw node[dot] (G0) at (2,-6) {};
\draw node[dot,label = right:$\alpha$] (G1) at (2,-6-0.5) {};
\draw node[dot,label = right:$\alpha$] (G2) at (2,-6-1) {};
\draw node[dot,label = right:$\alpha$] (G3) at (2,-6-1.8) {};
\draw node[dot,label = right:$\alpha$] (G4) at (2,-6-2.3) {};
\draw[(-)] (2.5,-6-0.4) to (2.5,-6-2.4) {};
\draw node (H5) at (2.8,-6-1.4) {n};
\draw [dashed] (G2) to (G3) {};
\draw [-] (G0) to (G1) {};
\draw [-] (G1) to (G2) {};
\draw [-] (G3) to (G4) {};

\draw node[dot] (K0) at (-0.5,-6) {};
\draw node[circle,draw] (K1) at (-0.5,-6-1.1) {n};
\draw node[dot,label = below:$\alpha$] (K2) at (-0.5,-6-2.3) {};
\draw [-] (K0) to (K1) {};
\draw [-] (K1) to (K2) {};

\end{tikzpicture}
\end{center}

\normalsize

We claim that 

\begin{center}
\begin{tikzpicture}[inner sep=2pt,scale = 1, dot/.style={fill=black,circle,minimum size=1pt}]
\draw node (H0) at (-1,1) {};
\draw node[dot,label = above:$\frac{1}{2}$] (H1) at (0,0) {};
\draw [dashed] (H0) to (H1) {};
\dualpart{0}{0}{$t_{k-1,2}$}{$t_{k,1}$}{1};
\draw node[dot,label = above:$\frac{1}{2}$] (H2) at (2,-2) {};
\draw [-] (H1) to (H2) {};
\dualpart{2}{-2}{$t_{k,2}$}{$c$}{1};

\draw node (W0) at (6,-2) {\begin{tabular}{c}can be simplified to \\[10pt] $(-1)^{t_{k+1}+1}$\\[10pt] times the tree\end{tabular}};

\draw node (I0) at (11,1) {};
\draw node[dot,label = above:$\frac{1}{2}$] (I1) at (12,0) {};
\draw [dashed] (I0) to (I1) {};
\dualpart{12}{0}{$t_{k-1,2}$}{{$c+t_{k}+1$}}{1.5};
\end{tikzpicture}
\end{center}

We will now write the tensor product of the symbol of the left hand dual tree part in the above claim. 
Let us recall out convention~\eqref{eq:shuffle_binding} that shuffles takes precedence over a tensor
$$a^{\ot b} = \underbrace{a \ot ... \ot a}_{b} \quad \textrm{and that} \quad a^{\ot b} \sha c= (\underbrace{a \ot ... \ot a}_{b}) \sha c.$$
The left hand dual tree part in the claim will have the symbol
\begin{align*}
& \displaystyle\sum_{t_{k,1}=0}^{t_{k}} (-1)^{t_{k,1}}\left( \frac{1}{2} \ot 2^{\ot t_{k-1,2}} \sha 2^{\ot t_{k,1}} \sha \left( \frac{1}{2} \ot 2^{\ot t_{k,2}} \sha 2^{\ot c} \right)\right)\\
=&- \displaystyle\sum_{t_{m-1,1} = 0}^{t_{m-1}} (-1)^{t_{m-1,1}} \binom{t_{m-1} - t_{m-1,1} +t_{m}}{t_{m-1}-t_{m-1,1}} \binom{t_{m-1}+t_{m} + 1}{t_{m-1,1}} \left( \frac{1}{2} \ot 2^{\ot t_{k-1,2}} \sha 2^{\ot (t_{m-1} +t_{m} +1)} \right)\\
=& (-1)^{t_{k+1}+1}\left(\frac{1}{2} \ot 2^{\ot t_{k-1,2}} \sha 2^{\ot (c+t_{k}+1)}\right)
\end{align*}
which is exactly the symbol for the tree on the right side. Note that we used Proposition~\ref{genfuncprop} in the last line of the calculation.

By repeated application of this simplification, starting with $c=t_{m}$ and $k=m-1$, we will arrive at a much simplified tree. By noting that $n-a-1 = \displaystyle\sum_{i=1}^{m} (t_{i}+1)$ and recalling that $t_{0} = a$ we see that this tree is
\begin{center}
\begin{tikzpicture}[inner sep=2pt,scale = 1.4, dot/.style={fill=black,circle,minimum size=1pt}]
\draw node[dot] (I1) at (0,0) {};
\draw node[draw,circle, minimum size = 10pt,label = above:$\frac{1}{2}$] (C1) at (0,0) {};

\dualpart{0}{0}{$a$}{\small$n-a-2$}{1};
\end{tikzpicture}
\end{center}
times a factor of $(-1)^{(n-a-1)}$. This represents the symbol
\begin{align*}
¥& (-1)^{n-a-1}\left(\frac{1}{2} \ot 2^{\ot a} \sha 2^{\ot\left( n-a-2 \right)}\right)\\
=& (-1)^{n-a} \binom{n-1}{a} 2^{\ot n}
\end{align*}
Finally, by including the factor of $(-1)^{n}$ from the application of H\"{o}lder involution we find
$$ \lambda_{a,n} = (-1)^{a}\binom{n-1}{a}.$$
\end{proof}

\begin{Proposition}
\label{secondprop}
The polygon $P(\underbrace{0,...,0}_{m_{1}-1},\varepsilon_1,\ldots,\underbrace{0,...,0}_{m_{k}-1},\varepsilon_k,1)$ for $\varepsilon_{i} = \pm 1$ and at least one of the $m_{i}\neq 1$, has a symbol equal to $0$.
\end{Proposition}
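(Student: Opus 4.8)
The plan is to show that, in contrast with Proposition~\ref{mainprop}, the vanishing here is automatic and needs none of the heavy machinery (neither the H\"older convolution nor the identity of Proposition~\ref{genfuncprop}); those tools are required only to \emph{evaluate} the non-zero symbol in the all-$\pm1$ case, whereas here every single maximal dissection already produces a vanishing elementary tensor. The key observation is that all decorations of $P(\underbrace{0,\ldots,0}_{m_1-1},\varepsilon_1,\ldots,\underbrace{0,\ldots,0}_{m_k-1},\varepsilon_k,1)$, \emph{including the root}, lie in $\{0,1,-1\}$, and that since at least one $m_i\neq1$ there is at least one non-root side decorated by $0$.

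First I would record the elementary fact that any $2$-gon one of whose two decorations equals $0$ is sent by $\mu$ to $0$, to $1$, or to a root of unity. Indeed, with the convention $\mmu{\twogon{x}{y}}=1-y/x$ for $x\neq0$ and $\mmu{\twogon{x}{y}}=y$ for $x=0$, one gets $\mmu{\twogon{0}{c}}=c\in\{0,1,-1\}$ and, for $c\neq0$, $\mmu{\twogon{c}{0}}=1$, while $\mmu{\twogon{0}{0}}=0$. By the rules of the symbol calculus (distributivity, $\log 1=0$, and neglecting torsion), such a factor forces the entire elementary tensor containing it to vanish. This is precisely the point where the hypothesis that the root is the \emph{constant} $1$ (and not a variable) enters: for a variable root $x$ the $2$-gon $\twogon{0}{x}=x$ would survive, which is exactly why ordinary $\textrm{Li}_n$'s, whose polygons also contain zeros, have non-zero symbols.

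Next I would translate this into a statement about every term of \eqref{eq:symb_def}. Using the hook-arrow tree description (Definition~\ref{hatdef}), a maximal dissection is a spanning tree on the $w+1$ vertices sitting at the midpoints of the sides (with $w=m_1+\dots+m_k$ the weight), carrying exactly $w$ edges, and each edge corresponds to a $2$-gon whose two decorations are those of its two endpoint sides. Since a spanning tree on $w+1\geq2$ vertices has no isolated vertex, any vertex carrying the decoration $0$ (at least one exists) is incident to at least one edge, hence occurs as a decoration of at least one $2$-gon in that dissection. By the previous paragraph that $2$-gon contributes a factor $0$, $1$, or a root of unity, so the corresponding elementary tensor is zero. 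As this holds for every maximal dissection and every compatible linear order, the whole sum \eqref{eq:symb_def} vanishes, giving $\cS(P)=0$. Equivalently, one may run the recursion \eqref{eq:GSVV_def}: every non-root index is eventually cut off, producing a factor $\mmu{\twogon{0}{\cdot}}$ whenever that index is $0$, and the base case pairs the last surviving index with the root, giving $\mmu{\twogon{0}{1}}=1$ if that index is $0$.

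The only point that genuinely requires care is the claim that in \emph{every} dissection a $0$-decorated side really appears as a decoration of some $2$-gon; this is exactly the spanning-tree (no isolated vertex) property of hook-arrow trees, so I expect no real obstacle. One harmless bookkeeping remark is that some dissections are degenerate (arrows ending on a $0$-side) and are discarded from the outset; this does not affect the argument, since those terms are precisely among the ones we would anyway send to zero.
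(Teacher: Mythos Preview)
Your proof is correct and follows essentially the same idea as the paper's: show that every hook-arrow tree is forced to contain an edge incident to a ``bad'' vertex whose associated $2$-gon evaluates under $\mu$ to $0$, $1$, or a root of unity, so that every elementary tensor in \eqref{eq:symb_def} vanishes. The only real difference is a harmless one of packaging: the paper first applies the H\"older involution (as it did for Proposition~\ref{mainprop}) to put the polygon in the form $P(\gamma_{\bullet},2,\gamma_{\bullet},2,\ldots,2,\gamma_{\bullet},1)$ with $\gamma\in\{0,1\}$, and then observes that the non-root $1$-vertices---precisely the images of your original $0$'s under $x\mapsto 1-x$---cannot be connected to anything without killing the term. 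You skip the H\"older step and argue directly that a $0$-labelled vertex, which exists since some $m_i\neq1$, kills every dissection via either of its incident edges. This is slightly cleaner: the H\"older detour is needed in Proposition~\ref{mainprop} to organise the surviving dissections, but here, where nothing survives, it buys nothing.
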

\begin{proof} (Sketch)
After applying H\"{o}lder involution~\eqref{eq:Hoelder_inf} with $p=\infty$ we try to find possible hook-arrow trees which do not correspond to terms with coefficient $0$ in the symbol. The vertices of the hook-arrow tree will be labelled corresponding to the sides of the polygon
$$P(\gamma_{1,1},...,\gamma_{t_{0},1},2,\gamma_{1,2},...,\gamma_{t_{1},2},2, ...,2,\gamma_{1,m},...,\gamma_{t_{m},m},1).$$
where all the $\gamma_{i,j}$ are equal to either $0$ or $1$. As in the proof of Proposition~\ref{mainprop}, the vertices labelled $2$ must connect directly to the final $1$ and the vertices labelled $0$ must connect to a $2$. However, there is no way to connect the $1$'s to any other vertex without setting the coefficient of the term to $0$. There is therefore no possible non-trivial dissection, and hence the symbol is zero.
\end{proof}

\section{Some considerations on the implementation of the algorithm}
The algorithmic approach presented in section~\ref{sec:higher_weights} relies on the construction of the sets $\cR^{(k)}$ defined in eq.~\eqref{eq:RTk}. The algorithm then consists in selecting the elements of $\cR$ which satisfy certain factorization properties. Even though this is a mathematically well-defined prescription, implementing this algorithm into a computer program can be hampered by several issues,
\begin{enumerate}
\item The set $\cR$ is infinite, and so we cannot just proceed by `trial and error' to select the elements that have the right factorization properties.
\item Factorization of polynomials is rather slow (at least on most computer algebra systems), leading to serious speed issues.
\end{enumerate}
The first issue can be dealt with by decomposing $\cR$ as
\beq\label{eq:RT_tower}
\cR = \bigcup_{n=0}^\infty \cR_{n}\,,
\eeq
where $\cR_{n}$ is defined as the subset of $\cR$ consisting of those elements $\pm\prod_{i=1}^k\overline{\pi}_i^{n_i}$ such that $|n_1|+\ldots+|n_k|=n$. In order to construct the set $\cR^{(1)}$, which is the basis out of which $\cR^{(k)}$ for arbitrary $k$ is constructed, one can then limit oneself to truncating the tower of sets in eq.~\eqref{eq:RT_tower} to some finite value $N<\infty$. Indeed, in practical applications one does not expect rational functions for large values of the sum $n$ of the exponents\footnote{Empirically, we observe that the set $\cR^{(1)}$ seems to be finite in general.}.

The second item seems to be harder to solve, since it is related to the capabilities of the chosen computer algebra system. It is however possible to circumvent this problem by deriving from the factorization constraints a necessary condition that must be fulfilled and that can be checked in a fast way by a computer. Since in practice most of the elements of $\cR$ will fail this constraint, one can filter out these elements and discard them in an easier way. In the following we discuss the example of $\cR^{(1)}$. The generalization to $\cR^{(k)}$ is immediate.

Let us consider a generic element $R$ in $\cR$. Without loss of generality, we can assume that we can write
\beq
R=s\,{\overline{\pi}_1^{n_1}\ldots \overline{\pi}_\ell^{n_\ell}\over \overline{\pi}_{\ell+1}^{n_{\ell+1}}\ldots \overline{\pi}_k^{n_k}}\,,
\eeq
with $s=\pm1$ and $n_i\in\mathbb{N}$, and $\overline{\pi}_i\in\overline{P}$. Checking if $R\in
\cR^{(1)}$ is equivalent to checking whether $1-R\in\cR$, i.e., whether $1-R$ can be written, up to a sign, as a ratio of elements from the set $\overline{P}$. Writing
\beq
1-R = {\overline{\pi}_{\ell+1}^{n_{\ell+1}}\ldots \overline{\pi}_k^{n_k} - s\, \overline{\pi}_1^{n_1}\ldots \overline{\pi}_\ell^{n_\ell}\over
\overline{\pi}_{\ell+1}^{n_{\ell+1}}\ldots \overline{\pi}_k^{n_k}}\,,
\eeq
it is easy to see that this condition can only be fulfilled if the polynomial in the numerator can be factored into a product of elements in $\overline{P}$. Let us call $\Pi$ this numerator. A necessary condition for $R\in\cR^{(1)}$ is thus that $\Pi$ can be divided by at least one element in $\overline{P}$. We can further simplify this condition by reducing it from a problem of division of polynomials to a problem of division of integers. Indeed, we can choose prime numbers $\{p_1,\ldots,p_m\}$ such that $\overline{\pi}_i(p_1,\ldots,p_m)\neq\overline{\pi}_j(p_1,\ldots,p_m)$, for $i\neq j$. The necessary condition for $R\in\cR^{(1)}$ then reduces to checking that the integer number $\Pi(p_1,\ldots,p_m)$ can be divided by at least one of the numbers $\overline{\pi}_i(p_1,\ldots,p_m)$, which is in general much quicker to test on a computer. Note however that this is a necessary, but not necessarily sufficient, condition for $R\in\cR^{(1)}$.

%%%%%%%%%%%%%%%%%%%%%%%%%%%%%%%%%%%%%
\section{Analytic continuation of the spanning set of functions for harmonic polylogarithms}
\label{app:anal_cont}
\subsection{Analytic representation inside the unit disc}
The analytic expressions for the spanning set $\{\cB_j^{(i)}(x)\}$ introduced in section~\ref{sec:HPLbasis} are valid for $x\in[0,1]$, but the functions might have different analytic representations in different regions of the complex plane. In particular, since the $\cB_i^{(j)}$ functions are real when $x$ is in the range $[0,1]$, Schwarz's reflection principle implies that the elements of the spanning set must satisfy
\beq
\cB_j^{(i)}(x^\ast) = \cB_j^{(i)}(x)^\ast\,,
\eeq
where $x^\ast$ denotes the complex conjugate of the complex number $x$. We checked numerically that the analytic expressions for the elements of the spanning set are valid everywhere inside the unit disc, except for $\cB_4^{(13)}(x)$, where the correct expression for $|x|<1$ is
  \beq\label{eq:Li41mz2}
  \cB_4^{(13)}(x) = \left\{\begin{array}{ll}
  \textrm{Li}_4(1-x^2)\,, & \textrm{if Re}(x) >0 \textrm{ or (\textrm{Re}}(x)=0 \textrm{ and Im}(x)\ge0)\,,\\ 
  \textrm{Li}_4(1-x^2)-{i\pi\over3}\,\sigma(x)\,\ln^3(1-x^2)\,,&\textrm{otherwise}\,,
  \end{array}\right.
  \eeq
  where $\sigma(x) = \textrm{sign}(\textrm{Im}(x))$. In order to understand this structure, 
  let us look at the simpler case of weight two. Then we get
\beq
\textrm{Li}_2(1-x^2) = -\log  (x^2)  \log  \left(1-x^2\right) -\text{Li}_2\left(x^2\right)+\frac{\pi ^2}{6}\,.
\eeq
The first term in this expression contains $\ln(x^2) = 2\ln x$, which is real for $x>0$, but develops an imaginary part for $x<0$, displaying the complicated  branch cut structure of $\textrm{Li}_2(1-x^2)$. A similar reasoning leads to eq.~(\ref{eq:Li41mz2}).

\subsection{Analytic representation outside the unit disc: inversion relations}
We now have analytic representations of the elements of the spanning set everywhere inside the unit disc, and so we can analytically continue them outside the unit disc via inversion relations, i.e., functional equations of the form
\beq
\cB_j^{(i)}(x) = \sum_{k,l}\,c_{ijkl}\,\cB_k^{(l)}\left({1\over x}\right)+\textrm{products of lower weight.}
\eeq
These functional equations can easily be obtained for the whole spanning set. Below we show the explicit inversion formulas for weight one and two. For the the complete list of inversion formulas for higher weights, we refer to appendix~\ref{app:inversion}. Letting $\sigma(x) = \textrm{sign}(\textrm{Im}(x))$, we get, for $|x|>1$, $x$ not real,
\begin{itemize}
\item for weight one:
\beq\bsp
\cB_{1}^{(1)}\left(x\right) &\,= -\cB_{1}^{(1)}\left(\frac{1}{x}\right)\,,\\
\cB_{1}^{(2)}\left(x\right) &\,= -\cB_{1}^{(1)}\left(\frac{1}{x}\right)+\cB_{1}^{(2)}\left(\frac{1}{x}\right)-i \pi  \sigma (x)\,,\\
\cB_{1}^{(3)}\left(x\right) &\,= \cB_{1}^{(3)}\left(\frac{1}{x}\right)-\cB_{1}^{(1)}\left(\frac{1}{x}\right)\,,\\
\esp\eeq
\item for weight two:
\beq\bsp
\cB_{2}^{(1)}\left(x\right) &\,= -i \pi  \sigma (x) \cB_{1}^{(1)}\left(\frac{1}{x}\right)-\frac{1}{2} \cB_{1}^{(1)}\left(\frac{1}{x}\right)^2-\cB_{2}^{(1)}\left(\frac{1}{x}\right)+\frac{\pi ^2}{3}\,,\\
\cB_{2}^{(2)}\left(x\right) &\,= -\frac{1}{2} \cB_{1}^{(1)}\left(\frac{1}{x}\right)^2-\cB_{2}^{(2)}\left(\frac{1}{x}\right)-\frac{\pi ^2}{6}\,,\\
\cB_{2}^{(3)}\left(x\right) &\,= -\log  2  \cB_{1}^{(1)}\left(\frac{1}{x}\right)-\frac{1}{2} \cB_{1}^{(1)}\left(\frac{1}{x}\right)^2+\cB_{1}^{(2)}\left(\frac{1}{x}\right) \cB_{1}^{(1)}\left(\frac{1}{x}\right)+\cB_{2}^{(1)}\left(\frac{1}{x}\right)\\
&\,-\cB_{2}^{(2)}\left(\frac{1}{x}\right)+\cB_{2}^{(3)}\left(\frac{1}{x}\right)-\frac{\pi ^2}{4}\,,\\
\esp\eeq
\end{itemize}
Note that the value of $\sigma$ is ambiguous for real values of $x$. This ambiguity can be resolved by  the `$i\varepsilon$' prescription commonly used in the physics literature. According to this prescription, we need to assign a small imaginary part to real value of $x$, i.e., if $x$ is real, we need to perform the replacement $x\to x+i\varepsilon$, and this replacement fixes at the same time the value of $\sigma$. Note however that some care is needed when applying the inversion formulas to real values of $x$ because Schwarz' reflection principle implies
\beq\bsp
\cB_j^{(i)}(x\pm i\varepsilon) &\,= \sum_{k,l}\,c_{ijkl}\,\cB_k^{(l)}\left({1\over x\pm i\varepsilon}\right)+\textrm{products of lower weight}\\
&\,= \sum_{k,l}\,c_{ijkl}\,\cB_k^{(l)}\left({1\over x}\mp i\varepsilon\right)+\textrm{products of lower weight}\\
&\,= \sum_{k,l}\,c_{ijkl}\,\cB_k^{(l)}\left({1\over x}\pm i\varepsilon\right)^\ast+\textrm{products of lower weight.}
\esp\eeq

If $x$ lies on the unit circle, $|x|=1$, there is an ambiguity whether to use the expression for $\cB_j^{(i)}(x)$ valid inside or outside the unit circle. We checked numerically that the two values agree in all cases, except for $\cB_5^{(14)}(x) = \textrm{Li}_4\left({4x\over (1+x)^2}\right)$. In this case, we find
\beq
\cB_4^{(15)}\left(e^{i\varphi}\right) = \textrm{Li}_4\left({1\over\cos^2{\varphi\over2}}\right)\,,
\eeq
i.e., the argument of $\textrm{Li}_4$ is real and greater than 1 for every $x$ on the unit circle (except for $x=-1$, where the result is divergent), and we have an ambiguity on the imaginary part of $\cB_4^{(15)}\left(e^{i\varphi}\right)$. This ambiguity can be lifted by requiring the function to be continuous in a neighborhood of the unit circle. To study this, let us consider a circle which is infinitesimally close to the unit circle, i.e., we choose $x=(1-\varepsilon)\,e^{i\varphi}$, for some infinitesimal $\varepsilon$. We then find
\beq
{4x\over(1+x)^2} = {1\over\cos^2{\varphi\over2}}\,\left(1+i\varepsilon\tan{\varphi\over2}+\ord(\eps^2)\right)\,,
\eeq
i.e., we see that for $|x|=1$, we have,
\beq
\cB_4^{(15)}\left(x\right) = \textrm{Li}_4\left({4x\over(1+x)^2}+i\sigma\varepsilon\right)\,,
\eeq
with $\sigma(x) = \textrm{sign}(\textrm{Im}(x))$.

\section{Inversion formulas for the spanning set}
In this appendix we present the inversion formulas for the spanning set for weight three and four, valid for $x\in\mathbb{C}^\times$.
\label{app:inversion}
\subsection{Weight three}
\beq\bsp
\cB_{3}^{(1)}\left(x\right) &\,= \frac{1}{2} i \pi  \sigma (x) \cB_{1}^{(1)}\left(\frac{1}{x}\right)^2+\frac{1}{6} \cB_{1}^{(1)}\left(\frac{1}{x}\right)^3-\frac{1}{3} \pi ^2 \cB_{1}^{(1)}\left(\frac{1}{x}\right)+\cB_{3}^{(1)}\left(\frac{1}{x}\right)\,,\\
\cB_{3}^{(2)}\left(x\right) &\,= \frac{1}{6} \cB_{1}^{(1)}\left(\frac{1}{x}\right)^3+\frac{1}{6} \pi ^2 \cB_{1}^{(1)}\left(\frac{1}{x}\right)+\cB_{3}^{(2)}\left(\frac{1}{x}\right)\,,
\esp\eeq

\beq\bsp
\cB_{3}^{(3)}\left(x\right) &\,= \frac{1}{6} \cB_{1}^{(1)}\left(\frac{1}{x}\right)^3-\frac{1}{2} \cB_{1}^{(2)}\left(\frac{1}{x}\right) \cB_{1}^{(1)}\left(\frac{1}{x}\right)^2+\frac{1}{6} \pi ^2 \cB_{1}^{(1)}\left(\frac{1}{x}\right)-\cB_{3}^{(1)}\left(\frac{1}{x}\right)\\
&\,-\cB_{3}^{(3)}\left(\frac{1}{x}\right)+\zeta_3\,,\\
\cB_{3}^{(4)}\left(x\right) &\,= \frac{1}{3} \cB_{1}^{(3)}\left(\frac{1}{x}\right)^3-\frac{1}{2} \cB_{1}^{(1)}\left(\frac{1}{x}\right) \cB_{1}^{(3)}\left(\frac{1}{x}\right)^2-\frac{1}{6} \pi ^2 \cB_{1}^{(3)}\left(\frac{1}{x}\right)-\cB_{3}^{(2)}\left(\frac{1}{x}\right)\\
&\,-\cB_{3}^{(4)}\left(\frac{1}{x}\right)+\zeta_3\,,\\
\cB_{3}^{(5)}\left(x\right) &\,= i \pi  \sigma (x) \log  2  \cB_{1}^{(1)}\left(\frac{1}{x}\right)-i \pi  \sigma (x) \log  2  \cB_{1}^{(3)}\left(\frac{1}{x}\right)+\frac{1}{2} \log ^2 2  \cB_{1}^{(1)}\left(\frac{1}{x}\right)\\
&\,-\frac{1}{2} \log ^2 2  \cB_{1}^{(3)}\left(\frac{1}{x}\right)+\frac{1}{2} \log  2  \cB_{1}^{(1)}\left(\frac{1}{x}\right)^2-\log  2  \cB_{1}^{(3)}\left(\frac{1}{x}\right) \cB_{1}^{(1)}\left(\frac{1}{x}\right)\\
&\,-\frac{1}{2} \log  2  \cB_{1}^{(2)}\left(\frac{1}{x}\right)^2+\log  2  \cB_{1}^{(2)}\left(\frac{1}{x}\right) \cB_{1}^{(3)}\left(\frac{1}{x}\right)+\frac{1}{2} i \pi  \sigma (x) \cB_{1}^{(1)}\left(\frac{1}{x}\right)^2\\
&\,-i \pi  \sigma (x) \cB_{1}^{(3)}\left(\frac{1}{x}\right) \cB_{1}^{(1)}\left(\frac{1}{x}\right)+\frac{1}{2} i \pi  \sigma (x) \cB_{1}^{(3)}\left(\frac{1}{x}\right)^2+\frac{1}{6} \cB_{1}^{(1)}\left(\frac{1}{x}\right)^3\\
&\,-\frac{1}{2} \cB_{1}^{(3)}\left(\frac{1}{x}\right) \cB_{1}^{(1)}\left(\frac{1}{x}\right)^2-\frac{1}{2} \cB_{1}^{(2)}\left(\frac{1}{x}\right)^2 \cB_{1}^{(1)}\left(\frac{1}{x}\right)\\
&\,+\cB_{1}^{(2)}\left(\frac{1}{x}\right) \cB_{1}^{(3)}\left(\frac{1}{x}\right) \cB_{1}^{(1)}\left(\frac{1}{x}\right)-\frac{1}{3} \pi ^2 \cB_{1}^{(1)}\left(\frac{1}{x}\right)+\frac{1}{6} \cB_{1}^{(2)}\left(\frac{1}{x}\right)^3\\
&\,+\frac{1}{6} \cB_{1}^{(3)}\left(\frac{1}{x}\right)^3-\frac{1}{2} \cB_{1}^{(2)}\left(\frac{1}{x}\right) \cB_{1}^{(3)}\left(\frac{1}{x}\right)^2+\frac{1}{6} \pi ^2 \cB_{1}^{(2)}\left(\frac{1}{x}\right)+\frac{1}{6} \pi ^2 \cB_{1}^{(3)}\left(\frac{1}{x}\right)\\
&\,-\cB_{3}^{(7)}\left(\frac{1}{x}\right)-\cB_{3}^{(8)}\left(\frac{1}{x}\right)+\frac{1}{2} i \pi  \sigma (x) \log ^2 2 +\frac{1}{6} \log ^3 2 -\frac{1}{3} \pi ^2 \log  2 +\zeta_3\,,\\
\cB_{3}^{(6)}\left(x\right) &\,= \frac{1}{2} \log ^2 2  \cB_{1}^{(1)}\left(\frac{1}{x}\right)-\frac{1}{2} \log ^2 2  \cB_{1}^{(2)}\left(\frac{1}{x}\right)+\frac{1}{2} \log  2  \cB_{1}^{(1)}\left(\frac{1}{x}\right)^2\\
&\,-\log  2  \cB_{1}^{(2)}\left(\frac{1}{x}\right) \cB_{1}^{(1)}\left(\frac{1}{x}\right)+\frac{1}{2} \log  2  \cB_{1}^{(2)}\left(\frac{1}{x}\right)^2+\frac{1}{6} \cB_{1}^{(1)}\left(\frac{1}{x}\right)^3\\
&\,-\frac{1}{2} \cB_{1}^{(2)}\left(\frac{1}{x}\right) \cB_{1}^{(1)}\left(\frac{1}{x}\right)^2+\frac{1}{2} \cB_{1}^{(2)}\left(\frac{1}{x}\right)^2 \cB_{1}^{(1)}\left(\frac{1}{x}\right)+\frac{1}{6} \pi ^2 \cB_{1}^{(1)}\left(\frac{1}{x}\right)\\
&\,-\frac{1}{6} \cB_{1}^{(2)}\left(\frac{1}{x}\right)^3-\frac{1}{6} \pi ^2 \cB_{1}^{(2)}\left(\frac{1}{x}\right)+\cB_{3}^{(8)}\left(\frac{1}{x}\right)+\frac{1}{6} \log ^3 2 +\frac{1}{6} \pi ^2 \log  2 \,,\\
\cB_{3}^{(7)}\left(x\right) &\,= -\frac{1}{2} \log ^2 2  \cB_{1}^{(3)}\left(\frac{1}{x}\right)-\frac{1}{2} \log ^2 2  \cB_{1}^{(2)}\left(\frac{1}{x}\right)+\log  2  \cB_{1}^{(2)}\left(\frac{1}{x}\right) \cB_{1}^{(3)}\left(\frac{1}{x}\right)\\
&\,+\frac{1}{6} \cB_{1}^{(3)}\left(\frac{1}{x}\right)^3-\frac{1}{2} \cB_{1}^{(2)}\left(\frac{1}{x}\right) \cB_{1}^{(3)}\left(\frac{1}{x}\right)^2+\frac{1}{6} \pi ^2 \cB_{1}^{(3)}\left(\frac{1}{x}\right)-\cB_{3}^{(5)}\left(\frac{1}{x}\right)\\
&\,-\cB_{3}^{(6)}\left(\frac{1}{x}\right)+\frac{1}{3} \log ^3 2 -\frac{1}{6} \pi ^2 \log  2 +\zeta_3\,,
\nonumber
\esp\eeq
\beq\bsp\nonumber
\cB_{3}^{(8)}\left(x\right) &\,= i \pi  \sigma (x) \log  2  \cB_{1}^{(2)}\left(\frac{1}{x}\right)+\frac{1}{2} \log ^2 2  \cB_{1}^{(2)}\left(\frac{1}{x}\right)-\frac{1}{2} \log  2  \cB_{1}^{(2)}\left(\frac{1}{x}\right)^2\\
&\,-\frac{1}{2} i \pi  \sigma (x) \cB_{1}^{(2)}\left(\frac{1}{x}\right)^2+\frac{1}{6} \cB_{1}^{(2)}\left(\frac{1}{x}\right)^3-\frac{1}{3} \pi ^2 \cB_{1}^{(2)}\left(\frac{1}{x}\right)+\cB_{3}^{(6)}\left(\frac{1}{x}\right)\\
&\,-\frac{1}{2} i \pi  \sigma (x) \log ^2 2 -\frac{1}{6} \log ^3 2 +\frac{1}{3} \pi ^2 \log  2 \,.
\esp\eeq
\subsection{Weight four}
\beq\bsp
\cB_{4}^{(1)}\left(x\right) &\,= -\frac{1}{6} i \pi  \sigma (x) \cB_{1}^{(1)}\left(\frac{1}{x}\right)^3-\frac{1}{24} \cB_{1}^{(1)}\left(\frac{1}{x}\right)^4+\frac{1}{6} \pi ^2 \cB_{1}^{(1)}\left(\frac{1}{x}\right)^2\\
&\,-\cB_{4}^{(1)}\left(\frac{1}{x}\right)+\frac{\pi ^4}{45}\,,\\
\cB_{4}^{(2)}\left(x\right) &\,= -\frac{1}{24} \cB_{1}^{(1)}\left(\frac{1}{x}\right)^4-\frac{1}{12} \pi ^2 \cB_{1}^{(1)}\left(\frac{1}{x}\right)^2-\cB_{4}^{(2)}\left(\frac{1}{x}\right)-\frac{7 \pi ^4}{360}\,,\\
\cB_{4}^{(3)}\left(x\right) &\,= -\frac{1}{24} \cB_{1}^{(1)}\left(\frac{1}{x}\right)^4+\frac{1}{6} \cB_{1}^{(2)}\left(\frac{1}{x}\right) \cB_{1}^{(1)}\left(\frac{1}{x}\right)^3-\frac{1}{4} \cB_{1}^{(2)}\left(\frac{1}{x}\right)^2 \cB_{1}^{(1)}\left(\frac{1}{x}\right)^2\\
&\,-\frac{1}{12} \pi ^2 \cB_{1}^{(1)}\left(\frac{1}{x}\right)^2+\frac{1}{6} \cB_{1}^{(2)}\left(\frac{1}{x}\right)^3 \cB_{1}^{(1)}\left(\frac{1}{x}\right)+\frac{1}{6} \pi ^2 \cB_{1}^{(2)}\left(\frac{1}{x}\right) \cB_{1}^{(1)}\left(\frac{1}{x}\right)\\
&\,-\frac{1}{24} \cB_{1}^{(2)}\left(\frac{1}{x}\right)^4-\frac{1}{12} \pi ^2 \cB_{1}^{(2)}\left(\frac{1}{x}\right)^2-\cB_{4}^{(5)}\left(\frac{1}{x}\right)-\frac{7 \pi ^4}{360}\,,\\
\cB_{4}^{(4)}\left(x\right) &\,= \cB_{4}^{(6)}\left(\frac{1}{x}\right)\,,\\
\cB_{4}^{(5)}\left(x\right) &\,= \frac{1}{6} i \pi  \sigma (x) \cB_{1}^{(2)}\left(\frac{1}{x}\right)^3-\frac{1}{24} \cB_{1}^{(2)}\left(\frac{1}{x}\right)^4+\frac{1}{6} \pi ^2 \cB_{1}^{(2)}\left(\frac{1}{x}\right)^2\\
&\,-\cB_{4}^{(3)}\left(\frac{1}{x}\right)+\frac{\pi ^4}{45}\,,\\
\cB_{4}^{(6)}\left(x\right) &\,= \cB_{4}^{(4)}\left(\frac{1}{x}\right)\,,\\
\cB_{4}^{(7)}\left(x\right) &\,= -\frac{1}{2} i \pi  \sigma (x) \log ^2 2  \cB_{1}^{(1)}\left(\frac{1}{x}\right)+\frac{1}{2} i \pi  \sigma (x) \log ^2 2  \cB_{1}^{(3)}\left(\frac{1}{x}\right)\\
&\,-\frac{1}{2} i \pi  \sigma (x) \log  2  \cB_{1}^{(1)}\left(\frac{1}{x}\right)^2+i \pi  \sigma (x) \log  2  \cB_{1}^{(3)}\left(\frac{1}{x}\right) \cB_{1}^{(1)}\left(\frac{1}{x}\right)\\
&\,-\frac{1}{2} i \pi  \sigma (x) \log  2  \cB_{1}^{(3)}\left(\frac{1}{x}\right)^2-\frac{1}{6} \log ^3 2  \cB_{1}^{(1)}\left(\frac{1}{x}\right)+\frac{1}{6} \log ^3 2  \cB_{1}^{(3)}\left(\frac{1}{x}\right)\\
&\,-\frac{1}{4} \log ^2 2  \cB_{1}^{(1)}\left(\frac{1}{x}\right)^2+\frac{1}{2} \log ^2 2  \cB_{1}^{(3)}\left(\frac{1}{x}\right) \cB_{1}^{(1)}\left(\frac{1}{x}\right)-\frac{1}{4} \log ^2 2  \cB_{1}^{(3)}\left(\frac{1}{x}\right)^2\\
&\,-\frac{1}{6} \log  2  \cB_{1}^{(1)}\left(\frac{1}{x}\right)^3+\frac{1}{2} \log  2  \cB_{1}^{(3)}\left(\frac{1}{x}\right) \cB_{1}^{(1)}\left(\frac{1}{x}\right)^2
\esp\eeq
\beq\bsp\nonumber
&\,-\frac{1}{2} \log  2  \cB_{1}^{(3)}\left(\frac{1}{x}\right)^2 \cB_{1}^{(1)}\left(\frac{1}{x}\right)+\frac{1}{3} \pi ^2 \log  2  \cB_{1}^{(1)}\left(\frac{1}{x}\right)+\frac{1}{6} \log  2  \cB_{1}^{(3)}\left(\frac{1}{x}\right)^3\\
&\,-\frac{1}{3} \pi ^2 \log  2  \cB_{1}^{(3)}\left(\frac{1}{x}\right)-\frac{1}{6} i \pi  \sigma (x) \cB_{1}^{(1)}\left(\frac{1}{x}\right)^3+\frac{1}{2} i \pi  \sigma (x) \cB_{1}^{(3)}\left(\frac{1}{x}\right) \cB_{1}^{(1)}\left(\frac{1}{x}\right)^2\\
&\,-\frac{1}{2} i \pi  \sigma (x) \cB_{1}^{(3)}\left(\frac{1}{x}\right)^2 \cB_{1}^{(1)}\left(\frac{1}{x}\right)+\frac{1}{6} i \pi  \sigma (x) \cB_{1}^{(3)}\left(\frac{1}{x}\right)^3-\frac{1}{24} \cB_{1}^{(1)}\left(\frac{1}{x}\right)^4\\
&\,+\frac{1}{6} \cB_{1}^{(3)}\left(\frac{1}{x}\right) \cB_{1}^{(1)}\left(\frac{1}{x}\right)^3-\frac{1}{4} \cB_{1}^{(3)}\left(\frac{1}{x}\right)^2 \cB_{1}^{(1)}\left(\frac{1}{x}\right)^2+\frac{1}{6} \pi ^2 \cB_{1}^{(1)}\left(\frac{1}{x}\right)^2\\
&\,+\frac{1}{6} \cB_{1}^{(3)}\left(\frac{1}{x}\right)^3 \cB_{1}^{(1)}\left(\frac{1}{x}\right)-\frac{1}{3} \pi ^2 \cB_{1}^{(3)}\left(\frac{1}{x}\right) \cB_{1}^{(1)}\left(\frac{1}{x}\right)-\frac{1}{24} \cB_{1}^{(3)}\left(\frac{1}{x}\right)^4\\
&\,+\frac{1}{6} \pi ^2 \cB_{1}^{(3)}\left(\frac{1}{x}\right)^2-\cB_{4}^{(11)}\left(\frac{1}{x}\right)-\frac{1}{6} i \pi  \sigma (x) \log ^3 2 -\frac{1}{24} \log ^4 2 +\frac{1}{6} \pi ^2 \log ^2 2 +\frac{\pi ^4}{45}\,,
\esp\eeq
\beq\bsp\nonumber
\cB_{4}^{(8)}\left(x\right) &\,= -\frac{1}{6} \log ^3 2  \cB_{1}^{(1)}\left(\frac{1}{x}\right)+\frac{1}{6} \log ^3 2  \cB_{1}^{(2)}\left(\frac{1}{x}\right)-\frac{1}{4} \log ^2 2  \cB_{1}^{(1)}\left(\frac{1}{x}\right)^2\\
&\,+\frac{1}{2} \log ^2 2  \cB_{1}^{(2)}\left(\frac{1}{x}\right) \cB_{1}^{(1)}\left(\frac{1}{x}\right)-\frac{1}{4} \log ^2 2  \cB_{1}^{(2)}\left(\frac{1}{x}\right)^2-\frac{1}{6} \log  2  \cB_{1}^{(1)}\left(\frac{1}{x}\right)^3\\
&\,+\frac{1}{2} \log  2  \cB_{1}^{(2)}\left(\frac{1}{x}\right) \cB_{1}^{(1)}\left(\frac{1}{x}\right)^2-\frac{1}{2} \log  2  \cB_{1}^{(2)}\left(\frac{1}{x}\right)^2 \cB_{1}^{(1)}\left(\frac{1}{x}\right)\\
&\,-\frac{1}{6} \pi ^2 \log  2  \cB_{1}^{(1)}\left(\frac{1}{x}\right)+\frac{1}{6} \log  2  \cB_{1}^{(2)}\left(\frac{1}{x}\right)^3+\frac{1}{6} \pi ^2 \log  2  \cB_{1}^{(2)}\left(\frac{1}{x}\right)\\
&\,-\frac{1}{24} \cB_{1}^{(1)}\left(\frac{1}{x}\right)^4+\frac{1}{6} \cB_{1}^{(2)}\left(\frac{1}{x}\right) \cB_{1}^{(1)}\left(\frac{1}{x}\right)^3-\frac{1}{4} \cB_{1}^{(2)}\left(\frac{1}{x}\right)^2 \cB_{1}^{(1)}\left(\frac{1}{x}\right)^2\\
&\,-\frac{1}{12} \pi ^2 \cB_{1}^{(1)}\left(\frac{1}{x}\right)^2+\frac{1}{6} \cB_{1}^{(2)}\left(\frac{1}{x}\right)^3 \cB_{1}^{(1)}\left(\frac{1}{x}\right)+\frac{1}{6} \pi ^2 \cB_{1}^{(2)}\left(\frac{1}{x}\right) \cB_{1}^{(1)}\left(\frac{1}{x}\right)\\
&\,-\frac{1}{24} \cB_{1}^{(2)}\left(\frac{1}{x}\right)^4-\frac{1}{12} \pi ^2 \cB_{1}^{(2)}\left(\frac{1}{x}\right)^2-\cB_{4}^{(12)}\left(\frac{1}{x}\right)-\frac{1}{24} \log ^4 2\\
&\, -\frac{1}{12} \pi ^2 \log ^2 2 -\frac{7 \pi ^4}{360}\,,\\
\cB_{4}^{(9)}\left(x\right) &\,= \cB_{4}^{(10)}\left(\frac{1}{x}\right)\,,\\
\cB_{4}^{(10)}\left(x\right) &\,= \cB_{4}^{(9)}\left(\frac{1}{x}\right)\,,\\
\cB_{4}^{(11)}\left(x\right) &\,= -\frac{1}{2} i \pi  \sigma (x) \log ^2 2  \cB_{1}^{(3)}\left(\frac{1}{x}\right)+\frac{1}{2} i \pi  \sigma (x) \log  2  \cB_{1}^{(3)}\left(\frac{1}{x}\right)^2+\frac{1}{6} \log ^3 2  \cB_{1}^{(3)}\left(\frac{1}{x}\right)\\
&\,-\frac{1}{4} \log ^2 2  \cB_{1}^{(3)}\left(\frac{1}{x}\right)^2+\frac{1}{6} \log  2  \cB_{1}^{(3)}\left(\frac{1}{x}\right)^3-\frac{1}{3} \pi ^2 \log  2  \cB_{1}^{(3)}\left(\frac{1}{x}\right)\\
&\,-\frac{1}{6} i \pi  \sigma (x) \cB_{1}^{(3)}\left(\frac{1}{x}\right)^3-\frac{1}{24} \cB_{1}^{(3)}\left(\frac{1}{x}\right)^4+\frac{1}{6} \pi ^2 \cB_{1}^{(3)}\left(\frac{1}{x}\right)^2-\cB_{4}^{(7)}\left(\frac{1}{x}\right)\\
&\,+\frac{1}{6} i \pi  \sigma (x) \log ^3 2 -\frac{1}{24} \log ^4 2 +\frac{1}{6} \pi ^2 \log ^2 2 +\frac{\pi ^4}{45}\,,
\esp\eeq
\beq\bsp\nonumber
\cB_{4}^{(12)}\left(x\right) &\,= \frac{1}{2} i \pi  \sigma (x) \log ^2 2  \cB_{1}^{(2)}\left(\frac{1}{x}\right)-\frac{1}{2} i \pi  \sigma (x) \log  2  \cB_{1}^{(2)}\left(\frac{1}{x}\right)^2+\frac{1}{6} \log ^3 2  \cB_{1}^{(2)}\left(\frac{1}{x}\right)\\
&\,-\frac{1}{4} \log ^2 2  \cB_{1}^{(2)}\left(\frac{1}{x}\right)^2+\frac{1}{6} \log  2  \cB_{1}^{(2)}\left(\frac{1}{x}\right)^3-\frac{1}{3} \pi ^2 \log  2  \cB_{1}^{(2)}\left(\frac{1}{x}\right)\\
&\,+\frac{1}{6} i \pi  \sigma (x) \cB_{1}^{(2)}\left(\frac{1}{x}\right)^3-\frac{1}{24} \cB_{1}^{(2)}\left(\frac{1}{x}\right)^4+\frac{1}{6} \pi ^2 \cB_{1}^{(2)}\left(\frac{1}{x}\right)^2-\cB_{4}^{(8)}\left(\frac{1}{x}\right)\\
&\,-\frac{1}{6} i \pi  \sigma (x) \log ^3 2 -\frac{1}{24} \log ^4 2 +\frac{1}{6} \pi ^2 \log ^2 2 +\frac{\pi ^4}{45}\,,\\
\cB_{4}^{(13)}\left(x\right) &\,= -\frac{2}{3} \cB_{1}^{(1)}\left(\frac{1}{x}\right)^4+\frac{4}{3} \cB_{1}^{(2)}\left(\frac{1}{x}\right) \cB_{1}^{(1)}\left(\frac{1}{x}\right)^3+\frac{4}{3} \cB_{1}^{(3)}\left(\frac{1}{x}\right) \cB_{1}^{(1)}\left(\frac{1}{x}\right)^3\\
&\,-\cB_{1}^{(2)}\left(\frac{1}{x}\right)^2 \cB_{1}^{(1)}\left(\frac{1}{x}\right)^2-\cB_{1}^{(3)}\left(\frac{1}{x}\right)^2 \cB_{1}^{(1)}\left(\frac{1}{x}\right)^2\\
&\,-2 \cB_{1}^{(2)}\left(\frac{1}{x}\right) \cB_{1}^{(3)}\left(\frac{1}{x}\right) \cB_{1}^{(1)}\left(\frac{1}{x}\right)^2-\frac{1}{3} \pi ^2 \cB_{1}^{(1)}\left(\frac{1}{x}\right)^2\\
&\,+\frac{1}{3} \cB_{1}^{(2)}\left(\frac{1}{x}\right)^3 \cB_{1}^{(1)}\left(\frac{1}{x}\right)+\frac{1}{3} \cB_{1}^{(3)}\left(\frac{1}{x}\right)^3 \cB_{1}^{(1)}\left(\frac{1}{x}\right)\\
&\,+\cB_{1}^{(2)}\left(\frac{1}{x}\right) \cB_{1}^{(3)}\left(\frac{1}{x}\right)^2 \cB_{1}^{(1)}\left(\frac{1}{x}\right)+\frac{1}{3} \pi ^2 \cB_{1}^{(2)}\left(\frac{1}{x}\right) \cB_{1}^{(1)}\left(\frac{1}{x}\right)\\
&\,+\cB_{1}^{(2)}\left(\frac{1}{x}\right)^2 \cB_{1}^{(3)}\left(\frac{1}{x}\right) \cB_{1}^{(1)}\left(\frac{1}{x}\right)+\frac{1}{3} \pi ^2 \cB_{1}^{(3)}\left(\frac{1}{x}\right) \cB_{1}^{(1)}\left(\frac{1}{x}\right)\\
&\,-\frac{1}{24} \cB_{1}^{(2)}\left(\frac{1}{x}\right)^4-\frac{1}{24} \cB_{1}^{(3)}\left(\frac{1}{x}\right)^4-\frac{1}{6} \cB_{1}^{(2)}\left(\frac{1}{x}\right) \cB_{1}^{(3)}\left(\frac{1}{x}\right)^3\\
&\,-\frac{1}{12} \pi ^2 \cB_{1}^{(2)}\left(\frac{1}{x}\right)^2-\frac{1}{4} \cB_{1}^{(2)}\left(\frac{1}{x}\right)^2 \cB_{1}^{(3)}\left(\frac{1}{x}\right)^2-\frac{1}{12} \pi ^2 \cB_{1}^{(3)}\left(\frac{1}{x}\right)^2\\
&\,-\frac{1}{6} \cB_{1}^{(2)}\left(\frac{1}{x}\right)^3 \cB_{1}^{(3)}\left(\frac{1}{x}\right)-\frac{1}{6} \pi ^2 \cB_{1}^{(2)}\left(\frac{1}{x}\right) \cB_{1}^{(3)}\left(\frac{1}{x}\right)\\
&\,-\cB_{4}^{(14)}\left(\frac{1}{x}\right)-\frac{7 \pi ^4}{360}\,,\\
\cB_{4}^{(14)}\left(x\right) &\,= \frac{1}{6} i \pi  \sigma (x) \cB_{1}^{(2)}\left(\frac{1}{x}\right)^3+\frac{1}{2} i \pi  \sigma (x) \cB_{1}^{(3)}\left(\frac{1}{x}\right) \cB_{1}^{(2)}\left(\frac{1}{x}\right)^2\\
&\,+\frac{1}{2} i \pi  \sigma (x) \cB_{1}^{(3)}\left(\frac{1}{x}\right)^2 \cB_{1}^{(2)}\left(\frac{1}{x}\right)+\frac{1}{6} i \pi  \sigma (x) \cB_{1}^{(3)}\left(\frac{1}{x}\right)^3\\
&\,-\frac{1}{24} \cB_{1}^{(2)}\left(\frac{1}{x}\right)^4-\frac{1}{6} \cB_{1}^{(3)}\left(\frac{1}{x}\right) \cB_{1}^{(2)}\left(\frac{1}{x}\right)^3-\frac{1}{4} \cB_{1}^{(3)}\left(\frac{1}{x}\right)^2 \cB_{1}^{(2)}\left(\frac{1}{x}\right)^2\\
&\,+\frac{1}{6} \pi ^2 \cB_{1}^{(2)}\left(\frac{1}{x}\right)^2-\frac{1}{6} \cB_{1}^{(3)}\left(\frac{1}{x}\right)^3 \cB_{1}^{(2)}\left(\frac{1}{x}\right)+\frac{1}{3} \pi ^2 \cB_{1}^{(3)}\left(\frac{1}{x}\right) \cB_{1}^{(2)}\left(\frac{1}{x}\right)\\
&\,-\frac{1}{24} \cB_{1}^{(3)}\left(\frac{1}{x}\right)^4+\frac{1}{6} \pi ^2 \cB_{1}^{(3)}\left(\frac{1}{x}\right)^2-\cB_{4}^{(13)}\left(\frac{1}{x}\right)+\frac{\pi ^4}{45}\,,
\esp\eeq
\beq\bsp\nonumber
\cB_{4}^{(15)}\left(x\right) &\,= 4 i \pi  \sigma (x) \log ^2 2  \cB_{1}^{(1)}\left(\frac{1}{x}\right)-8 i \pi  \sigma (x) \log ^2 2  \cB_{1}^{(3)}\left(\frac{1}{x}\right)\\
&\,+2 i \pi  \sigma (x) \log  2  \cB_{1}^{(1)}\left(\frac{1}{x}\right)^2-8 i \pi  \sigma (x) \log  2  \cB_{1}^{(3)}\left(\frac{1}{x}\right) \cB_{1}^{(1)}\left(\frac{1}{x}\right)\\
&\,+8 i \pi  \sigma (x) \log  2  \cB_{1}^{(3)}\left(\frac{1}{x}\right)^2+\frac{1}{3} i \pi  \sigma (x) \cB_{1}^{(1)}\left(\frac{1}{x}\right)^3\\
&\,-2 i \pi  \sigma (x) \cB_{1}^{(3)}\left(\frac{1}{x}\right) \cB_{1}^{(1)}\left(\frac{1}{x}\right)^2+4 i \pi  \sigma (x) \cB_{1}^{(3)}\left(\frac{1}{x}\right)^2 \cB_{1}^{(1)}\left(\frac{1}{x}\right)\\
&\,-\frac{8}{3} i \pi  \sigma (x) \cB_{1}^{(3)}\left(\frac{1}{x}\right)^3+\cB_{4}^{(15)}\left(\frac{1}{x}\right)+\frac{8}{3} i \pi  \sigma (x) \log ^3 2 \,,\\
\cB_{4}^{(16)}\left(x\right) &\,= \frac{1}{12} i \pi ^3 \sigma (x) \cB_{1}^{(1)}\left(\frac{1}{x}\right)-\frac{3}{2} \zeta_3 \cB_{1}^{(1)}\left(\frac{1}{x}\right)+\frac{1}{24} \cB_{1}^{(1)}\left(\frac{1}{x}\right)^4\\
&\,+\frac{1}{2} \cB_{2}^{(1)}\left(\frac{1}{x}\right) \cB_{1}^{(1)}\left(\frac{1}{x}\right)^2+\frac{1}{12} \pi ^2 \cB_{1}^{(1)}\left(\frac{1}{x}\right)^2-2 \cB_{3}^{(1)}\left(\frac{1}{x}\right) \cB_{1}^{(1)}\left(\frac{1}{x}\right)\\
&\,+\frac{1}{6} \pi ^2 \cB_{2}^{(1)}\left(\frac{1}{x}\right)+3 \cB_{4}^{(1)}\left(\frac{1}{x}\right)+\cB_{4}^{(2)}\left(\frac{1}{x}\right)+\cB_{4}^{(16)}\left(\frac{1}{x}\right)-\frac{37 \pi ^4}{720}\,,\\
\cB_{4}^{(17)}\left(x\right) &\,= -\frac{19}{24} \cB_{1}^{(3)}\left(\frac{1}{x}\right)^4+\frac{7}{6} \cB_{1}^{(1)}\left(\frac{1}{x}\right) \cB_{1}^{(3)}\left(\frac{1}{x}\right)^3+\frac{5}{2} \log  2  \cB_{1}^{(3)}\left(\frac{1}{x}\right)^3\\
&\,-\frac{9}{4} \log ^2 2  \cB_{1}^{(3)}\left(\frac{1}{x}\right)^2-\cB_{1}^{(1)}\left(\frac{1}{x}\right) \cB_{1}^{(2)}\left(\frac{1}{x}\right) \cB_{1}^{(3)}\left(\frac{1}{x}\right)^2\\
&\,-\frac{1}{2} \cB_{2}^{(1)}\left(\frac{1}{x}\right) \cB_{1}^{(3)}\left(\frac{1}{x}\right)^2+\frac{1}{2} \cB_{2}^{(2)}\left(\frac{1}{x}\right) \cB_{1}^{(3)}\left(\frac{1}{x}\right)^2\\
&\,-\cB_{2}^{(3)}\left(\frac{1}{x}\right) \cB_{1}^{(3)}\left(\frac{1}{x}\right)^2-\cB_{1}^{(1)}\left(\frac{1}{x}\right) \log  2  \cB_{1}^{(3)}\left(\frac{1}{x}\right)^2\\
&\,-\cB_{1}^{(2)}\left(\frac{1}{x}\right) \log  2  \cB_{1}^{(3)}\left(\frac{1}{x}\right)^2+\frac{19}{24} \pi ^2 \cB_{1}^{(3)}\left(\frac{1}{x}\right)^2-\frac{1}{3} \cB_{1}^{(2)}\left(\frac{1}{x}\right)^3 \cB_{1}^{(3)}\left(\frac{1}{x}\right)\\
&\,+\frac{5}{6} \log ^3 2  \cB_{1}^{(3)}\left(\frac{1}{x}\right)+\cB_{1}^{(1)}\left(\frac{1}{x}\right) \cB_{1}^{(2)}\left(\frac{1}{x}\right)^2 \cB_{1}^{(3)}\left(\frac{1}{x}\right)\\
&\,+\frac{3}{2} \cB_{1}^{(1)}\left(\frac{1}{x}\right) \log ^2 2  \cB_{1}^{(3)}\left(\frac{1}{x}\right)-\frac{1}{4} \pi ^2 \cB_{1}^{(1)}\left(\frac{1}{x}\right) \cB_{1}^{(3)}\left(\frac{1}{x}\right)\\
&\,-\frac{1}{6} \pi ^2 \cB_{1}^{(2)}\left(\frac{1}{x}\right) \cB_{1}^{(3)}\left(\frac{1}{x}\right)+\cB_{1}^{(1)}\left(\frac{1}{x}\right) \cB_{2}^{(3)}\left(\frac{1}{x}\right) \cB_{1}^{(3)}\left(\frac{1}{x}\right)\\
&\,+2 \cB_{3}^{(1)}\left(\frac{1}{x}\right) \cB_{1}^{(3)}\left(\frac{1}{x}\right)-2 \cB_{3}^{(5)}\left(\frac{1}{x}\right) \cB_{1}^{(3)}\left(\frac{1}{x}\right)+2 \cB_{3}^{(7)}\left(\frac{1}{x}\right) \cB_{1}^{(3)}\left(\frac{1}{x}\right)\\
&\,+2 \cB_{3}^{(8)}\left(\frac{1}{x}\right) \cB_{1}^{(3)}\left(\frac{1}{x}\right)+\cB_{1}^{(2)}\left(\frac{1}{x}\right)^2 \log  2  \cB_{1}^{(3)}\left(\frac{1}{x}\right)\\
&\,-\cB_{1}^{(1)}\left(\frac{1}{x}\right) \cB_{1}^{(2)}\left(\frac{1}{x}\right) \log  2  \cB_{1}^{(3)}\left(\frac{1}{x}\right)-\frac{2}{3} \pi ^2 \log  2  \cB_{1}^{(3)}\left(\frac{1}{x}\right)\\
&\,+\frac{1}{2} i \pi  \log ^2 2  \sigma (x) \cB_{1}^{(3)}\left(\frac{1}{x}\right)-\frac{1}{12} i \pi ^3 \sigma (x) \cB_{1}^{(3)}\left(\frac{1}{x}\right)+3 \zeta_3 \cB_{1}^{(3)}\left(\frac{1}{x}\right)
\esp\eeq
\beq\bsp\nonumber
&\,+\frac{1}{8} \log ^4 2 -\frac{1}{3} \cB_{1}^{(1)}\left(\frac{1}{x}\right) \log ^3 2 +\frac{1}{2} \cB_{2}^{(2)}\left(\frac{1}{x}\right)^2+\frac{1}{12} \pi ^2 \log ^2 2\\
&\, +\frac{1}{6} \pi ^2 \cB_{2}^{(3)}\left(\frac{1}{x}\right)+2 \cB_{1}^{(1)}\left(\frac{1}{x}\right) \cB_{3}^{(5)}\left(\frac{1}{x}\right)-\frac{3}{2} \cB_{4}^{(1)}\left(\frac{1}{x}\right)-\frac{1}{2} \cB_{4}^{(2)}\left(\frac{1}{x}\right)\\
&\,-2 \cB_{4}^{(4)}\left(\frac{1}{x}\right)-2 \cB_{4}^{(6)}\left(\frac{1}{x}\right)-3 \cB_{4}^{(7)}\left(\frac{1}{x}\right)+3 \cB_{4}^{(11)}\left(\frac{1}{x}\right)-\frac{3}{4} \cB_{4}^{(15)}\left(\frac{1}{x}\right)\\
&\,-\cB_{4}^{(16)}\left(\frac{1}{x}\right)-\cB_{4}^{(17)}\left(\frac{1}{x}\right)+\frac{1}{6} \pi ^2 \cB_{1}^{(1)}\left(\frac{1}{x}\right) \log  2 -\frac{1}{6} \pi ^2 \cB_{1}^{(2)}\left(\frac{1}{x}\right) \log  2\\
&\, -\frac{1}{2} i \pi  \log ^3 2  \sigma (x)+\frac{1}{12} i \pi ^3 \log  2  \sigma (x)-\frac{7}{4} \cB_{1}^{(1)}\left(\frac{1}{x}\right) \zeta_3+6 \text{Li}_4\left(\frac{1}{2}\right)-\frac{\pi ^4}{160}\,,\\
\cB_{4}^{(18)}\left(x\right) &\,= \frac{35}{96} \cB_{1}^{(2)}\left(\frac{1}{x}\right)^4-\frac{5}{12} \cB_{1}^{(1)}\left(\frac{1}{x}\right) \cB_{1}^{(2)}\left(\frac{1}{x}\right)^3-\frac{1}{24} \cB_{1}^{(3)}\left(\frac{1}{x}\right) \cB_{1}^{(2)}\left(\frac{1}{x}\right)^3\\
&\,-\frac{1}{2} \log  2  \cB_{1}^{(2)}\left(\frac{1}{x}\right)^3-\frac{1}{6} i \pi  \sigma (x) \cB_{1}^{(2)}\left(\frac{1}{x}\right)^3-\frac{1}{16} \cB_{1}^{(3)}\left(\frac{1}{x}\right)^2 \cB_{1}^{(2)}\left(\frac{1}{x}\right)^2\\
&\,+\frac{1}{4} \log ^2 2  \cB_{1}^{(2)}\left(\frac{1}{x}\right)^2+\frac{1}{4} \cB_{1}^{(1)}\left(\frac{1}{x}\right) \cB_{1}^{(3)}\left(\frac{1}{x}\right) \cB_{1}^{(2)}\left(\frac{1}{x}\right)^2\\
&\,+\frac{1}{2} \cB_{2}^{(1)}\left(\frac{1}{x}\right) \cB_{1}^{(2)}\left(\frac{1}{x}\right)^2-\frac{1}{2} \cB_{2}^{(2)}\left(\frac{1}{x}\right) \cB_{1}^{(2)}\left(\frac{1}{x}\right)^2+\cB_{2}^{(3)}\left(\frac{1}{x}\right) \cB_{1}^{(2)}\left(\frac{1}{x}\right)^2\\
&\,+\frac{1}{48} \pi ^2 \cB_{1}^{(2)}\left(\frac{1}{x}\right)^2-\frac{1}{24} \cB_{1}^{(3)}\left(\frac{1}{x}\right)^3 \cB_{1}^{(2)}\left(\frac{1}{x}\right)-\frac{1}{6} \log ^3 2  \cB_{1}^{(2)}\left(\frac{1}{x}\right)\\
&\,+\frac{1}{4} \cB_{1}^{(1)}\left(\frac{1}{x}\right) \cB_{1}^{(3)}\left(\frac{1}{x}\right)^2 \cB_{1}^{(2)}\left(\frac{1}{x}\right)+\frac{1}{2} \cB_{1}^{(1)}\left(\frac{1}{x}\right) \log ^2 2  \cB_{1}^{(2)}\left(\frac{1}{x}\right)\\
&\,-\frac{1}{12} \pi ^2 \cB_{1}^{(1)}\left(\frac{1}{x}\right) \cB_{1}^{(2)}\left(\frac{1}{x}\right)-\frac{1}{24} \pi ^2 \cB_{1}^{(3)}\left(\frac{1}{x}\right) \cB_{1}^{(2)}\left(\frac{1}{x}\right)\\
&\,-\cB_{1}^{(1)}\left(\frac{1}{x}\right) \cB_{2}^{(3)}\left(\frac{1}{x}\right) \cB_{1}^{(2)}\left(\frac{1}{x}\right)-2 \cB_{3}^{(6)}\left(\frac{1}{x}\right) \cB_{1}^{(2)}\left(\frac{1}{x}\right)\\
&\,-2 \cB_{3}^{(8)}\left(\frac{1}{x}\right) \cB_{1}^{(2)}\left(\frac{1}{x}\right)-i \pi  \cB_{2}^{(3)}\left(\frac{1}{x}\right) \sigma (x) \cB_{1}^{(2)}\left(\frac{1}{x}\right)\\
&\,+\frac{7}{2} \zeta_3 \cB_{1}^{(2)}\left(\frac{1}{x}\right)+\frac{55}{96} \cB_{1}^{(3)}\left(\frac{1}{x}\right)^4+\frac{1}{8} \log ^4 2 -\frac{7}{12} \cB_{1}^{(1)}\left(\frac{1}{x}\right) \cB_{1}^{(3)}\left(\frac{1}{x}\right)^3\\
&\,-\frac{1}{3} \cB_{1}^{(1)}\left(\frac{1}{x}\right) \log ^3 2 -\cB_{1}^{(3)}\left(\frac{1}{x}\right) \log ^3 2 -\frac{29}{48} \pi ^2 \cB_{1}^{(3)}\left(\frac{1}{x}\right)^2+\frac{1}{2} \cB_{2}^{(1)}\left(\frac{1}{x}\right)^2\\
&\,+\frac{3}{2} \cB_{1}^{(3)}\left(\frac{1}{x}\right)^2 \log ^2 2 -\frac{1}{12} \pi ^2 \log ^2 2 -\cB_{2}^{(1)}\left(\frac{1}{x}\right) \cB_{2}^{(2)}\left(\frac{1}{x}\right)-\frac{1}{6} \pi ^2 \cB_{2}^{(3)}\left(\frac{1}{x}\right)\\
&\,-2 \cB_{1}^{(3)}\left(\frac{1}{x}\right) \cB_{3}^{(1)}\left(\frac{1}{x}\right)+2 \cB_{1}^{(1)}\left(\frac{1}{x}\right) \cB_{3}^{(6)}\left(\frac{1}{x}\right)+\frac{3}{2} \cB_{4}^{(1)}\left(\frac{1}{x}\right)+\frac{1}{2} \cB_{4}^{(2)}\left(\frac{1}{x}\right)\\
&\,-2 \cB_{4}^{(3)}\left(\frac{1}{x}\right)+4 \cB_{4}^{(4)}\left(\frac{1}{x}\right)+2 \cB_{4}^{(5)}\left(\frac{1}{x}\right)+4 \cB_{4}^{(6)}\left(\frac{1}{x}\right)+6 \cB_{4}^{(7)}\left(\frac{1}{x}\right)\\
&\,+3 \cB_{4}^{(8)}\left(\frac{1}{x}\right)-6 \cB_{4}^{(11)}\left(\frac{1}{x}\right)-3 \cB_{4}^{(12)}\left(\frac{1}{x}\right)+\frac{1}{4} \cB_{4}^{(13)}\left(\frac{1}{x}\right)-\frac{1}{4} \cB_{4}^{(14)}\left(\frac{1}{x}\right)
\esp\eeq
\beq\bsp\nonumber
&\,+\frac{3}{4} \cB_{4}^{(15)}\left(\frac{1}{x}\right)+\cB_{4}^{(16)}\left(\frac{1}{x}\right)-\cB_{4}^{(18)}\left(\frac{1}{x}\right)-\cB_{1}^{(3)}\left(\frac{1}{x}\right)^3 \log  2\\
&\, +\frac{1}{6} \pi ^2 \cB_{1}^{(1)}\left(\frac{1}{x}\right) \log  2 +\frac{1}{2} \pi ^2 \cB_{1}^{(3)}\left(\frac{1}{x}\right) \log  2 +\frac{1}{6} i \pi  \log ^3 2  \sigma (x)+2 i \pi  \cB_{3}^{(6)}\left(\frac{1}{x}\right) \sigma (x)\\
&\,+\frac{1}{12} i \pi ^3 \log  2  \sigma (x)-\frac{7}{4} i \pi  \zeta_3 \sigma (x)-\frac{7}{4} \cB_{1}^{(1)}\left(\frac{1}{x}\right) \zeta_3-\frac{3}{2} \cB_{1}^{(3)}\left(\frac{1}{x}\right) \zeta_3\\
&\,-6 \text{Li}_4\left(\frac{1}{2}\right)-\frac{37 \pi ^4}{1440}\,.
\esp\eeq

\section{Expression of HPL's in terms of the spanning set}
\label{app:results}
In this appendix we present the results for expressing all HPL's up to weight four in terms of the spanning set $\{\cB_i^{(j)}\}$. We restrict ourselves to giving the expression for a minimal set of HPL's out of which all other cases can be obtained via shuffle relations\footnote{A set of text files containing the expressions for all HPL's up to weight four (for $x\in[0,1]$) in Mathematica is included in the arXiv distribution.}.
\subsection{Results for weight two}

\beq\bsp\nonumber
H(-1,1; x) &\,= \log  2  \log  (1-x) -\log  (1-x)  \log  (1+x) -\frac{1}{2} \log ^2 2 -\text{Li}_2\left(\frac{1-x}{2}\right)+\frac{\pi ^2}{12}\,,\\
H(0,-1; x) &\,= -\text{Li}_2(-x)\,,\\
H(0,1; x) &\,= \text{Li}_2(x)\,.
\esp\eeq

\subsection{Results for weight three}

\beq\bsp\nonumber
H(-1,1,-1; x) &\,= -\text{Li}_2\left(\frac{1-x}{2}\right) \log  (1+x) -\frac{3}{2} \log ^2 2  \log  (1+x) \\
&\,+\log  2  \log ^2 (1+x) -\log  (1-x)  \log ^2 (1+x) +\log  2  \log  (1-x)  \log  (1+x)\\
&\, +\frac{1}{4} \pi ^2 \log  (1+x) +\frac{1}{3} \log ^3 2 -\frac{1}{6} \pi ^2 \log  2 -2 \text{Li}_3\left(\frac{1+x}{2}\right)+\frac{7 \zeta_3}{4}\,,\\
H(-1,1,1; x) &\,= \text{Li}_2\left(\frac{1-x}{2}\right) \log  (1-x) -\frac{1}{2} \log  2  \log ^2 (1-x) +\frac{1}{2} \log ^2 (1-x)  \log  (1+x)\\
&\, +\frac{1}{6} \log ^3 2 -\frac{1}{12} \pi ^2 \log  2 -\text{Li}_3\left(\frac{1-x}{2}\right)+\frac{7 \zeta_3}{8}\,,\\
H(0,-1,-1; x) &\,= -\text{Li}_2(-x) \log  (1+x) +\frac{1}{6} \log ^3 (1+x) -\frac{1}{2} \log  x  \log ^2 (1+x) -\frac{\pi ^2}{6} \log  (1+x)\\
&\, -\text{Li}_3\left(\frac{1}{1+x}\right)+\zeta_3\,,\\
H(0,-1,1; x) &\,= \text{Li}_2(-x) \log  (1-x) -\frac{1}{6} \log ^3 (1-x) -\frac{1}{2} \log ^2 2  \log  (1-x) +\frac{1}{2} \log  2  \log ^2 (1-x)\\
&\, +\frac{1}{2} \log ^2 (1-x)  \log  x -\frac{\pi ^2}{12}  \log  (1-x) +\frac{1}{6} \log ^3 2 -\frac{\pi ^2}{12}  \log  2 -\text{Li}_3\left(\frac{1-x}{2}\right)\\
&\,+\text{Li}_3(1-x)-\text{Li}_3(-x)+\text{Li}_3(x)+\text{Li}_3\left(\frac{2 x}{x-1}\right)-\frac{1}{8}\zeta_3\,,
\esp\eeq
\beq\bsp\nonumber
H(0,0,-1; x) &\,= -\text{Li}_3(-x)\,,\\
H(0,0,1; x) &\,= \text{Li}_3(x)\,,\\
H(0,1,-1; x) &\,= \text{Li}_2(x) \log  (1+x) +\frac{1}{6} \log ^3 (1-x) -\frac{1}{2} \log ^2 2  \log  (1+x) -\frac{1}{2} \log  2  \log ^2 (1-x)\\
&\, -\frac{1}{2} \log  (1-x)  \log ^2 (1+x) -\frac{1}{2} \log ^2 (1-x)  \log  x +\log  2  \log  (1-x)  \log  (1+x)\\
&\, +\frac{\pi ^2}{6} \log  (1-x) +\log  (1-x)  \log  x  \log  (1+x) +\frac{\pi ^2}{12} \log  (1+x) +\frac{1}{6} \log ^3 2\\
&\, -\frac{\pi ^2}{12} \log  2 +\text{Li}_3(-x)-\text{Li}_3(x)-\text{Li}_3\left(\frac{2 x}{x-1}\right)+\text{Li}_3\left(\frac{1}{1+x}\right)\\
&\,-\text{Li}_3\left(\frac{1-x}{1+x}\right)-\text{Li}_3\left(\frac{1+x}{2}\right)+\frac{7}{8}\zeta_3\,,\\
H(0,1,1; x) &\,= -\text{Li}_2(x) \log  (1-x) -\frac{1}{2} \log  x  \log ^2 (1-x) +\frac{\pi ^2}{6}  \log  (1-x) -\text{Li}_3(1-x)+\zeta_3\,.
\esp\eeq

\subsection{Results for weight four}

\beq\bsp\nonumber
H(-1,1,-1,-1; x) &\,= - \frac{1}{2} \text{Li}_2\left(\frac{1-x}{2}\right) \log ^2 (1+x) -2 \text{Li}_3\left(\frac{1+x}{2}\right) \log  (1+x)\\
&\, -\frac{7}{8} \zeta_3 \log  (1+x) -\frac{1}{6} \log ^3 2  \log  (1+x) +\frac{1}{2} \log  2  \log ^3 (1+x)\\
&\, -\frac{1}{2} \log  (1-x)  \log ^3 (1+x) -\frac{1}{2} \log ^2 2  \log ^2 (1+x)\\
&\, +\frac{1}{2} \log  2  \log  (1-x)  \log ^2 (1+x) +\frac{\pi ^2}{12} \log ^2 (1+x) +\frac{\pi ^2}{12} \log  2  \log  (1+x)\\
&\, +3 \text{Li}_4\left(\frac{1+x}{2}\right)-3 \text{Li}_4\left(\frac{1}{2}\right)\,,
\\
H(-1,1,-1,1; x) &\,= \frac{1}{2}  \text{Li}_2\left(\frac{1-x}{2}\right) \log ^2 2 -\text{Li}_2\left(\frac{1-x}{2}\right) \log  2  \log  (1-x)\\
&\, +\text{Li}_2\left(\frac{1-x}{2}\right) \log  (1-x)  \log  (1+x) +2 \text{Li}_3\left(\frac{1+x}{2}\right) \log  (1-x)\\
&\, -2 \zeta_3 \log  (1-x) +\frac{1}{4} \zeta_3 \log  (1+x) +\frac{1}{12} \log ^4 (1+x) -\frac{7}{6} \log ^3 2  \log  (1-x)\\
&\, +\frac{1}{3} \log ^3 2  \log  (1+x) -\frac{1}{3} \log  (1-x)  \log ^3 (1+x) +\log ^2 2  \log ^2 (1-x)\\
&\, -\frac{1}{2} \log ^2 2  \log ^2 (1+x) +\frac{3}{2} \log ^2 2  \log  (1-x)  \log  (1+x)\\
&\, -2 \log  2  \log ^2 (1-x)  \log  (1+x) +\log ^2 (1-x)  \log ^2 (1+x) +\frac{\pi ^2}{6}  \log ^2 (1+x)\\
&\, +\frac{5}{12} \pi ^2 \log  2  \log  (1-x) -\frac{\pi ^2}{6} \log  2  \log  (1+x) -\frac{5}{12} \pi ^2 \log  (1-x)  \log  (1+x)\\
&\, +\frac{1}{8} \log ^4 2 -\frac{\pi ^2}{24} \log ^2 2 +\frac{1}{2}\text{Li}_2\left(\frac{1-x}{2}\right)^2-\frac{\pi ^2}{12}\text{Li}_2\left(\frac{1-x}{2}\right)+2 \text{Li}_4\left(\frac{1-x}{2}\right)\\
&\,+2 \text{Li}_4\left(\frac{x-1}{x+1}\right)-2 \text{Li}_4\left(\frac{1+x}{2}\right)+\frac{11}{480}\pi ^4\,,
\esp\eeq
\beq\bsp\nonumber
H(0,-1,-1,-1; x) &\,= -\frac{1}{2} \text{Li}_2(-x) \log ^2 (1+x) -\text{Li}_3\left(\frac{1}{1+x}\right) \log  (1+x) +\frac{1}{8} \log ^4 (1+x)\\
&\, -\frac{1}{3} \log  x  \log ^3 (1+x) -\frac{\pi ^2}{12}  \log ^2 (1+x) -\text{Li}_4\left(\frac{1}{1+x}\right)+\frac{\pi ^4}{90}\,,
\\
H(0,-1,-1,1; x) &\,= \text{Li}_{2,2}(-1,x)+\text{Li}_{2,2}\left(\frac{1}{2},\frac{2 x}{x+1}\right)+\frac{1}{2} \text{Li}_2\left(\frac{1-x}{2}\right) \log ^2 (1+x)\\
&\, -\frac{1}{2} \text{Li}_2(-x) \log ^2 (1+x) +\frac{1}{2} \text{Li}_2(x) \log ^2 (1+x) +\frac{1}{2} \text{Li}_2(-x) \log ^2 2\\
&\, +\text{Li}_2(-x) \log  (1-x)  \log  (1+x) -2 \text{Li}_3(x) \log  (1+x)\\
&\, -2 \text{Li}_3\left(\frac{2 x}{x-1}\right) \log  (1+x) -2 \text{Li}_3\left(\frac{1-x}{1+x}\right) \log  (1+x)\\
&\, -\text{Li}_2(-x) \log  2  \log  (1-x) +\text{Li}_3\left(\frac{1}{1+x}\right) \log  (1-x) -\frac{5}{4} \zeta_3 \log  (1+x)\\
&\, -\frac{7}{8} \zeta_3 \log  (1-x) +\frac{5}{8} \log ^4 (1+x) -\frac{3}{2} \log  2  \log ^3 (1+x)\\
&\, -\frac{2}{3} \log  (1-x)  \log ^3 (1+x) -\log  x  \log ^3 (1+x) -\frac{1}{2} \log ^3 2  \log  (1+x)\\
&\, +\frac{1}{3} \log ^3 (1-x)  \log  (1+x) +\log ^2 2  \log ^2 (1+x) +\frac{3}{2} \log  2  \log  (1-x)  \log ^2 (1+x)\\
&\, +2 \log  (1-x)  \log  x  \log ^2 (1+x) -\frac{5}{12} \pi ^2 \log ^2 (1+x)\\
&\, -\log  2  \log ^2 (1-x)  \log  (1+x) -\log ^2 (1-x)  \log  x  \log  (1+x)\\
&\, +\frac{\pi ^2}{4}  \log  2  \log  (1+x) +\frac{5}{12} \pi ^2 \log  (1-x)  \log  (1+x) -\frac{1}{4}\text{Li}_4\left(1-x^2\right)\\
&\,-\frac{1}{2}\text{Li}_2(-x)^2+\text{Li}_2\left(\frac{1-x}{2}\right) \text{Li}_2(-x)-\frac{\pi ^2 }{12}\text{Li}_2(-x)+\text{Li}_4(1-x)\\
&\,+\text{Li}_4(-x)+\text{Li}_4(x)+\frac{1}{2}\text{Li}_4\left(\frac{4 x}{(x+1)^2}\right)-\frac{1}{2}\text{Li}_4\left(\frac{1-x}{1+x}\right)+\frac{1}{2}\text{Li}_4\left(\frac{x-1}{x+1}\right)\\
&\,+2 \text{Li}_4\left(\frac{x}{x+1}\right)-2 \text{Li}_4\left(\frac{2 x}{x+1}\right)+3 \text{Li}_4\left(\frac{1+x}{2}\right)-3 \text{Li}_4\left(\frac{1}{2}\right)+\frac{\pi^4}{480}\,,
\\
H(0,-1,1,-1; x) &\,= -\text{Li}_{2,2}(-1,x)-\text{Li}_{2,2}\left(\frac{1}{2},\frac{2 x}{x+1}\right)-\frac{1}{2} \text{Li}_2\left(\frac{1-x}{2}\right) \log ^2 (1+x)\\
&\, +\frac{1}{2} \text{Li}_2(-x) \log ^2 (1+x) -\frac{1}{2} \text{Li}_2(x) \log ^2 (1+x) -\frac{1}{2} \text{Li}_2(-x) \log ^2 2\\
&\, -\text{Li}_3\left(\frac{1-x}{2}\right) \log  (1+x) +\text{Li}_3(1-x) \log  (1+x) -\text{Li}_3(-x) \log  (1+x)\\
&\, +3 \text{Li}_3(x) \log  (1+x) +3 \text{Li}_3\left(\frac{2 x}{x-1}\right) \log  (1+x)\\
&\, +2 \text{Li}_3\left(\frac{1-x}{1+x}\right) \log  (1+x) +\text{Li}_2(-x) \log  2  \log  (1-x) +\frac{17}{8} \zeta_3 \log  (1+x)\\
&\, -\frac{19}{24} \log ^4 (1+x) +2 \log  2  \log ^3 (1+x) +\frac{1}{2} \log  (1-x)  \log ^3 (1+x)\\
&\, +\frac{7}{6} \log  x  \log ^3 (1+x) +\frac{7}{6} \log ^3 2  \log  (1+x) -\frac{1}{2} \log ^3 (1-x)  \log  (1+x)\\
&\, -\frac{7}{4} \log ^2 2  \log ^2 (1+x) -\frac{3}{2} \log  2  \log  (1-x)  \log ^2 (1+x)
\esp\eeq
\beq\bsp\nonumber
&\, -2 \log  (1-x)  \log  x  \log ^2 (1+x) +\frac{17}{24} \pi ^2 \log ^2 (1+x)\\
&\, +\frac{3}{2} \log  2  \log ^2 (1-x)  \log  (1+x) -\frac{1}{2} \log ^2 2  \log  (1-x)  \log  (1+x)\\
&\, +\frac{3}{2} \log ^2 (1-x)  \log  x  \log  (1+x) -\frac{7}{12} \pi ^2 \log  2  \log  (1+x)\\
&\, -\frac{5}{12} \pi ^2 \log  (1-x)  \log  (1+x) +\frac{1}{2}\text{Li}_2(-x)^2-\text{Li}_2\left(\frac{1-x}{2}\right) \text{Li}_2(-x)\\
&\,+\frac{\pi ^2}{12}\text{Li}_2(-x)-\frac{1}{2}\text{Li}_4(-x)-\frac{3}{2}\text{Li}_4(x)-\frac{3}{4}\text{Li}_4\left(\frac{4 x}{(x+1)^2}\right)\\
&\,-\text{Li}_4\left(\frac{1}{1+x}\right)-2 \text{Li}_4\left(\frac{x}{x+1}\right)+3 \text{Li}_4\left(\frac{2 x}{x+1}\right)-6 \text{Li}_4\left(\frac{1+x}{2}\right)\\
&\,+6 \text{Li}_4\left(\frac{1}{2}\right)+\frac{\pi ^4}{90}\,,
\\
H(0,-1,1,1; x) &\,= -\frac{1}{2} \text{Li}_2(-x) \log ^2 (1-x) +\text{Li}_3\left(\frac{1-x}{2}\right) \log  (1-x) -\text{Li}_3(1-x) \log  (1-x)\\
&\, +\text{Li}_3(-x) \log  (1-x) -\text{Li}_3(x) \log  (1-x) -\text{Li}_3\left(\frac{2 x}{x-1}\right) \log  (1-x)\\
&\, +\frac{7}{4} \zeta_3 \log  (1-x) +\frac{19}{96} \log ^4 (1-x) +\frac{23}{96} \log ^4 (1+x) -\frac{1}{3} \log  2  \log ^3 (1-x)\\
&\, -\frac{7}{12} \log  x  \log ^3 (1-x) -\frac{1}{24} \log  (1+x)  \log ^3 (1-x) -\frac{1}{24} \log ^3 (1+x)  \log  (1-x)\\
&\, -\frac{1}{3} \log  2  \log ^3 (1+x) -\frac{1}{4} \log  x  \log ^3 (1+x) -\frac{1}{3} \log ^3 2  \log  (1+x)\\
&\, +\frac{1}{4} \log ^2 2  \log ^2 (1-x) -\frac{1}{16} \log ^2 (1+x)  \log ^2 (1-x) +\frac{1}{4} \log  x  \log  (1+x)  \log ^2 (1-x)\\
&\, +\frac{3}{16} \pi ^2 \log ^2 (1-x) +\frac{1}{4} \log  x  \log ^2 (1+x)  \log  (1-x) +\frac{1}{2} \log ^2 2  \log ^2 (1+x)\\
&\, -\frac{13}{48} \pi ^2 \log ^2 (1+x) -\frac{\pi ^2}{24}  \log  (1+x)  \log  (1-x) +\frac{\pi ^2}{6} \log  2  \log  (1+x)\\
&\, +\frac{1}{4}\text{Li}_4\left(1-x^2\right)-\frac{1}{4}\text{Li}_4\left(\frac{x^2}{x^2-1}\right)-\text{Li}_4\left(\frac{1-x}{2}\right)-\text{Li}_4(1-x)\\
&\,-\frac{1}{2}\text{Li}_4(-x)+\frac{1}{2}\text{Li}_4(x)+2 \text{Li}_4\left(\frac{x}{x-1}\right)-\text{Li}_4\left(\frac{2 x}{x-1}\right)+\frac{1}{4}\text{Li}_4\left(\frac{4 x}{(x+1)^2}\right)\\
&\,+2 \text{Li}_4\left(\frac{1}{1+x}\right)+2 \text{Li}_4\left(\frac{x}{x+1}\right)-2 \text{Li}_4\left(\frac{2 x}{x+1}\right)+2 \text{Li}_4\left(\frac{1+x}{2}\right)\\
&\,-\text{Li}_4\left(\frac{1}{2}\right)-\frac{\pi ^4}{72}\,,
\\
H(0,0,-1,-1; x) &\,= -\text{Li}_3(-x) \log  (1+x) +\zeta_3 \log  (1+x) +\frac{1}{12} \log ^4 (1+x) -\frac{1}{6} \log  x  \log ^3 (1+x)\\
&\, -\frac{\pi ^2}{12} \log ^2 (1+x) +\text{Li}_4(-x)+\text{Li}_4\left(\frac{1}{1+x}\right)+\text{Li}_4\left(\frac{x}{x+1}\right)-\frac{\pi ^4}{90}\,,
\\
H(0,0,-1,1; x) &\,= \text{Li}_3(-x) \log  (1-x) +\frac{3}{4} \zeta_3 \log  (1-x) +\frac{1}{32} \log ^4 (1-x) +\frac{23}{96} \log ^4 (1+x)\\
&\, -\frac{1}{12} \log  x  \log ^3 (1-x) -\frac{1}{24} \log  (1+x)  \log ^3 (1-x) -\frac{1}{24} \log ^3 (1+x)  \log  (1-x)
\esp\eeq
\beq\bsp\nonumber
&\, -\frac{1}{3} \log  2  \log ^3 (1+x) -\frac{1}{4} \log  x  \log ^3 (1+x) -\frac{1}{3} \log ^3 2  \log  (1+x)\\
&\, -\frac{1}{16} \log ^2 (1+x)  \log ^2 (1-x) +\frac{1}{4} \log  x  \log  (1+x)  \log ^2 (1-x) +\frac{\pi ^2}{16} \log ^2 (1-x)
\\
&\, +\frac{1}{4} \log  x  \log ^2 (1+x)  \log  (1-x) +\frac{1}{2} \log ^2 2  \log ^2 (1+x) -\frac{13}{48} \pi ^2 \log ^2 (1+x)\\
&\, -\frac{\pi ^2}{24} \log  (1+x)  \log  (1-x) +\frac{\pi ^2}{6} \log  2  \log  (1+x) +\frac{1}{4}\text{Li}_4\left(1-x^2\right)\\
&\,-\frac{1}{4}\text{Li}_4\left(\frac{x^2}{x^2-1}\right)-\text{Li}_4(1-x)-\frac{3 }{2}\text{Li}_4(-x)+\frac{1}{2}\text{Li}_4(x)+\text{Li}_4\left(\frac{x}{x-1}\right)\\
&\,+\frac{1}{4}\text{Li}_4\left(\frac{4 x}{(x+1)^2}\right)+2 \text{Li}_4\left(\frac{1}{1+x}\right)+2 \text{Li}_4\left(\frac{x}{x+1}\right)-2 \text{Li}_4\left(\frac{2 x}{x+1}\right)\\
&\,+2 \text{Li}_4\left(\frac{1+x}{2}\right)-2 \text{Li}_4\left(\frac{1}{2}\right)-\frac{\pi ^4}{72}\,,
\\
H(0,0,0,-1; x) &\,= -\text{Li}_4(-x)\,,
\\
H(0,0,0,1; x) &\,= \text{Li}_4(x)\,,
\\
H(0,0,1,-1; x) &\,= \text{Li}_3(x) \log  (1+x) +\frac{3}{4} \zeta_3 \log  (1+x) -\frac{1}{6} \log ^4 (1+x) +\frac{1}{3} \log  2  \log ^3 (1+x)\\
&\, +\frac{1}{6} \log  x  \log ^3 (1+x) +\frac{1}{3} \log ^3 2  \log  (1+x) -\frac{1}{2} \log ^2 2  \log ^2 (1+x) +\frac{\pi ^2}{6} \log ^2 (1+x)\\
&\, -\frac{\pi ^2}{6} \log  2  \log  (1+x) +\frac{1}{2}\text{Li}_4(-x)-\frac{3 }{2}\text{Li}_4(x)-\frac{1}{4}\text{Li}_4\left(\frac{4 x}{(x+1)^2}\right)-\text{Li}_4\left(\frac{1}{1+x}\right)\\
&\,-\text{Li}_4\left(\frac{x}{x+1}\right)+2 \text{Li}_4\left(\frac{2 x}{x+1}\right)-2 \text{Li}_4\left(\frac{1+x}{2}\right)+2 \text{Li}_4\left(\frac{1}{2}\right)+\frac{\pi ^4}{90}\,,
\\
H(0,1,-1,-1; x) &\,= \frac{1}{2} \text{Li}_2(x) \log ^2 (1+x) +\text{Li}_3(-x) \log  (1+x) -\text{Li}_3(x) \log  (1+x)\\
&\, -\text{Li}_3\left(\frac{2 x}{x-1}\right) \log  (1+x) +\text{Li}_3\left(\frac{1}{1+x}\right) \log  (1+x) -\text{Li}_3\left(\frac{1-x}{1+x}\right) \log  (1+x)\\
&\, -\text{Li}_3\left(\frac{1+x}{2}\right) \log  (1+x) -\frac{3}{4} \zeta_3 \log  (1+x) +\frac{1}{6} \log ^4 (1+x) -\frac{1}{2} \log  2  \log ^3 (1+x)\\
&\, -\frac{1}{2} \log  (1-x)  \log ^3 (1+x) -\frac{1}{6} \log  x  \log ^3 (1+x) -\frac{1}{3} \log ^3 2  \log  (1+x)\\
&\, +\frac{1}{6} \log ^3 (1-x)  \log  (1+x) +\frac{1}{4} \log ^2 2  \log ^2 (1+x) +\log  2  \log  (1-x)  \log ^2 (1+x)\\
&\, +\log  (1-x)  \log  x  \log ^2 (1+x) -\frac{\pi ^2}{8} \log ^2 (1+x) -\frac{1}{2} \log  2  \log ^2 (1-x)  \log  (1+x)\\
&\, -\frac{1}{2} \log ^2 (1-x)  \log  x  \log  (1+x) +\frac{\pi ^2}{6} \log  2  \log  (1+x) +\frac{\pi ^2}{6} \log  (1-x)  \log  (1+x)\\
&\, -\frac{1}{2}\text{Li}_4(-x)+\frac{1}{2}\text{Li}_4(x)+\frac{1}{4}\text{Li}_4\left(\frac{4 x}{(x+1)^2}\right)+\text{Li}_4\left(\frac{1}{1+x}\right)-\text{Li}_4\left(\frac{2 x}{x+1}\right)\\
&\,+3 \text{Li}_4\left(\frac{1+x}{2}\right)-3 \text{Li}_4\left(\frac{1}{2}\right)-\frac{\pi ^4}{90}\,,
\esp\eeq
\beq\bsp\nonumber
H(0,1,-1,1; x) &\,= -\frac{47}{96} \log ^4 (1-x) +\log  2  \log ^3 (1-x) +\frac{11}{12} \log  x  \log ^3 (1-x)\\
&\, +\frac{1}{24} \log  (1+x)  \log ^3 (1-x) -\frac{1}{4} \log ^2 2  \log ^2 (1-x) +\frac{9}{16} \log ^2 (1+x)  \log ^2 (1-x)\\
&\, -\log  2  \log  (1+x)  \log ^2 (1-x) -\frac{5}{4} \log  x  \log  (1+x)  \log ^2 (1-x)\\
&\, -\frac{1}{2} \text{Li}_2\left(\frac{1-x}{2}\right) \log ^2 (1-x) +\frac{1}{2} \text{Li}_2(-x) \log ^2 (1-x) -\frac{1}{2} \text{Li}_2(x) \log ^2 (1-x)\\
&\, -\frac{17}{48} \pi ^2 \log ^2 (1-x) -\frac{1}{6} \log ^3 2  \log  (1-x) +\frac{1}{24} \log ^3 (1+x)  \log  (1-x)\\
&\, -\frac{1}{4} \log  x  \log ^2 (1+x)  \log  (1-x) +\frac{\pi ^2}{12} \log  2  \log  (1-x)\\
&\, +\frac{1}{2} \log ^2 2  \log  (1+x)  \log  (1-x) -\frac{\pi ^2}{24} \log  (1+x)  \log  (1-x) \\
&\,+\log  2\,  \text{Li}_2(x) \log  (1-x) -\log  (1+x)  \text{Li}_2(x) \log  (1-x) -\text{Li}_3(-x) \log  (1-x)\\
&\, +\text{Li}_3(x) \log  (1-x) +3 \text{Li}_3\left(\frac{2 x}{x-1}\right) \log  (1-x) -\text{Li}_3\left(\frac{1}{1+x}\right) \log  (1-x)\\
&\, +\text{Li}_3\left(\frac{1-x}{1+x}\right) \log  (1-x) +\text{Li}_3\left(\frac{1+x}{2}\right) \log  (1-x) -\frac{29}{8} \zeta_3 \log  (1-x)\\
&\, -\frac{55}{96} \log ^4 (1+x) +\log  2  \log ^3 (1+x) +\frac{7}{12} \log  x  \log ^3 (1+x) -\frac{3}{2} \log ^2 2  \log ^2 (1+x)\\
&\, +\frac{29}{48} \pi ^2 \log ^2 (1+x) -\frac{\text{Li}_2(x)^2}{2}-\text{Li}_{2,2}(-1,x)+\text{Li}_{2,2}\left(\frac{1}{2},\frac{2 x}{x-1}\right)\\
&\,+\log ^3 2  \log  (1+x) -\frac{\pi ^2}{2} \log  2  \log  (1+x) -\frac{1}{2} \log ^2 2\,  \text{Li}_2(x)-\text{Li}_2\left(\frac{1-x}{2}\right) \text{Li}_2(x)\\
&\,+\text{Li}_2(-x) \text{Li}_2(x)+\frac{\pi ^2}{12}\text{Li}_2(x)+2 \log  (1+x)  \text{Li}_3(x)+3 \text{Li}_4(1-x)\\
&\,-\frac{1}{2}\text{Li}_4(-x)-\frac{3}{2} \text{Li}_4(x)-2 \text{Li}_4\left(\frac{x}{x-1}\right)+3 \text{Li}_4\left(\frac{2 x}{x-1}\right)\\
&\,-\frac{3}{4} \text{Li}_4\left(\frac{4 x}{(x+1)^2}\right)-4 \text{Li}_4\left(\frac{1}{1+x}\right)-4 \text{Li}_4\left(\frac{x}{x+1}\right)+6 \text{Li}_4\left(\frac{2 x}{x+1}\right)\\
&\,-6 \text{Li}_4\left(\frac{1+x}{2}\right)-\frac{1}{4}\text{Li}_4\left(1-x^2\right)+\frac{1}{4}\text{Li}_4\left(\frac{x^2}{x^2-1}\right)+\frac{3}{2} \log  (1+x)  \zeta_3\\
&\,+6 \text{Li}_4\left(\frac{1}{2}\right)+\frac{\pi ^4}{72}\,,
\\
H(0,1,0,-1; x) &\,= -\text{Li}_{2,2}(-1,x)\,,
\\
H(0,1,0,1; x) &\,= 2 \text{Li}_3(x) \log  (1-x) -2 \zeta_3 \log  (1-x) -\frac{1}{12} \log ^4 (1-x) +\frac{1}{3} \log  x  \log ^3 (1-x)\\
&\, -\frac{\pi^2}{6} \log ^2 (1-x) +\frac{\text{Li}_2(x)^2}{2}+2 \text{Li}_4(1-x)-2 \text{Li}_4(x)-2 \text{Li}_4\left(\frac{x}{x-1}\right)-\frac{\pi ^4}{45}\,,
\esp\eeq
\beq\bsp\nonumber
H(0,1,1,-1; x) &\,= \text{Li}_{2,2}(-1,x)-\text{Li}_{2,2}\left(\frac{1}{2},\frac{2 x}{x-1}\right)+\frac{1}{2} \text{Li}_2\left(\frac{1-x}{2}\right) \log ^2 (1-x)\\
&\, -\frac{1}{2} \text{Li}_2(-x) \log ^2 (1-x) +\frac{1}{2} \text{Li}_2(x) \log ^2 (1-x) +\frac{1}{2} \text{Li}_2(x) \log ^2 2\\
&\, -\text{Li}_2(x) \log  2  \log  (1-x) -2 \text{Li}_3\left(\frac{2 x}{x-1}\right) \log  (1-x) -\text{Li}_3(1-x) \log  (1+x)\\
&\, -2 \text{Li}_3(x) \log  (1+x) +\frac{7}{4} \zeta_3 \log  (1-x) -\frac{5}{8} \zeta_3 \log  (1+x) +\frac{7}{24} \log ^4 (1-x)\\
&\, +\frac{3}{8} \log ^4 (1+x) -\frac{2}{3} \log  2  \log ^3 (1-x) -\frac{1}{3} \log  x  \log ^3 (1-x) -\frac{1}{6} \log ^3 2  \log  (1-x)\\
&\, -\frac{2}{3} \log  2  \log ^3 (1+x) -\frac{1}{3} \log  x  \log ^3 (1+x) -\frac{2}{3} \log ^3 2  \log  (1+x)\\
&\, +\frac{1}{2} \log ^2 2  \log ^2 (1-x) +\frac{\pi^2}{8} \log ^2 (1-x) +\log ^2 2  \log ^2 (1+x) -\frac{3}{8} \pi ^2 \log ^2 (1+x)\\
&\, +\frac{\pi^2}{12} \log  2  \log  (1-x) +\frac{\pi^2}{3} \log  2  \log  (1+x) +\frac{1}{4}\text{Li}_4\left(1-x^2\right)+\frac{1}{2}\text{Li}_2(x)^2\\
&\,+\text{Li}_2\left(\frac{1-x}{2}\right) \text{Li}_2(x)-\text{Li}_2(-x) \text{Li}_2(x)-\frac{\pi ^2}{12}\text{Li}_2(x)+\text{Li}_4\left(\frac{1-x}{2}\right)\\
&\,-2 \text{Li}_4(1-x)+\text{Li}_4(-x)+\text{Li}_4(x)-2 \text{Li}_4\left(\frac{2 x}{x-1}\right)+\frac{1}{2}\text{Li}_4\left(\frac{4 x}{(x+1)^2}\right)\\
&\,+3 \text{Li}_4\left(\frac{1}{1+x}\right)-\frac{1}{2}\text{Li}_4\left(\frac{1-x}{1+x}\right)+\frac{1}{2}\text{Li}_4\left(\frac{x-1}{x+1}\right)+2 \text{Li}_4\left(\frac{x}{x+1}\right)\\
&\,-4 \text{Li}_4\left(\frac{2 x}{x+1}\right)+4 \text{Li}_4\left(\frac{1+x}{2}\right)-5 \text{Li}_4\left(\frac{1}{2}\right)-\frac{\pi ^4}{288}\,,
\\
H(0,1,1,1; x) &\,= \frac{1}{2} \text{Li}_2(x) \log ^2 (1-x) +\text{Li}_3(1-x) \log  (1-x) +\frac{1}{3} \log  x  \log ^3 (1-x)\\
&\, -\frac{\pi^2}{12} \log ^2 (1-x) -\text{Li}_4(1-x)+\frac{\pi ^4}{90}\,,
\esp\eeq

\end{document}